\newtheorem{theorem}{Theorem}
\newtheorem{corollary}[theorem]{Corollary}
\newtheorem{lemma}[theorem]{Lemma}
\newtheorem{claim}[theorem]{Claim}
\newtheorem{observation}[theorem]{Observation}
\newtheoremstyle{case}{}{}{}{}{}{:}{ }{}
\theoremstyle{case}
\newcommand{\rajiv}[1]{\todo[linecolor=red,backgroundcolor=red!25,bordercolor=teal]{RR: #1}}
\newcommand\AG{{\mathcal{A}}(\Gamma)}
\newcommand{\hide}[1]{}
\title{Sweeping Arrangements of Non-Piercing Regions in the Plane}
\author{
Suryendu Dalal \\ IIIT-Delhi, India. \\ suryendud@iiitd.ac.in \and 
Rahul Gangopadhyay\footnote{Research was mostly conducted while Rahul Gangopadhyay was at MIPT and was supported by a grant for research centers in the field
of artificial intelligence, provided by the Analytical Center for the Government of the Russian Federation in accordance with the subsidy agreement (agreement identifier 000000D730321P5Q0002) and the agreement with the Moscow Institute of Physics and Technology dated November 1, 2021 No. 70-2021-00138.} \\ Department of Computer Science and Engineering,\\ Indian Institute of Technology (ISM) Dhanbad,\\ Dhanbad, 826004, Jharkhand, India
, 
\\ rahulg@iiitd.ac.in\and
Rajiv Raman \\ IIIT-Delhi, India. \\ rajiv@iiitd.ac.in \and 
Saurabh Ray \\ NYU Abu Dhabi, UAE. \\ saurabh.ray@nyu.edu}
\begin{document}
\maketitle
\begin{abstract}
Let $\Gamma$ be an arrangement of Jordan curves in the plane, i.e., simple closed curves in the plane. For any curve $\gamma \in \Gamma$, we denote the bounded region enclosed by $\gamma$ as $\tilde{\gamma}$. 
We say that $\Gamma$ is non-piercing if for any two curves $\alpha , \beta \in \Gamma$, $\tilde{\alpha} \,\setminus\, \tilde{\beta}$ is connected.
A non-piercing arrangement of curves generalizes a set of $2$-intersecting curves
in which each pair of curves intersect in at most two points. 
Snoeyink and Hershberger (``Sweeping Arrangements of Curves'', SoCG '89) 
proved that if we are given an arrangement $\Gamma$ of $2$-intersecting curves and a {\em sweep} curve $\gamma\in{\Gamma}$,  then the arrangement can be \emph{swept} by $\gamma$ while always maintaining the $2$-intersecting property of the curves in $\Gamma$.
We generalize the result of Snoeyink and Hershberger to the setting of non-piercing arrangements. 
Given an arrangement $\Gamma$ of non-piercing curves, a sweep curve $\gamma\in \Gamma$, and a point $P$ in $\tilde{\gamma}$, we show that
we can continuously shrink $\gamma$ to $P$ so that throughout the process,
the arrangement remains non-piercing (except at a finite set
of points in time where $\gamma$ crosses other curves), and $P$ lies in $\tilde{\gamma}$. We show that our arguments can be modified
if $P$ lies outside $\tilde{\gamma}$, and we want to sweep $\gamma$ \emph{outwards} so that $P$ lies outside $\tilde{\gamma}$, and
the arrangement remains non-piercing.
As a second contribution, we give an alternate proof of the result of Snoeyink and Hershberger, 
and give several applications of our results to combinatorial and algorithmic questions including to the \emph{multi-hitting set}
problem involving points and non-piercing regions.
\end{abstract}

\section{Introduction}
\label{sec:intro}
A fundamental algorithmic design technique in Computational Geometry introduced by Bentley and Ottmann~\cite{bentley1979algorithms}
is \emph{line sweep}.
In the original context in which the technique was invented, namely reporting all intersections among $n$
line segments in the plane, the basic idea was to move a vertical line from $x=-\infty$ to $x = +\infty$, stopping at {\em event points} (like end-points of segments or intersections of two segments) where updates are made.
Since then, the line sweep technique has found a wide variety of applications in Computational Geometry including 
polygon triangulation, computing Voronoi diagrams and its dual Delaunay triangulations, and computing the volume of a union of regions to name a few. The method can also be generalized to higher dimensions. See the classic books on Computational Geometry~\cite{de2000computational,edelsbrunner1987algorithms,preparata2012computational} for more
applications. Besides applications in Computational Geometry, it is also a useful tool in combinatorial proofs. 
For a simple example, a one dimensional sweep can be used to show that the clique cover number and maximum independent set size in interval graphs are equal. 
In most applications, the sweep line moves continuously and covers the plane so that each point in the plane is on the sweep line exactly once.
There are also applications where instead of sweeping with a line, we sweep with a closed curve like a circle. For instance we can start with an infinite radius circle and shrink it continuously to its center. 

Edelsbrunner and Guibas~\cite{edelsbrunner1986topologically} introduced the \emph{topological sweep} technique and showed
the advantage of sweeping with a \emph{topological line}, i.e., curve that at any point intersects each segment at most once rather
than a rigid straight line. They obtained better algorithms to report all intersection points between 
lines, which by duality improved results known at that time for many problems on point configurations.
Chazelle and Edelsbrunner~\cite{chazelle1992optimal} were also able to adapt the topological sweep ``in 20 easy pieces'' to report the $k$ intersections of $n$ given lines segments in $O(n\log n +k)$ time.

Thus, we can generalize the sweeping technique to sweep an arrangement of \emph{pseudolines}\footnote{A set $\mathcal{L}$ of bi-infinite curves in the plane is a collection of pseudolines if the curves in $\mathcal{L}$ pairwise intersect at most once.} with a pseudoline. 
We can generalize further, and do topological sweep with a closed curve.
In any arrangement of circles in the plane, any pair of circles intersect at two points, or don't intersect. Keeping this intersection property, while replacing the circles with simple Jordan curves, we obtain  an \emph{arrangement of pseudocircles} in which any pair of
curves either intersect in two points, or don't intersect. 
Note that for any two disks defined by circles or pseudocircles, $D_1$, $D_2$, the difference $D_1\,\setminus\, D_2$, is a path-connected set.

Showing that it is possible to sweep an arrangement of pseudolines with one of the pseudolines, and an arrangement of pseudocircles with one of the pseudocircles is much more challenging than sweeping with a line or a circle. Snoeyink and Hershberger~\cite{SnoeyinkH90}
showed in a celebrated paper that we can sweep an arrangement of two-intersecting curves with any curve in the arrangement. 
The result of Snoeyink and Hershberger has found several applications both in Computational Geometry as well as 
in combinatorial proofs. See for example,
\cite{ackerman2021coloring,agarwal2004lenses,buzaglo2008s,chiu2023coloring,felsner1999triangles,scheucher2020points,keller2026} and references therein. Some more applications appear in Section~\ref{sec:applications} of this paper.

It is natural to attempt to generalize the result of Snoeyink and Hershberger to $k$-intersecting curves for $k>2$, i.e., sweep an arrangement of $k$-intersecting
curves with a sweep curve that maintains the invariant that the arrangement is $k$-intersecting throughout the sweep. Unfortunately,
Snoeyink and Hershberger~\cite{SnoeyinkH90} show that this is not possible for $k\ge 3$, and ask
at the end of their paper what intersection property the sweep satisfies if we sweep an arrangement of $k$-intersecting curves. 
While we do not fully answer their question, we extend their result to show that we can maintain a more general property than 2-intersection while
sweeping. Specifically, we show that if the input curves satisfy a topological condition of being 
\emph{non-piercing}, then we can sweep with a curve so that the arrangement satisfies this property
throughout the sweep.
For a Jordan curve $\gamma$, let $\tilde{\gamma}$ denote the bounded region defined
by $\gamma$. Two Jordan curves $\alpha,\beta$ are said to be \emph{non-piercing} if $\tilde{\alpha}\,\setminus\,\tilde{\beta}$ is
a path connected region, and a set of curves is non-piercing if every pair of curves in the arrangement is non-piercing. The main theorem we prove in this paper is the following. The notion of `sweeping' referred to in the theorem is formally defined in Section~\ref{sec:prelim}. 

\begin{restatable}[Sweeping non-piercing arrangements]{theorem}{np}
\label{thm:mainthm}
Let $\Gamma$ be a finite set of non-piercing curves. 
Given any $\gamma \in \Gamma$ and a point $P \in \tilde\gamma$, we can sweep $\AG$ with $\gamma$ so that throughout the process, $P$ remains within $\tilde{\gamma}$ and 
the curves remain non-piercing  except at a finite number of instants in time.
\end{restatable}

While Theorem~\ref{thm:mainthm} talks about sweeping $\gamma$ inwards into $\tilde{\gamma}$, we can also sweep $\gamma$ in its exterior
by first applying an \emph{inversion}\footnote{\url{https://en.wikipedia.org/wiki/Inversive_geometry}} of the plane, and then applying Theorem~\ref{thm:mainthm} (See the Remark following the proof of Theorem~\ref{thm:mainthm} for a different approach).

We also give an alternate proof of the fundamental result of Snoeyink and Hershberger. The new proof introduces two conceptual tools, namely \emph{minimal lens untangling} and \emph{minimal triangle untangling} that may help in other settings involving pseudodisks or non-piercing regions. We also believe that the proof is conceptually simpler than that of 
Snoeyink and Hershberger, albeit still not very short. 

For many algorithmic applications, especially for several packing and covering problems, the restriction that
an arrangement is non-piercing is not harder than the corresponding problems for pseudodisks. For example, hypergraphs defined
by points and non-piercing regions enjoy a linear \emph{shallow-cell complexity}~\cite{DBLP:journals/dcg/RamanR20}
and therefore, admit $\epsilon$-nets of linear size and $O(1)$-approximation algorithms for covering problems via \emph{quasi-uniform}
sampling~\cite{DBLP:conf/stoc/Varadarajan10,DBLP:conf/soda/ChanGKS12}.
In particular, we believe that non-piercing is a more elementary condition than two-intersecting 
to which known results for pseudodisks can be extended. For example Har-Peled~\cite{DBLP:journals/siamcomp/Har-PeledQ17} and Chan and Grant~\cite{DBLP:journals/comgeo/ChanG14} showed that several covering problems involving simple
geometric regions where the regions are \emph{piercing} (i.e., not non-piercing), are APX-hard. 

The paper is organized as follows:
The notation used in the paper is described in Section~\ref{sec:prelim}. 
Section~\ref{sec:sweeping} describes the sweeping operations and Section~\ref{sec:basicops} describes the basic operations
required in the proof. Section~\ref{sec:nonpiercing} contains the proof of Theorem~\ref{thm:mainthm}, and Section~\ref{sec:snoeyinkthm}
contains a simpler proof of the Theorem of Snoeyink and Hershberger on two-intersecting curves. 
We describe applications of our results in Section~\ref{sec:applications}, and 
we conclude in Section~\ref{sec:conclusion} with some open questions and future directions.

\section{Preliminaries}
\label{sec:prelim}

Let $\mathbb{S}^1$ denote a circle centered at the origin.
A Jordan curve $\gamma$ is a continuous injective map from $\mathbb{S}^1\to\mathbb{R}^2$. 
By the Jordan curve theorem (see~\cite{DBLP:books/daglib/0030489}, Chapter 2), $\gamma$ splits the plane into two parts - a bounded
part (called the \emph{interior} of $\gamma$), denoted $\tilde\gamma$~\footnote{A region $r\subseteq\mathbb{R}^2$ is said to be \emph{simply-connected} if any closed loop can be continuously deformed
to a point inside the region.}), and an unbounded part
called the \emph{exterior} of $\gamma$. We orient $\gamma$ so 
that the bounded region lies to its left, and
we assume that the map
from $\mathbb{S}^1$ to $\gamma$ is such that the origin is mapped to the interior of $\gamma$.

Unless otherwise stated, by a \emph{curve}, we will mean an oriented Jordan curve. 
A \emph{traversal} of a curve is a walk starting at an arbitrary point on the curve,
walking in the direction of its orientation 
and returning to the starting point.

In the following, we denote by $\Gamma$, a finite set of oriented Jordan curves, and we use
$\tilde\Gamma$ to denote the set $\{ \tilde\gamma: \gamma \in \Gamma\}$ of bounded regions defined
by the curves in $\Gamma$. 
We assume throughout this paper that the curves in $\Gamma$ are in \emph{general position}, i.e., no three curves intersect at a point,
any pair of curves intersect in a finite number of points, and they intersect transversely (i.e., they cross) at these points.

\smallskip\noindent{\bf Arrangement.}
The \emph{arrangement} $\AG$ defined by $\Gamma$ is the decomposition of $\mathbb{R}^2$
into relatively open \emph{cells} of dimensions $0,1$ and $2$ induced by $\Gamma$, where each cell is
a maximal connected set of points lying in the intersection of a fixed subset of $\tilde\Gamma$. 
For convenience, we use the phrase `arrangement of $\Gamma$' to refer to $\AG$. The usual term
for 
cells of dimensions $0, 1$ and $2$ are respectively, the vertices, edges, and faces of 
 $\AG$. In this paper we use the usual terminology of \emph{vertices} and \emph{edges} to
 refer to the cells of dimension $0$ and $1$ respectively, but
 we use the term \emph{cell} to denote the faces (cells of dimension 2) in $\AG$. 
 The curves which contribute one or more edges to the boundary of a cell $C$ are said to \emph{define} the cell. If a curve $\alpha$ contributes an edge to the
boundary of a cell $C$, we say that $C$ \emph{lies on} $\alpha$. 

\smallskip\noindent
{\bf Digons, Lenses, and Triangles.}
A {\em digon} is a {\em bounded} cell $D$ in the arrangement of any two curves $\alpha, \beta \in \Gamma$ so that both $\alpha$ and $\beta$ contribute exactly one arc to the boundary of $D$. 
Note that $D$ is not necessarily a cell in $\AG$. 
If a cell in $\AG$ 
is a digon, we call it a {\em digon cell}. 
A digon defined by curves $\alpha$ and $\beta$ is called a {\em lens} if it is contained in both $\tilde{\alpha}$ and $\tilde{\beta}$ and it is called a {\em negative lens} if it is not contained in either $\tilde{\alpha}$ or $\tilde{\beta}$. There is a third possibility, 
namely that a digon is contained in exactly one of $\tilde\alpha$ or $\tilde\beta$, in which case it is called a \emph{lune}.
A {\em triangle} is a {\em bounded} cell $T$ in the arrangement of three curves in $\Gamma$ so that each curve contributes exactly one arc to the boundary of $T$. Note that a triangle may or may not belong to the region bounded by any of curves defining it.
A triangle need not be a cell in $\AG$. If a cell in $\AG$ is a triangle, we call it a {\em triangle cell}. 

\smallskip\noindent
{\bf Sweeping.}
Given an arrangement $\AG$ of oriented Jordan curves, a sweeping curve $\gamma\in\Gamma$, a property $\Pi$ 
satisfied by the curves in $\Gamma$, and a point $P\in\tilde{\gamma}$, a `sweep of $\AG$ by $\gamma$' is a continuous
movement of $\gamma$, keeping the other curves in $\Gamma$ fixed, until the region bounded by $\gamma$ shrinks to the point $P$ so that
\begin{enumerate}[(i)]
\item for any two times $t>t'$, the curve $\gamma$ at time $t$ is contained in the region bounded by $\gamma$ at time $t'$, and 
\item except at a finite set of instances in time,  $\AG$ satisfies the property $\Pi$.
\end{enumerate}

In this paper, we assume that the initial arrangement of curves is in general position and the property $\Pi$ is that the curves are `non-piercing and in general position'. 

\medskip\noindent
{\bf Pseudocircles and Non-piercing regions.}
If the curves in $\Gamma$ intersect pairwise in at most two points, we call 
$\Gamma$ a set of \emph{pseudocircles}, and $\tilde\Gamma$ is called a set of \emph{pseudodisks}. 
Let $\alpha$ and $\beta$ be two curves. 
We say that $\alpha$ and $\beta$ are \emph{non-piercing}
if both $\tilde{\alpha}\,\setminus\,\tilde{\beta}$, and $\tilde{\beta}\,\setminus\,\tilde{\alpha}$ are path-connected. Otherwise, they are said to be \emph{piercing}. 
A set of curves is said to be \emph{non-piercing} if the curves in the set are pairwise non-piercing. 
In this paper, we restrict to a finite set of non-piercing curves in general position.
Figure~\ref{fig:nonpiercing} shows an example of a pair of pseudocircles, a 
pair of non-piercing curves, and a pair of piercing curves.
\begin{figure}[h!]
\begin{center}
\begin{subfigure}[t]{0.31\textwidth}
\begin{center}
\includegraphics[width=0.95\textwidth]{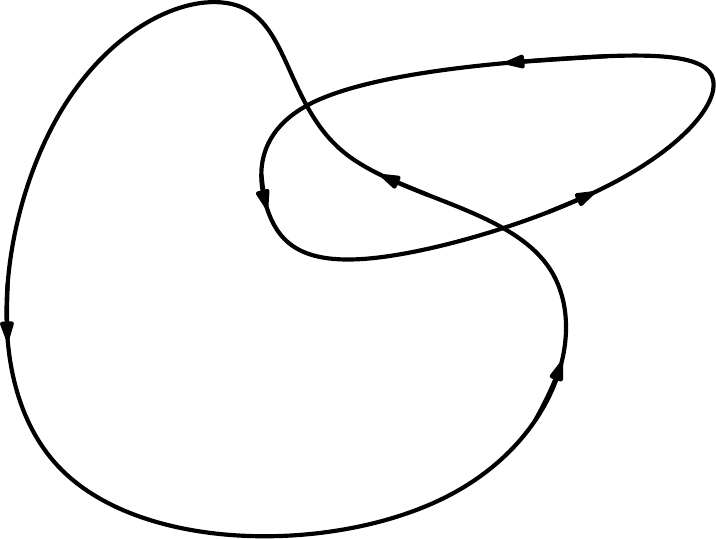}
\subcaption{Pseudocircles}
\end{center}
\end{subfigure}%
\hspace{0.02\textwidth}
\begin{subfigure}[t]{0.3\textwidth}
\begin{center}
\includegraphics[width=0.95\textwidth]{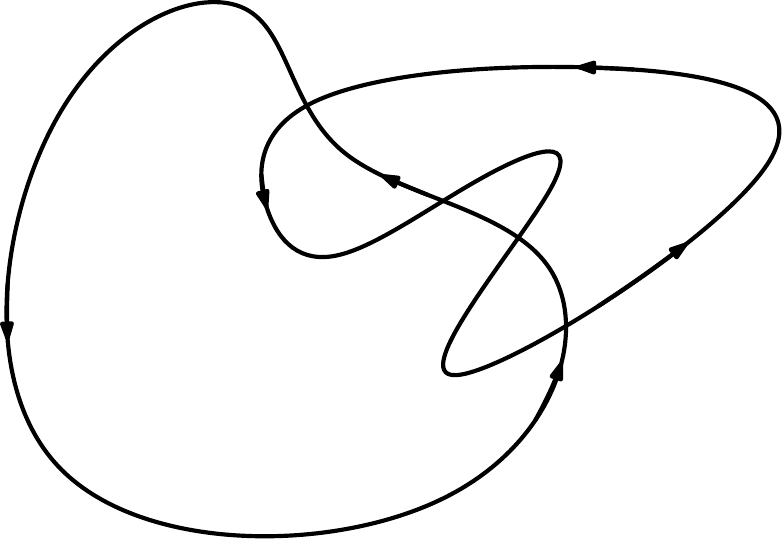}
\subcaption{Non-piercing curves}
\end{center}
\end{subfigure}%
\hspace{0.02\textwidth}
\begin{subfigure}[t]{0.31\textwidth}
\begin{center}
\includegraphics[width=0.95\textwidth]{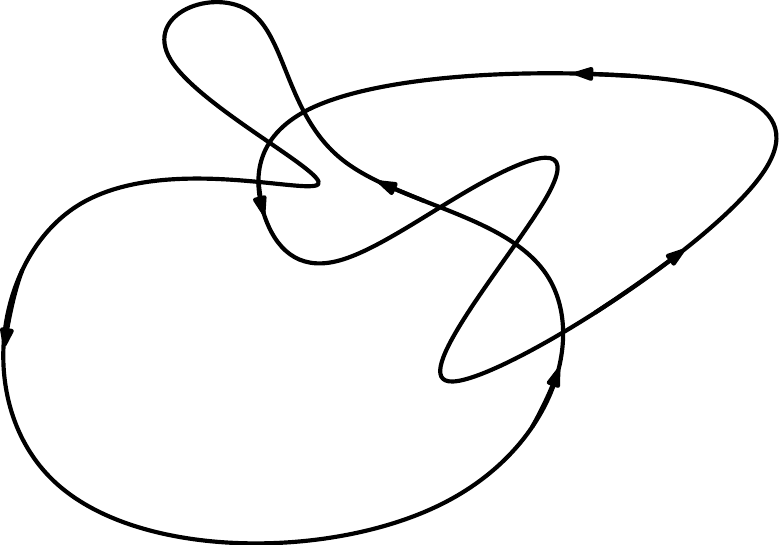}
\subcaption{Piercing curves}
\end{center}
\end{subfigure}
\caption{The figure above shows an example of a pair of pseudocircles, a pair of non-piercing curves, and a pair of piercing curves.}
\label{fig:nonpiercing}
\end{center}
\end{figure}

\smallskip\noindent
{\bf Reverse-cyclic sequences.}
Let $\alpha$ and $\beta$ be two curves which intersect at a finite number of points $x_1, \cdots, x_k$ transversally. 
Let $\sigma_\alpha(\beta)$ be the cyclic sequence of the intersection points of $\alpha$ and $\beta$ encountered in a traversal of $\alpha$. 
We define $\sigma_\beta(\alpha)$ analogously. We say that  `the intersections of $\alpha$ and $\beta$ are  
\emph{reverse-cyclic}' if $\sigma_{\alpha}(\beta)$ and $\sigma_{\beta}(\alpha)$ are the reverse of each other up to a cyclic shift. The following lemma was proved by Basu-Roy, et al.~\cite{DBLP:journals/dcg/RoyGRR18}. See Figure~\ref{fig:enter-label}.
\begin{lemma}[\cite{DBLP:journals/dcg/RoyGRR18}]
\label{lem:revcyclic}
Two curves $\alpha$ and $\beta$ that intersect at a finite number of points transversally are non-piercing iff their intersections are reverse-cyclic.
\end{lemma}
\begin{figure}[!h]
    \centering
    \includegraphics[scale=0.5]{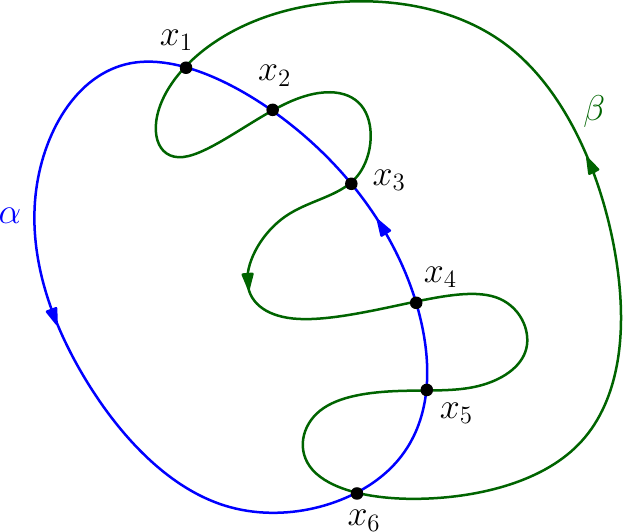}
    \caption{\centering Illustration of Lemma~\ref{lem:revcyclic}. Here $\sigma_\alpha(\beta)=(x_6,x_5,x_4,x_3,x_2,x_1)$ and $\sigma_\beta(\alpha)=(x_1,x_2,x_3,x_4,x_5,x_6)$ are reverse-cyclic.}
    \label{fig:enter-label}
\end{figure}

\section{Sweeping}
\label{sec:sweeping}
Let $\Gamma$ be a set of non-piercing curves in the plane in general position, let $\gamma \in \Gamma$ be one of the curves called the {\em sweep curve} and let $P$ be a specified point in $\tilde\gamma$.
Recall that by `sweeping of $\AG$ by $\gamma$',
we mean the process of continuously shrinking $\tilde{\gamma}$ to $P$ while keeping the other curves fixed and maintaining the non-piercing property throughout except a finite set of instants in time.

Snoeyink and Hershberger~\cite{SnoeyinkH90} discretized this continuous sweeping process into a set of discrete operations each of which can be implemented as a continuous deformation of $\gamma$ over a unit time interval. 
We follow their arguments and likewise describe a set of allowable discrete {\em sweeping operations}. Most of our operations are slight variations of their
operations, except one new operation that we introduce. 

In order to describe the operations, we need one more definition. 
We say that point $v$ in $\tilde{\gamma}$ is \emph{visible} from the sweep curve $\gamma$
if there is an arc $\tau(u,v)$ (called the {\em visibility arc}) joining a point $u$ on $\gamma$ with $v$
such that the interior of $\tau(u,v)$ lies in $\tilde{\gamma}$
and does not intersect any of the curves in $\Gamma$.
We say that a curve $\alpha$ is visible from $\gamma$ if some point on $\alpha$ is visible from $\gamma$.

We define the following discrete sweeping operations.

\begin{enumerate}[(i)]
\item \emph{Take a new loop:} Let $\alpha$ be a curve that does not intersect $\gamma$ and such that $\alpha$ is visible from $\gamma$ 
via the visibility arc $\tau(u,v)$. We define ``taking a new loop'' as the operation that modifies
$\gamma$ to $\gamma'$ as follows: taking two points $u',u''$ arbitrarily close to $u$ on either side to $u$ on $\gamma$, 
we replace the segment of $\gamma$ between $u'$ and $u''$ containing $u$ by a curve between $u'$ and
$u''$ that lies arbitrarily close to $\tau(u,v)$ and loops around $v$ crossing $\alpha$ twice. 
See Figure~\ref{fig:takeLoop}. 
More formally, $\gamma$ is modified to $\gamma'$ 
so that $\tilde{\gamma'} = \tilde{\gamma} \,\setminus\, (\tau(u,v) \oplus B_\epsilon)$ where $\oplus$ denotes Minkowski sum and $B_\epsilon$ is a ball of radius $\epsilon$ for an arbitrarily small $\epsilon$. In the rest of the paper, we avoid such formal definitions and resort to figures for ease of exposition. However, our informal definitions using figures can easily be formalized.

\begin{figure}[ht!]
\begin{subfigure}{0.4\textwidth}
\centering
 \includegraphics[scale=.5]{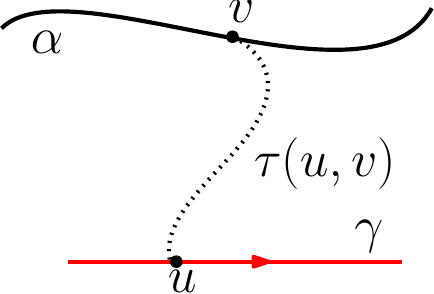}
\label{fig:takeLoop1}
\end{subfigure}\hspace{1cm}
\begin{subfigure}{0.5\textwidth}
\centering
\includegraphics[scale=.5]{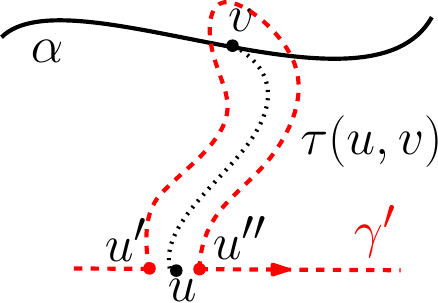}
\label{fig:takeLoop2}
 \end{subfigure}
 \caption{\centering Taking a new loop.}
 \label{fig:takeLoop}
 \end{figure}

\item \emph{Bypass a digon cell:} 
Suppose that a curve $\alpha \in \Gamma$ and the sweep curve $\gamma$ form a digon cell $D$ in $\tilde\gamma$ with vertices $u$ and $v$. Then, we define the operation of 
``bypassing $D$'' as the modification of $\gamma$ to a curve $\gamma'$ that goes around the digon cell $D$ as shown in Figure~\ref{fig:emptyDigon}.

\begin{figure}[!ht]
\begin{subfigure}{0.4\textwidth}
\centering
 \includegraphics[scale=.5]{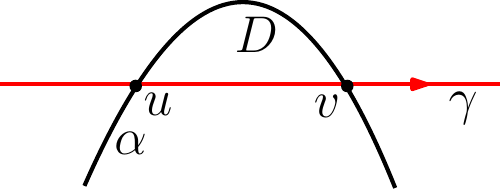}
\label{fig:emptyDigon1}
\end{subfigure}\hspace{1cm}
\begin{subfigure}{0.5\textwidth}
\centering
\includegraphics[scale=.5]{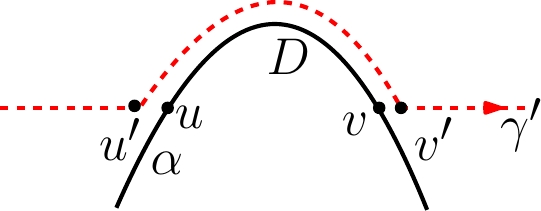}
\label{fig:emptyDigon2}
 \end{subfigure}
 \caption{\centering Bypassing a digon cell. Note that the orientation of $\alpha$ doesn't matter.}
 \label{fig:emptyDigon}
 \end{figure}

\item \emph{Bypass a triangle cell:} Let $\alpha$ and $\beta$ be two curves that along with the sweep curve $\gamma$ define a triangle cell $T$ in
$\tilde\gamma$. We define the operation of ``bypassing $T$'' as the modification of $\gamma$ to a curve $\gamma'$ which goes around $T$ as show in Figure~\ref{fig:emptyTriangle}.
    
\begin{figure}[!ht]
\begin{subfigure}{0.4\textwidth}
\centering
 \includegraphics[scale=.45]{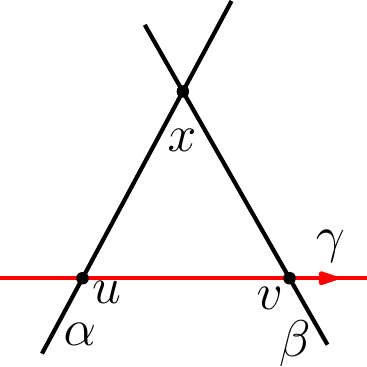}
\label{fig:emptyTriangle1}
\end{subfigure}\hspace{1cm}
\begin{subfigure}{0.5\textwidth}
\centering
\includegraphics[scale=.45]{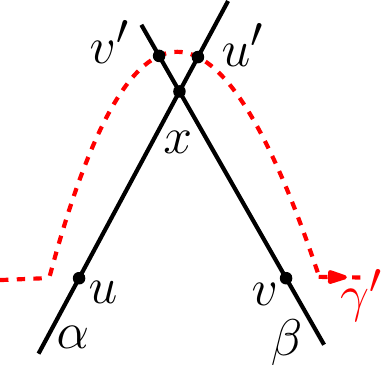}
\label{fig:emptyTriangle2}
 \end{subfigure}
 \caption{\centering Bypassing a triangle cell.}
 \label{fig:emptyTriangle}
 \end{figure}

 \item \emph{Bypass a visible vertex:} 
Let $v$ be a vertex of $\AG$ in $\tilde\gamma$ defined by the curves $\alpha, \beta \in \Gamma \,\setminus\, \{\gamma\}$.
Suppose there is a visibility arc $\tau(u,v)$ from a point $u$ on $\gamma$ to $v$ whose interior lies 
in the interior of both $\tilde{\alpha}$ and $\tilde{\beta}$. 
By definition of a visibility arc, the interior of $\tau(u,v)$ also lies in the interior of $\tilde{\gamma}$. 
In this case, we refer to $\tau(u,v)$ as the {\em bypassability curve} for $v$ and define ``bypassing $v$'' as the modification of $\gamma$ to a curve $\gamma'$ that takes a loop around $v$ as shown in Figure~\ref{fig:sweepVisible}.
Note that the orientation of the curves is important here.

\begin{figure}[!ht]
\begin{subfigure}[!htbp]{0.4\textwidth}
\centering
 \includegraphics[scale=.6]{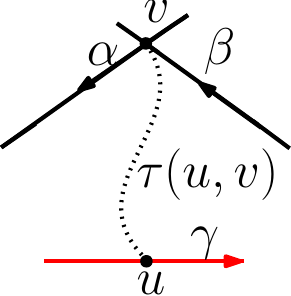}
\subcaption{\centering visibility curve $\tau(u,v)$}
\label{fig:sweepVisible1}
\end{subfigure}\hspace{1cm}
\begin{subfigure}[!htbp]{0.5\textwidth}
\centering
\includegraphics[scale=.6]{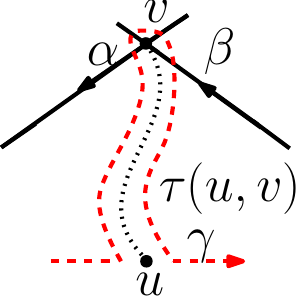}
\subcaption{\centering  $\gamma'$ after bypassing vertex $v$}
\label{fig:sweepVisible2}
 \end{subfigure}
 \caption{\centering Bypassing a visible vertex.}
  \label{fig:sweepVisible}
 \end{figure}
\end{enumerate}

\hide{
\section{Simplifying transformations}
\label{sec:transform}

Let $\Gamma$ be a set of non-piercing regions in the plane. In Section~\ref{sec:sweeping}, we described sweeping either with
a closed curve, or with a bi-infinite curve. In this section, we show that we can, without loss of generality,
restrict our attention to sweeping with a bi-infinite curve. The results in this section are essentially from the results of
Snoeyink and Hershberger and are described here for completeness.

\begin{lemma}
\label{lem:transform}
`Let $\Gamma$ be an arrangement of non-piercing curves with sweep curve $\gamma\in\Gamma$. Suppose $\gamma$ is a closed curve, then 
we can transform the instance into a collection of non-piercing curves $\Gamma'$ s.t. in $\Gamma'$ the image of $\gamma$ is a
bi-infinite curve.
\end{lemma}
\begin{proof}

\end{proof}

When sweeping with a bi-infinite curve $\gamma$, by applying a homeomorphism of the plane, we can assume that $\gamma$ is a horizontal line
and that we are sweeping upwards. In such a case, the intersection between the curves in $\Gamma$ below $\gamma$ don't play any role, and
we can remove them. In particular, we cut each curve below $\gamma$ and let it run to $-\infty$.

\begin{lemma}
\label{lem:simplify}
Given an arrangement $\Gamma$ of non-piercing regions, and a bi-infinite curve $\gamma$, we can modify the arrangement so that
the curves do not intersect outside $\tilde{\gamma}$.
\end{lemma}
}

The following lemma summarizes properties of the above operations that are intuitively obvious from their descriptions.

\begin{restatable}{lem}{ops}
\label{lem:ops}
Let $\Gamma$ be a set of non-piercing curves in general position and let $\gamma\in\Gamma$ be the sweep curve. Let $\gamma'$ be the modified sweep curve obtained after
applying one of the operations: $(i)$ taking a new loop, $(ii)$ bypass 
 a digon cell, $(iii)$ bypass a triangle cell, or $(iv)$ bypass a visible vertex. Then, the resulting set of curves $(\Gamma\,\setminus\,\{\gamma\})\cup\{\gamma'\}$ is non-piercing and in general position. Furthermore,
 operation $(i)$ increases by $2$ the number of intersection points between $\gamma$ and the curve $\alpha$ we take a new loop on, operation $(ii)$ decreases by $2$ the number of intersections between $\gamma$ and the other curve $\alpha$ defining the digon cell, operation $(iii)$ does not change the number of intersections between $\gamma$ and other curves, and operation $(iv)$ increases by $2$ the number of intersections between $\gamma$ and each of the curves $\alpha, \beta$ defining the vertex being bypassed.
\end{restatable}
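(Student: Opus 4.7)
The plan is to verify separately for each of the four operations that (a) the arrangement remains non-piercing and (b) the intersection count changes as claimed. The workhorse will be Lemma \ref{lem:revcyclic}, by which non-piercing between a pair of curves is equivalent to their crossing sequences being reverse-cyclic. A single observation feeds into all four cases: each operation modifies $\gamma$ only inside a small tubular neighborhood of a visibility/bypassability arc (in (i) and (iv)) or of a digon/triangle cell (in (ii) and (iii)). Consequently, for any curve $\delta\in\Gamma$ not explicitly named in the operation, the cyclic patterns $\sigma_{\gamma'}(\delta)$ and $\sigma_\delta(\gamma')$ agree with $\sigma_\gamma(\delta)$ and $\sigma_\delta(\gamma)$, so non-piercing with such a $\delta$ is inherited for free. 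It therefore suffices to analyse the interaction of $\gamma'$ with at most two curves per operation ($\alpha$ in (i), (ii); $\alpha$ and $\beta$ in (iii), (iv)), and to count the added or deleted crossings.

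For (i), visibility guarantees that the interior of $\tau(u,v)$ meets no curve, so the only new crossings are the two transverse intersections of the loop with $\alpha$ near $v$; these are adjacent in both $\sigma_{\gamma'}(\alpha)$ and $\sigma_\alpha(\gamma')$ and appear in reversed order, yielding $+2$ together with reverse-cyclicity (the previous sequences were empty between $\gamma$ and $\alpha$). For (ii), the digon $D$ is an empty cell of $\mathcal{A}(\Gamma)$ and the detour is confined to the complement of $D$ and avoids $\alpha$, so the vertices $u,v$ are deleted from both sequences; they were consecutive and reversed on $\gamma$ and on $\alpha$ respectively (they bound one arc on each), so deletion preserves reverse-cyclicity and the count drops by $2$. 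For (iii), the detour around $T$ can be drawn arbitrarily close to $\partial T$, passing through the region opposite $T$ across the vertex $u=\alpha\cap\beta$; this replaces $v=\alpha\cap\gamma$ by a nearby $v'$ on the other side of $u$ on $\alpha$ (and symmetrically $w$ by $w'$ on $\beta$). Since there are no other $\gamma$-crossings of $\alpha$ between $v$ and $v'$ (the only vertex in between is $u$, which is not a $\gamma$-crossing), both sequences are preserved up to relabeling $v\mapsto v'$, and the count with $\alpha$ and with $\beta$ is unchanged.

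For (iv), the bypassability assumption places $\tau(u,v)$ inside $\tilde\alpha\cap\tilde\beta$, so the small loop around $v$ exits and re-enters each of $\tilde\alpha$ and $\tilde\beta$ exactly once, contributing two new crossings with each of $\alpha$ and $\beta$ (so $+2$ on both counts). The key subtlety, and the reason orientations matter, is that the two new $\alpha$-crossings appear consecutively on $\gamma'$ (along the loop) while also appearing consecutively on $\alpha$ (both close to $v$), yet these two ``insertion points'' sit at entirely different locations. One must check that the pair is inserted into the old reverse-cyclic sequence on $\gamma'$ in the opposite order from how it is inserted on $\alpha$. This follows from the fact that, oriented along the loop, $\gamma'$ first enters $\tilde\alpha$ and then exits (determined by the orientation of $\gamma$ along $\tau(u,v)$ together with the orientation of $\alpha$), so reverse-cyclicity is preserved, and symmetrically for $\beta$.

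The main obstacle I expect is the orientation bookkeeping in (iv): when there are already many crossings between $\gamma$ and $\alpha$, one must verify that inserting the two new consecutive crossings at different ``places'' in the two sequences still yields reverse-cyclic sequences after insertion. Cases (i)--(iii) reduce almost immediately to picture-level inspection of the reverse-cyclic pattern, so the full proof will be a direct but careful expansion of the case analysis just outlined.
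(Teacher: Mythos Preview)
Your proposal is correct and follows essentially the same approach as the paper: both arguments reduce each operation to a local modification of $\gamma$, invoke Lemma~\ref{lem:revcyclic}, and verify case by case that the relevant crossing sequences remain reverse-cyclic while tracking the intersection counts. Your treatment is slightly more systematic (the upfront locality observation) and more explicit about the orientation subtlety in~(iv) than the paper's, but the underlying argument is the same.
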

\begin{proof}
Consider operation $(i)$ and let $\alpha$ be the curve on which we are taking a new loop. 
As shown in Figure~\ref{fig:takeLoop}, this increases the number of intersections between $\alpha$ and $\gamma$ by $2$ by inserting two 
intersection points along either curve. The two new intersections points appear consecutively along either curve but in opposite order along their orientations. This ensures that the intersections between $\alpha$ and $\gamma$ remain reverse-cyclic, which by Lemma~\ref{lem:revcyclic} implies that $\alpha$ and $\gamma$ remain non-piercing after the operation. Since no other curve is affected, the entire arrangement remains non-piercing.  

Now consider operation $(ii)$ and let $\alpha$ be the curve defining the digon cell along with the sweep curve $\gamma$. 
As Figure~\ref{fig:emptyDigon} shows,
in this case we remove two intersections points (namely the vertices of the digon cell) which appear consecutively along both curves. As a result the intersections of $\alpha$ and $\gamma$ remain reverse-cyclic and by Lemma~\ref{lem:revcyclic} $\alpha$ and $\gamma$ remain non-piercing. No other curve is affected.

Next, consider operation $(iii)$. In this case let $\alpha$ and $\beta$ be the curves defining the triangle along with the sweep curve $\gamma$. As shown in Figure~\ref{fig:emptyTriangle}, the intersection points $u$ and $v$ of $\gamma$ with $\alpha$ and $\beta$ effectively move to $u'$ and $v'$ switching their order along $\gamma$. However the sequence or the number of intersections of $\gamma$ with $\alpha$ or $\beta$ (or any other curve) does not change. Thus the arrangement remains non-piercing with the same number of intersection points as before.

Finally, consider operation $(iv)$. As shown in Figure~\ref{fig:sweepVisible}, the curve $\gamma$ is modified to go around the vertex $v$. This increases by $2$ the number of intersections between $\alpha$ and $\gamma$ and between $\beta$ and $\gamma$. If we consider the pair of curves $\alpha$ and $\gamma$ and focus on the intersections between them (i.e., we ignore intersections with other curves), the two new intersections are consecutive along both curves. Therefore, the intersections between $\alpha$ and $\gamma$ remain reverse-cyclic implying that $\alpha$ and $\gamma$ remain non-piercing. Analogously, $\beta$ and $\gamma$ remain non-piercing. 

In all the operations, the curve $\gamma$ is the only curve that is modified and modified curve does not pass through the intersection of two other curves or intersect any of the curves tangentially. The entire arrangement therefore, remains in general position.
\end{proof}

\smallskip\noindent
{\bf Remark.} Lemma~\ref{lem:ops} shows that the arrangement
remains `non-piercing and in general position' 
at the end of any of the four operations. 
As a continuous process however, this property is violated at discrete
points in time. It happens when the sweep curve passes through the intersection point of two other curves 
(while bypassing a triangle cell or a visible vertex). It also happens when the sweep curve shares a boundary point with another curve (while taking a new loop, or bypassing a digon cell). 

\begin{corollary}
\label{cor:two}
Let $\Gamma$ be a set of pseudocircles in general position with sweep curve $\gamma\in\Gamma$. Then, after applying any one of the three
operations: (i) take a new loop, (ii) bypass a digon cell, or (iii) bypass a triangle cell, the modified set of curves ($\Gamma\,\setminus\,(\{\gamma\}\cup\{\gamma'\})$), where $\gamma'$ is the modified sweep curve, is a set of pseudocircles in general position.
\end{corollary}

\begin{proof}
We need to show that after any of the operations, the number of intersections between any two curves is either $0$ or $2$. Since $\gamma$ is the only curve that is being modified, we only need to consider pairs of curves involving $\gamma'$ in the modified set of curves.
Operation $(i)$, i.e., taking a new loop is only applied on a curve $\alpha$ that does not intersect $\gamma$. Therefore, after the operation, $\gamma'$ and $\alpha$ intersect in two points.
Operation $(ii)$, i.e., bypass a digon cell, is applied to a digon cell 
formed by $\gamma$ and a curve $\alpha$. After the operation, $\gamma'$ and $\alpha$ do not intersect. 
Operation $(iii)$, i.e., bypass a triangle cell is applied to a triangle cell formed by curves $\alpha$ and $\beta$  with $\gamma$.
After the operation, the number of intersections between $\alpha$ and $\gamma'$, and between $\beta$ and $\gamma'$ remain unchanged.
Thus, in all cases, the modified set of curves is a set of pseudocircles. The figures corresponding to the operations show that the modified set of curves is in general position.
\end{proof}

\hide{
Figure~\ref{fig:vertex} shows that in the case of non-piercing curves, unlike the case with two-intersecting curves,
we sometimes require the operation of bypassing a vertex. 

\begin{figure}[ht!]
\begin{center}
    \includegraphics[width=2in]{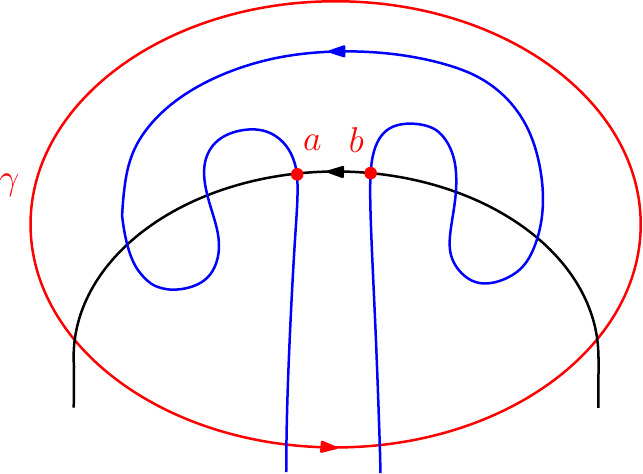}
    \caption{\centering An example of non-piercing curves where the only operation available for the sweep is to bypass
    either the vertex labeled $a$ or the vertex labeled $b$.}
    \label{fig:vertex}
\end{center}
\end{figure}
}
\section{Minimal Digons and Triangles}
\label{sec:basicops}
In this section, we describe two basic operation in an arrangement of a set $\Gamma$ of non-piercing curves.

\smallskip\noindent
{\bf Minimal digon untangling.}
A digon $L$ is said to be \emph{minimal} if it does not contain another digon.
Let $L$ be a minimal digon defined by curves $\alpha$ and $\beta$ in $\Gamma$ with vertices $u$ and $v$. By the minimality of $L$, the intersection of any curve $\delta \in \Gamma$ with $L$
consists of disjoint arcs, each having one end-point on $\alpha$ and one end-point on $\beta$.
The operation of untangling $L$ modifies
the arcs of $\alpha$ and $\beta$ as shown in 
Figure~\ref{fig:digonbypassing}. 
Formally, the untangling is done in two steps. First we replace the arc of $\alpha$ between $u$ and $v$ by the arc of $\beta$ between $u$ and $v$ and vice versa so that $u$ and $v$ are now points of tangencies between $\alpha$ and $\beta$. Next we get rid of the tangencies by moving $\alpha$ and $\beta$ slightly apart from each other
around $u$ and $v$. 
We also add {\em reference points} called  $u_\alpha$ and $u_{\beta}$ arbitrarily close to $u$ so that $\alpha$ passes between $u$ and $u_\alpha$ keeping $u_\beta$ on the same side as $u$, and similarly $\beta$ passes between $u$ and $u_\beta$ keeping $u_\alpha$ on the same side as $u$. We define the reference points $v_\alpha$ and $v_\beta$ close to $v$ analogously. Note that the reference points are not vertices in the modified arrangement. 
We claim that if we are given an arrangement of non-piercing curves, then the arrangement obtained by
untangling a minimal digon remains non-piercing. Before we do that, we need the following Lemma from Basu Roy et al.,~\cite{DBLP:journals/dcg/RoyGRR18}.

\begin{figure}[ht!]
\begin{subfigure}{0.3\textwidth}
\centering
 \includegraphics[scale=.3]{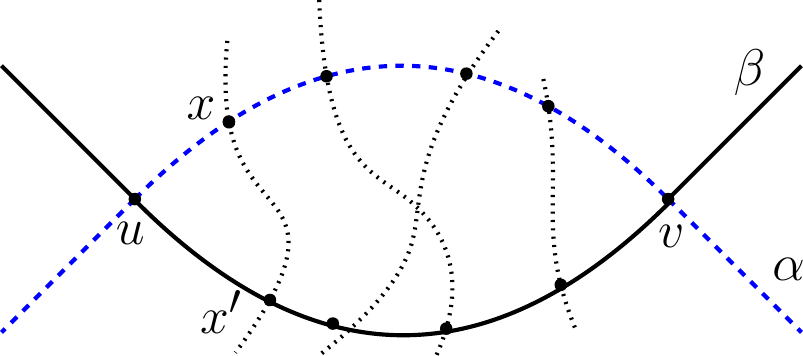}
\end{subfigure}\hspace{0.2cm}
\begin{subfigure}{0.3\textwidth}
\centering
\includegraphics[scale=0.3]{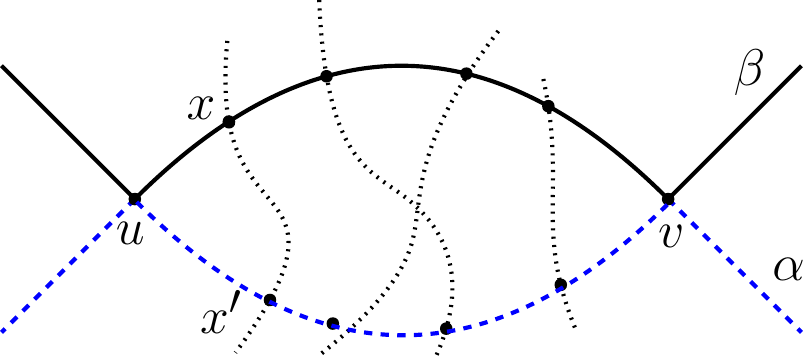}
 \end{subfigure}\hspace{0.2cm}
 \begin{subfigure}{0.3\textwidth}
\centering
\includegraphics[scale=0.3]{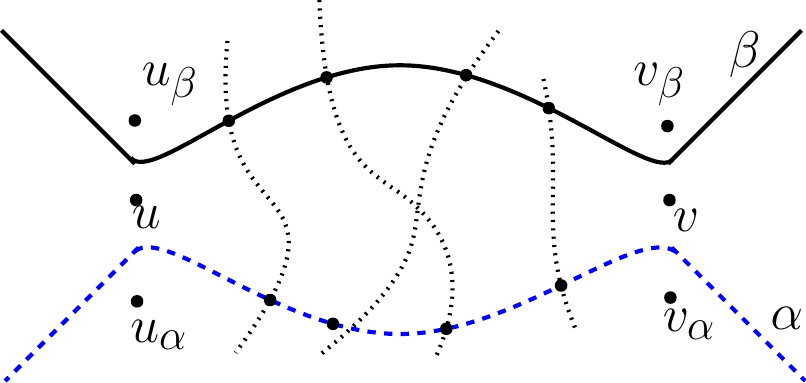}
 \end{subfigure}\hspace{0.2cm}
 \caption{The operation of untangling a digon defined by curves $\alpha$ and $\beta$.
 In the middle figure, the curves $\alpha$ and $\beta$ are swapped around the lens. In the third
 figure, the curves are separated slightly away from the vertices $u$ and $v$.
Note that the orientation of $\alpha$ and $\beta$ are not important, and hence not shown.}
 \label{fig:digonbypassing}
 \end{figure}

\hide{
\begin{figure}[ht!]
\begin{subfigure}{0.4\textwidth}
\centering
 \includegraphics[scale=.4]{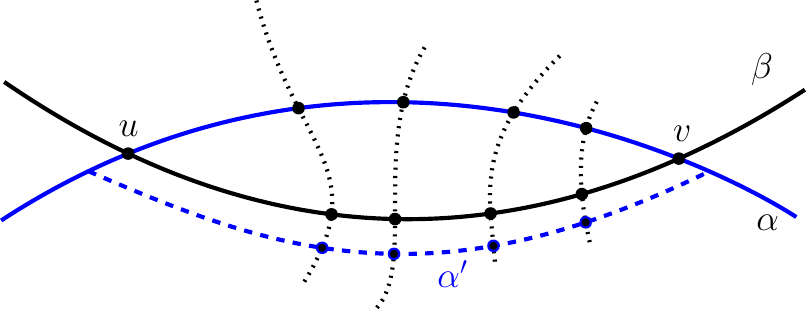}
\subcaption{The figure above shows the operation of bypassing a minimal lens $L_{uv}$ by $\alpha$.}
\label{fig:digonbypassing}
\end{subfigure}\hspace{1.2cm}
\begin{subfigure}{0.4\textwidth}
\centering
\includegraphics[scale=0.4]{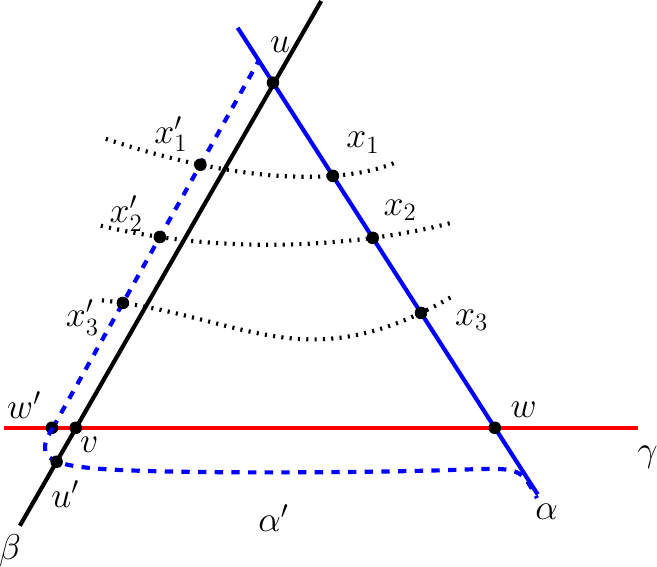}
    \subcaption{The operation of bypassing a minimal triangle by $\alpha$}
    \label{fig:mintrianglebypass}
 \end{subfigure}
 \subcaption{The two bypassing operations used in the proof. The curve $\alpha$ is replaced by the curve $\alpha'$, where the arc of $\alpha'$ that differs from $\alpha$ is shown by
 a dashed curve.}
 \end{figure}

\begin{figure}[!ht]
\begin{center}
\includegraphics[width=3in]{digonbypass.pdf}
\caption{}
\label{fig:digonbypassing}
\end{center}
\end{figure}
}

\begin{restatable}{lem}{minlenses}
\label{lem:digonbypass}
Let $L$ be a minimal digon formed by $\alpha$ and $\beta$ in the arrangement $\AG$ of $\Gamma$ as above. The modified arrangement of curves obtained by untangling $L$ is non-piercing. Furthermore, the number of intersections between $\alpha$ and $\beta$ decreases by $2$ and for any other pair of curves the number of intersections does not change.
\end{restatable}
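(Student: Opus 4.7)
The plan is to verify three claims separately: the intersection count claim for $(\alpha,\beta)$, the invariance of the other intersection counts, and the non-piercing property of the modified family. For the pair $(\alpha,\beta)$, the operation by construction deletes the tangencies $u$ and $v$ and introduces no new intersection between them, so the count drops by exactly $2$. For any curve $\delta\in\Gamma\setminus\{\alpha,\beta\}$, the minimality of $L$ forces each connected component of $\delta\cap L$ to be a simple arc with one endpoint on the $\alpha$-arc of $\partial L$ and the other on the $\beta$-arc; consequently $\delta$ meets the two arcs of $\partial L$ the same number of times. The swap merely relabels these crossings between $\alpha'$ and $\beta'$, so neither $|\alpha'\cap\delta|$ nor $|\beta'\cap\delta|$ changes, and curves whose intersections lie entirely outside $L$ are not touched at all.

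For the non-piercing claim, I would invoke Lemma~\ref{lem:revcyclic} and check reverse-cyclicity of $\sigma$-sequences case by case. The pair $(\alpha',\beta')$ is the easiest: $u$ and $v$ are adjacent entries in both $\sigma_\alpha(\beta)$ and $\sigma_\beta(\alpha)$, since $L$ is a digon so no other intersection of $\alpha$ and $\beta$ lies on either arc of $\partial L$; deleting a consecutive pair from reverse-cyclic sequences leaves them reverse-cyclic. Pairs that do not involve $\alpha$ or $\beta$ are trivially unaffected because their curves and intersection sequences are unchanged. The only substantive case is therefore a pair $(\alpha',\delta)$, and symmetrically $(\beta',\delta)$, for $\delta\in\Gamma\setminus\{\alpha,\beta\}$.

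For this remaining case, the strategy is to decompose the problem along the chords of $\delta$ inside $L$ and proceed chord by chord. Because $\alpha$ and $\beta$ are reverse-cyclic at $\{u,v\}$, the new arc of $\alpha'$ inside $L$ (which is the old $\beta$-arc) is traversed by $\alpha'$ in the direction opposite to $\beta$'s own orientation; combined with the non-piercing of the pairs $(\alpha,\delta)$ and $(\beta,\delta)$, this lets me read off the cyclic order of the new intersections on $\alpha'$ inside $L$ and splice it into the unchanged order outside $L$, then match it directly against $\sigma_\delta(\alpha')$ to check reverse-cyclicity. I expect the main obstacle to be precisely this bookkeeping: getting the orientations right and showing that the new intersections on $\alpha'$ inside $L$ interleave with those outside $L$ in a pattern consistent with reverse-cyclicity. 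The minimality of $L$ is essential throughout, since it rules out chord configurations in which endpoints on the $\alpha$-arc and $\beta$-arc could scramble the cyclic order; without it, an elementary chord swap could introduce a cyclic inversion and destroy the reverse-cyclic relation with $\delta$.
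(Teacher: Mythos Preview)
Your proposal is correct and follows the same overall approach as the paper: use minimality of $L$ to get the chord structure of $\delta\cap L$, then verify reverse-cyclicity via Lemma~\ref{lem:revcyclic} in the three cases $(\alpha',\beta')$, $(\alpha',\delta)$, and pairs not involving $\alpha,\beta$.

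The one place you make your life harder than necessary is the $(\alpha',\delta)$ case. You plan to invoke the orientation of $\alpha'$ along the old $\beta$-arc together with the non-piercing of $(\beta,\delta)$ to reconstruct the order of the new intersections on $\alpha'$, and you anticipate this splicing as the main obstacle. The paper avoids all of that with a single local observation: for each chord $s$ of $\delta$ in $L$ with endpoints $x$ on the $\alpha$-arc and $x'$ on the $\beta$-arc, bypassing simply replaces the $\alpha$--$\delta$ intersection at $x$ by one at $x'$, and since the interior of $s$ carries no $\alpha$-intersection, this slide changes neither the cyclic order along $\delta$ nor the cyclic order along $\alpha'$ (the chords, being arcs of a simple curve, are pairwise non-crossing, so the order of the $x_i$ along the $\alpha$-arc equals the order of the $x'_i$ along the $\beta$-arc). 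No orientation bookkeeping and no appeal to the non-piercing of $(\beta,\delta)$ are needed.
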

\begin{proof}
Since $L$ is a minimal digon, the intersection of any other curve $\delta \in \Gamma \,\setminus\, \{\alpha, \beta\}$ with $L$ is a collection of disjoint segments with one vertex on $\alpha$ and the other on $\beta$.
Let $s$ be one of these segments with end point $x$ on $\alpha$ and $x'$ on $\beta$. Then, as a result of untangling, 
the earlier intersection of $\alpha$ and $\delta$ at $x$ is replaced by the intersection of the modified 
$\alpha$ and $\delta$ at $x'$. 
However, since $\alpha$ and $\delta$ do not have any intersections in the interior of the segment $s$, 
the order of the intersections of $\alpha$ and $\delta$ along either curve remains unchanged i.e., they remain reverse-cyclic. 
Hence, by Lemma~\ref{lem:revcyclic}, the modified $\alpha$ and $\delta$ are non-piercing. 
Analogously, the modified $\beta$ and $\delta$ are non-piercing. 
This also shows that the number of intersections between $\alpha$ and $\delta$ does not change. 
Similarly, the number of intersections between $\beta$ and $\delta$ does not change. 
The curves $\alpha$ and $\beta$ lose two consecutive points of intersections (namely the vertices of $L$) 
and therefore the order of the remaining intersections along them remains reverse-cyclic. 
Thus, the modified $\alpha$ and the modified $\beta$ are also non-piercing.
\end{proof}

The only new/changed cells in the modified arrangement obtained after untangling $L$ are those that contain one of the points $u, v, u_\alpha, u_\beta, v_\alpha$ or $v_\beta$. 
By `newly created' we mean the cell in the modified arrangement which are not there in the original arrangement. Thus, for example, when we say 'newly created digon cells or triangle cells' after an operation, we mean digon cells or triangle cells in the modified arrangement which were not already present in the  arrangement before the operation was executed. 
Note that there may also be cells in the modified arrangement that are geometrically identical to a cell in the original arrangement but their defining curves have changed  - $\alpha$ is replaced by $\beta$, or vice-versa. These are also considered `newly created'.

 \begin{restatable}{lem}{lenstriangle}
\label{lem:lenstriangle}
Let $\AG$ be an arrangement of a set of non-piercing curves $\Gamma$ with sweep curve $\gamma\in\Gamma$.
Let $L$ be a minimal digon in the arrangement formed by curves 
$\alpha,\beta\in\Gamma$ which have only two intersections 
 $u$ and $v$ in $\tilde\gamma$ (See Figure~\ref{fig:digonbypassing}).
In the arrangement obtained after untangling $L$, 
any newly created digon cells or triangle cells on the sweep curve $\gamma$ 
must contain one of the reference points $u_\alpha$, $u_\beta$, $v_\alpha$ or $v_\beta$. 
\end{restatable}
\begin{proof}
Let $\Gamma'$ denote the modified set of curves obtained on untangling $L$. 
The only cells in $\mathcal{A}(\Gamma')$ that are newly created (i.e., not already present in $\AG$) are those that have a side contributed by either $\alpha$ or $\beta$. 
Among those, any cell $C$ that does not contain the reference points $u_\alpha, u_\beta, v_\alpha, v_\beta$ or the vertices $u, v$, 
corresponds to a geometrically identical cell $C'$ in $\AG$ so that some of the sides contributed by $\alpha$ is $C'$ are contributed by the modified $\beta$ in $C$ or vice versa. Now, consider those cells that contain either $u$ or $v$. Note that any cell has sides contributed by
both of the modified curves $\alpha$ and $\beta$. Since by assumption these modified curves do not intersect in $\tilde\gamma$,
such a cell must have at least two more sides and hence is not a digon or a triangle cell. The lemma follows.
\hide{
Consider a cell $C$ in $\AG$ such that neither $\alpha$ nor $\beta$ contribute a side to $C$. Then,
the cell $C$ remains unchanged in $\mathcal{A}(\Gamma')$. Now, consi
If either $\alpha$ or $\beta$ contribute to a cell $C$. 
The cells in the new arrangement containing either $u$ or $v$ has sides contributed by $\alpha, \beta$, $\gamma$ and at least
one other curve (also possibly $\gamma$) and it thus has at least 4 sides.

any cell $C$ that does not contain the reference points $u_\alpha, u_\beta, v_\alpha, v_\beta$ or the vertices $u, v$, is identical to a cell in $\AG$ and hence the number of sides of such a cell remains unchanged. 
In $\mathcal{A}(\Gamma')$
one of their sides defined by $\alpha$ may have been replaced by $\beta$ or vice versa but this does not change whether the cell lies on $\gamma$. Thus such cells cannot be digon cells or triangle cells on $\gamma$ unless they were already so in the original arrangement (before bypassing $L$). 
The cells in the new arrangement containing either $u$ or $v$ has sides contributed by $\alpha, \beta$, $\gamma$ and at least
one other curve (also possibly $\gamma$) and it thus has at least 4 sides. Therefore the only cells that can become digon cells or triangle cells
on $\gamma$ are the ones with reference points $u_\alpha, u_\beta, v_\alpha$, or $v_\beta$.
}
\end{proof}

\smallskip\noindent{\bf Minimal triangle untangling.}
Let $T$ be a triangle in the interior $\tilde{\gamma}$ of the sweep curve $\gamma$, where $T$ is 
defined by the curves $\alpha, \beta, \zeta \in \Gamma$.
We say that $T$ is a ``minimal triangle on $\zeta$'', or ``base $\zeta$'' if 
the intersection of any other curve $\delta\in\Gamma\,\setminus\,\{\alpha,\beta,\zeta\}$ with $T$ 
consists of a set of disjoint arcs each with one end-point on $\alpha$ and one end-point on $\beta$. 
The operation of untangling $T$ modifies $\alpha$ and $\beta$ so that they go around $T$ and their intersection $u$ on the boundary of $T$ moves 
to the other side of $\zeta$, as shown in Figure~\ref{fig:mintrianglebypass}.
We show next, that untangling $T$ leaves the arrangement non-piercing.

\begin{figure}[h!]
\centering
\begin{subfigure}[!htbp]{0.45\textwidth}
\centering
 \includegraphics[scale=.4]{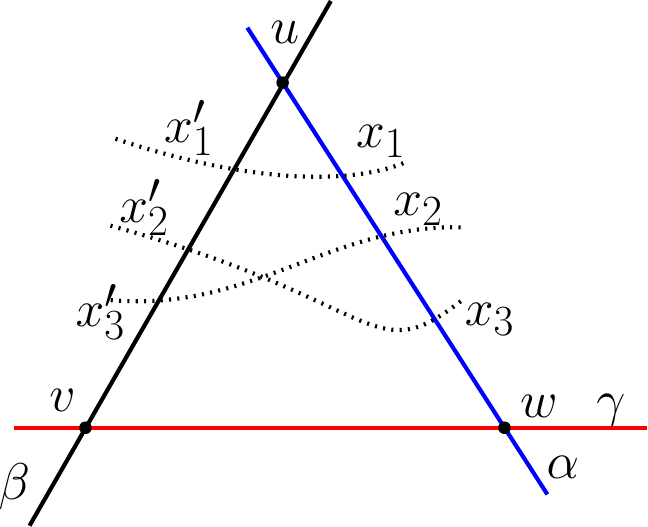}
\end{subfigure}\hspace{0.3cm}
\begin{subfigure}[!htbp]{0.45\textwidth}
\centering
\includegraphics[scale=0.4]{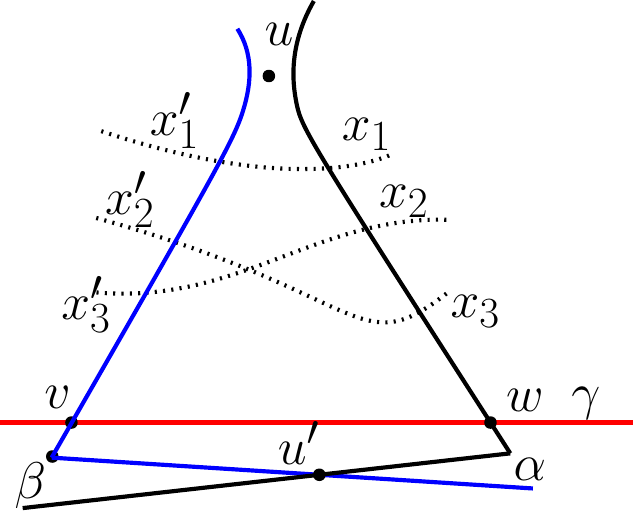}
 \end{subfigure}\hspace{0.3cm}
 \caption{\centering The operation of untangling a minimal triangle formed by curves $\alpha$ and $\beta$ on $\zeta$. 
 }
 \label{fig:mintrianglebypass}
 \end{figure}
\begin{restatable}{lem}{trianglebypass}
\label{lem:trianglebypass}
Let $T$ be a minimal triangle in an arrangement $\AG$ of non-piercing curves $\Gamma$ defined by curves $\alpha, \beta$ and
$\zeta$. Then, untangling $T$ yields a non-piercing arrangement. 
The number of intersection points between any pair of curves does not change. Furthermore, if $\zeta=\gamma$, and
$T$ was contained in $\tilde{\gamma}$, then
the number of intersection points in $\tilde{\gamma}$ decreases by 1. 
\end{restatable}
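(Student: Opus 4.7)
The plan is to mirror the proof of Lemma \ref{lem:digonbypass} for minimal digons, applying Lemma \ref{lem:revcyclic} pair by pair. The bypass operation is local to a small neighborhood of $T$: outside this neighborhood, all curves are unchanged, and $\gamma$ is unchanged everywhere. Inside the neighborhood, $\alpha$ and $\beta$ are reconfigured so that their intersection $u$ (on the boundary of $T$ inside $\tilde{\gamma}$) is pushed across the $\gamma$-side of $T$ to a new location $u'$ just outside $\tilde{\gamma}$, in a manner analogous to the digon swap.

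For a pair $(\alpha, \delta)$ with $\delta \in \Gamma \setminus \{\alpha, \beta, \gamma\}$, I would use minimality of $T$: every arc of $\delta$ inside $T$ has one endpoint on $\alpha$ and one on $\beta$, with no other intersection with $\alpha \cup \beta$ along that arc. Under the bypass, for each such arc the $\alpha$-endpoint $x$ is replaced by a new $\alpha$-intersection at the old $\beta$-endpoint $x'$, because the new $\alpha$-arc takes over the role of the old $\beta$-arc on that side of $\delta$. This relocation keeps $|\alpha \cap \delta|$ unchanged and preserves the cyclic order of these intersections along both $\alpha$ and $\delta$: the move happens along an arc of $\delta$ that contains no other $\alpha$-intersection, and the local region on $\alpha$ only contains the arcs being swapped. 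Hence $\sigma_\alpha(\delta)$ and $\sigma_\delta(\alpha)$ remain reverse cyclic, so by Lemma \ref{lem:revcyclic} the pair is still non-piercing. The same argument handles $(\beta, \delta)$, and pairs involving neither $\alpha$ nor $\beta$ are untouched.

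For $(\alpha, \beta)$, the only change is that $u$ moves locally to $u'$; since no other $\alpha$--$\beta$ intersection lies in the neighborhood of $T$, the cyclic position of this intersection along both curves is unchanged, so non-piercing is preserved with $|\alpha \cap \beta|$ unchanged. For $(\alpha, \gamma)$, the vertex $a$ of $T$ on $\gamma$ is replaced in the new arrangement by a nearby intersection occupying the same cyclic slot along both curves; no other $\alpha$--$\gamma$ intersection is added or lost because $\gamma$ is untouched and the modified $\alpha$ still crosses $\gamma$ exactly once in the local region. The same reasoning applies to $(\beta, \gamma)$. Thus every pair remains non-piercing with its intersection count preserved.

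Finally, to track intersections inside $\tilde{\gamma}$: among all pairwise intersections, only $u$ changes sides relative to $\gamma$, moving from inside to $u'$ outside, while every other intersection either lies outside the local neighborhood or, as argued above, is relocated without crossing $\gamma$. This gives the net decrease of exactly one intersection point inside $\tilde{\gamma}$. The main obstacle, as in the digon case, is verifying that no two previously-distinct intersections are ``swept past'' each other during the local deformation; minimality of $T$ is precisely what rules this out, since every $\delta$-arc in $T$ meets $\alpha \cup \beta$ only at its two endpoints.
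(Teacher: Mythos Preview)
Your proposal is correct and follows essentially the same approach as the paper's proof: both analyze the pairs $(\alpha,\delta)$, $(\beta,\delta)$, $(\alpha,\beta)$, $(\alpha,\gamma)$, $(\beta,\gamma)$ separately, use minimality of $T$ to ensure each $\delta$-arc in $T$ has one endpoint on $\alpha$ and one on $\beta$, and invoke Lemma~\ref{lem:revcyclic} to conclude that the relocated intersections preserve the reverse-cyclic order. The only cosmetic difference is that the paper names the new $\alpha$--$\gamma$ intersection explicitly (it moves from $w$ to the old $\beta$--$\gamma$ vertex $v$, not to a merely ``nearby'' point), but your ``same cyclic slot'' observation is exactly the content of that case.
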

\begin{proof}
Let $u$ be the intersection point of $\alpha$ and $\beta$ on the boundary of $T$. 
Similarly, let $v$ and $w$ be the intersection points of $\beta$ and $\zeta$, and 
$\alpha$ and $\zeta$, respectively, on the boundary of $T$. 

By the definition of untangling $T$ (shown in Figure~\ref{fig:mintrianglebypass}), 
the intersection $u$ between $\alpha$ and $\beta$ 
is replaced by a new intersection $u'$ that lies on the opposite side of $\zeta$. 
The following arguments show that the number of intersections between any pair of curves within 
$\tilde{\gamma}$ does not change. If $\zeta=\gamma$, and $T$ lies in $\tilde{\gamma}$, 
then as $u'$ lies outside $\tilde{\gamma}$,
the number of intersection points in $\tilde{\gamma}$ decreases by $1$.

We now argue that all pairs of curves remain non-piercing and the number of pairwise intersections remains unchanged. 
In particular, we show that the intersection points between any pair of curves remains reverse-cyclic, 
and hence by Lemma~\ref{lem:revcyclic}, they remain non-piercing.   
Let $\delta$ be any curve in $\Gamma\,\setminus\,\{\alpha\}$. 

\smallskip\noindent
{\em Case 1. $\delta=\zeta$:} $\zeta$ and the modified $\alpha$ intersect at a new point $v$ instead of at $w$. Since $\alpha$ and $\zeta$ don't intersect at any point between $v$ and $w$ on the boundary of $T$, their reverse-cyclic sequences remain the same, and hence they remain non-piercing. 

\smallskip\noindent
{\em Case 2:} $\delta = \beta$. The modified $\alpha$ and modified $\beta$ intersect at $u'$ instead of $u$. Since $\alpha$ and $\beta$ did not have any intersections other than $u$ on the boundary of $T$, this does not change their reverse-cyclic sequences. 
Thus, they remain non-piercing. 

\smallskip\noindent
{\em Case 3:} $\delta \in \Gamma \,\setminus\, \{\alpha, \beta, \zeta\}$. If $\delta$ does not intersect $T$, the intersection points between $\alpha$ and $\delta$ does not change and therefore they remain non-piercing. Let us suppose therefore that $\delta$ intersects $T$. Its intersection with $T$ then consists of a disjoint non-intersecting set of arcs with one end-point each on the two sides of $T$ defined by $\alpha$ and $\beta$. Bypassing $T$ moves the intersection between $\delta$ and $\alpha$ from one end-point to the other on each of the arcs. However since the arcs are non-intersecting, this does not affect their reverse-cyclic sequences. Thus, they remain non-piercing.

An analogous argument shows that $\beta$ is non-piercing with respect to the other curves. 
\end{proof}

\begin{corollary}
\label{cor:minT2int}
Let $\AG$ be an arrangement of two-intersecting curves $\Gamma$ with sweep curve $\gamma$. 
If $T$ is a minimal triangle in $\tilde{\gamma}$ formed by curves $\alpha, \beta$ and $\zeta$ in $\Gamma$, then after untangling $T$, the arrangement remains two-intersecting. Furthermore, if $\zeta=\gamma$, and $T$ lies in $\tilde{\gamma}$, then
after untangling $T$, the number of intersection points in $\tilde{\gamma}$ decreases by $1$.
\end{corollary}

Let $\AG$ be an arrangement of two-intersecting curves $\Gamma$. Let $T$ be a minimal triangle on $\zeta$ formed by
$\alpha, \beta$ and $\zeta$. Let $u$ denote the vertex of $T$ defined by $\alpha$ and $\beta$.
If $\alpha$ and $\beta$ intersected at a point $u'$ so that $u,u'$ were the
vertices of a digon cell in the original arrangement, then on untangling $T$, we obtain a new minimal 
triangle $T'$ on $\zeta$, where $u'$ is the vertex of $T'$ defined by $\alpha$ and $\beta$.
See Figure~\ref{fig:doubletrianglebypass}. In such cases, we apply the minimal triangle untangling
operation again on $T'$, after which $\alpha$ and $\beta$ only intersect on the opposite side of $\zeta$.
Note that $u'$ is the vertex of $\alpha$ and $\beta$ on $T'$ and on untangling $T'$ 
the two cells adjacent to $u'$ have reference points $u'_\alpha$ and $u'_\beta$, but these are identical
to the cells with reference points $u_\alpha$ and $u_\beta$, respectively. Hence, we refer to the cells
as those with reference points $u_\alpha$ and $u_\beta$.
For ease of notation, we just refer to this operation as \emph{applying minimal triangle untangling on 
$T$ twice}.
Thus, if $\zeta=\gamma$, and $T$ was in $\tilde{\gamma}$, then on applying minimal triangle untangling twice,
in the new arrangement, $\alpha$ and $\beta$ don't intersect in $\tilde{\gamma}$. 

\begin{lemma}
\label{lem:taftert}
Let $\AG$ be an arrangement of two-intersecting curves $\Gamma$, and let $T$ be a minimal triangle 
on $\zeta$ formed by curves $\alpha, \beta$ and $\zeta$ in $\Gamma$. Then, on
applying triangle untangling on $T$ at most twice, 
the only new digon cells or triangle cells formed in the arrangement
are the ones with reference points.
$u_\alpha$ and $u_\beta$. 
\end{lemma}
\begin{proof}
The cells with reference points $u_\alpha$ and $u_\beta$ lose a vertex on untangling the minimal triangle, and hence can
potentially become digon cells or triangle cells. The cell with reference point $u$ has lost a vertex, but gained two
additional vertices, and since we apply triangle untangling twice, 
the cell has at least 4 sides. For all other cells, the number of sides
remains the same.
\end{proof}
\begin{figure}[t!]
    \centering
    \begin{subfigure}{0.3\textwidth}
        \centering
        \includegraphics[scale=0.3]{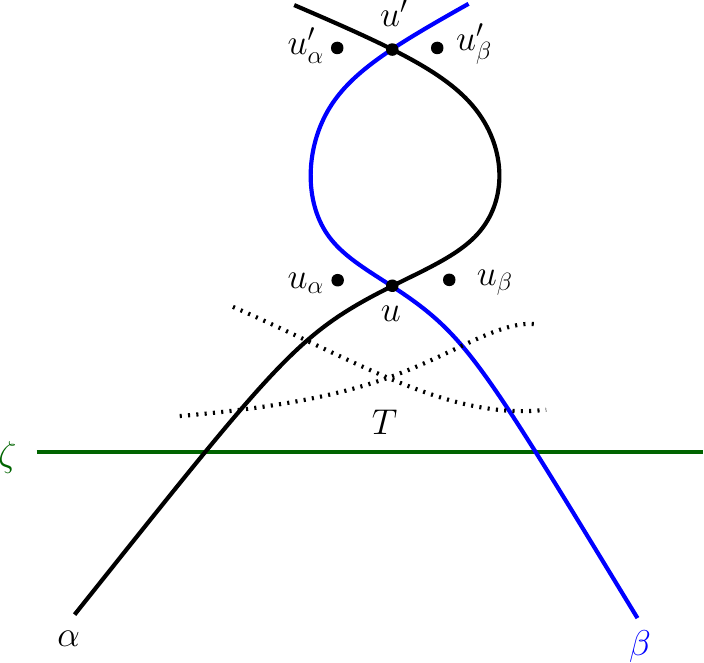}
        \label{fig:triangleuntangle_twice1}
    \end{subfigure}
    \begin{subfigure}{0.35\textwidth}
        \centering
        \includegraphics[scale=0.3]{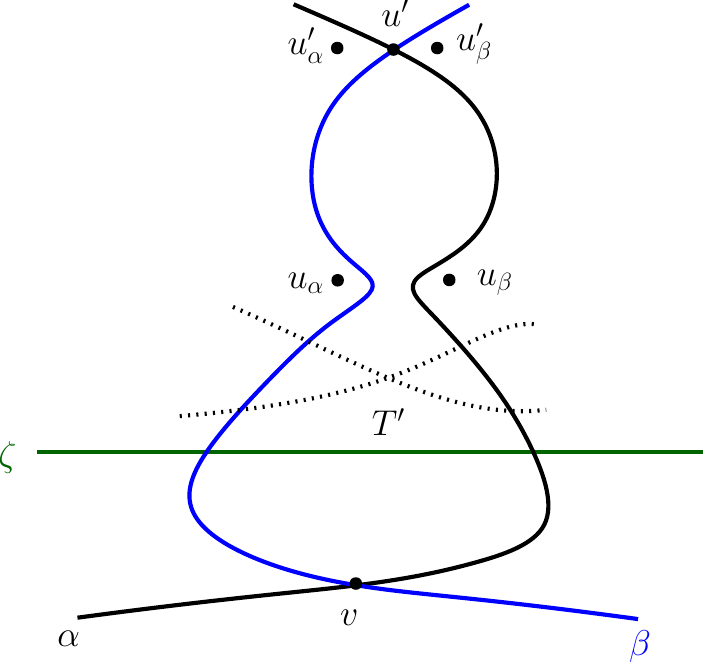}
        \label{fig:triangleuntangle_twice2}
    \end{subfigure}
    \begin{subfigure}{0.3\textwidth}
        \centering
        \includegraphics[scale=0.3]{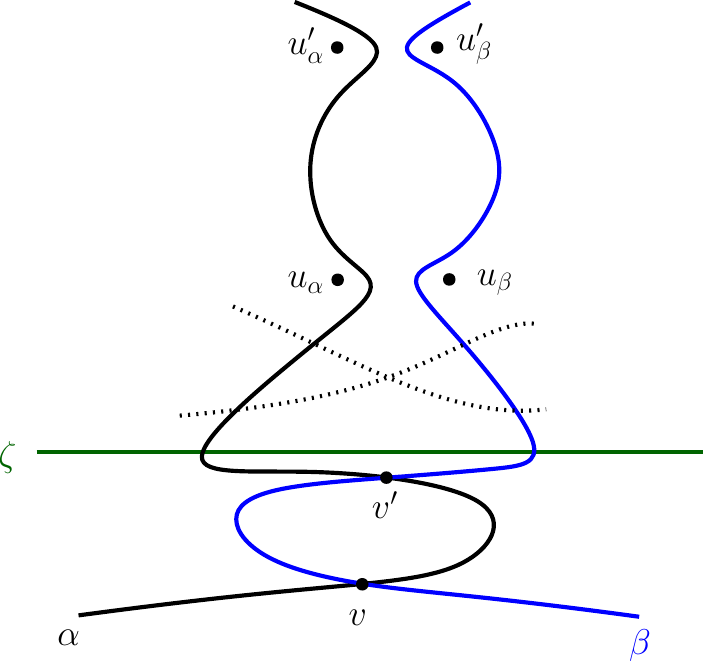}
        \label{fig:triangleuntangle_twice3}
    \end{subfigure}
    \caption{Illustration of the operation of \emph{minimal triangle untangling on $T$ twice}. The vertices $u$ and $u'$ gets mapped to $v$ and $v'$ respectively which lies to the other region defined by $\zeta$. Observe that the cell containing the reference point $u_\alpha$ also contains the reference point $u'_\alpha$, and the same is true for the cell with reference points $u_\beta$ and $u'_\beta$. Therefore, we can refer to these cells unambiguously as the ``the cell containing reference point $u_\alpha$ or
    $u_\beta$''.}
    \label{fig:doubletrianglebypass}
\end{figure}

\section{Sweeping Non-piercing Regions}
In the following, we untangle a (not-necessarily) minimal digon defined analogously to the untangling of a minimal digon (See Section~\ref{sec:basicops}). Recall that a negative lens is a digon that is not contained in regions bounded by the curves defining it. A {\em minimal negative lens} refers to a negative lens which does not contain another negative lens.
\begin{claim}
\label{claim:neglensbypass}
Let $L$ be a minimal negative lens in $\tilde\gamma$ formed by curves $\alpha$ and $\beta$. 
Let $\Gamma'$ be the modified set of curves obtained by untangling $L$ (which modifies only $\alpha$ and $\beta$). Then, $\Gamma'$ is a non-piercing set of curves and the number of vertices in $\mathcal{A}(\Gamma')$ is $2$ less than the number of vertices in $\AG$. 
\end{claim}
\begin{proof}
Let $u$ and $v$ be the intersection points of $\alpha$ and $\beta$ on the boundary of $L$ so that $\alpha$ is oriented from $u$ to $v$. Note that then, $\beta$ is oriented from $v$ to $u$. Upon untangling $L$ both $\alpha$ and $\beta$ lose the intersections $u$ and $v$ which are consecutive intersections along the two curves but appear in  reverse order along their orientations. Thus, after untangling the sequences of intersections of $\alpha$ and $\beta$ along the two curves appear in reverse-cyclic order implying that they remain non-piercing by Lemma~\ref{lem:revcyclic}. 

We now need to show that any other curve 
 $\tau\in\Gamma\,\setminus\,\{\alpha,\beta\}$ 
remains non-piercing with respect to the modified $\alpha$ and $\beta$ after untangling $L$. 
The portion of $\tau$ within $L$
consists of a collection of disjoint segments with end-points on the boundary of $L$.
If all such segments have one end-point on $\alpha$ and the other on $\beta$, 
then by an argument similar to the proof of Lemma~\ref{lem:digonbypass}, $\tau$ remains non-piercing with respect to $\alpha$ and $\beta$. Now, consider the case where a segment of $\tau$ within $L$ has two end points on $\beta$ (an analogous argument works for $\alpha$). 
 Then $\tau$ and $\beta$ form a digon $D$ in $L$.
Since $L$ is a minimal negative lens, $D$ lies outside $\tilde\beta$ and inside $\tilde{\tau}$.
Let $x$ and $y$ be the vertices of $D$ so that $\beta$ is oriented from $x$ to $y$. Then, note that
$\tau$ is also oriented from $x$ to $y$. This implies, by Lemma~\ref{lem:revcyclic}, that $\beta$ and $\tau$ do not have any other intersections apart from $x$ and $y$, and the remaining portion of $\tau$, excluding the portion on the boundary of $D$, lies in $\tilde{\beta}$. This means that $\tau$ cannot have any other intersections with the boundary of $L$. 
On untangling $L$, $\beta$ and $\tau$ do not intersect any more and are therefore non-piercing. 
The arc of $\beta$ on $L$ is replaced by the arc of $\alpha$ on $L$. 
The modified $\alpha$ and $\tau$ now have two more intersections namely $x$ and $y$ which are consecutive along both curves and appear in opposite order along their orientations implying that the sequences of intersections along them remain reverse-cyclic which implies by Lemma~\ref{lem:revcyclic}, that the two curves are non-piercing. Since no other curve is
modified, the resulting set of curves $\Gamma'$ is non-piercing and the arrangement $\mathcal{A}(\Gamma')$ has two fewer vertices than $\AG$.

As a result of untangling $L$, we lose the two vertices $u$ and $v$. For every other vertex $x$ on the portion on the boundary of $L$, defined by $\alpha$ and another curve $\mu\in\Gamma$, there
is a corresponding vertex at the same position in $\mathcal{A}(\Gamma')$ defined by the modified $\beta$ and $\mu$, and similarly
for each vertex $y$ on $L$ defined by $\beta$ and another curve $\mu\in\Gamma$, there is a
vertex $y'$ in $\mathcal{A}(\Gamma')$ defined by the modified $\alpha$ and $\mu$. All other 
vertices not on the boundary of $L$ are also vertices in $\mathcal{A}(\Gamma')$.
Since $\mathcal{A}(\Gamma')$ does not have any other vertices, $\mathcal{A}(\Gamma')$ has two fewer
vertices than $\AG$.
\end{proof}

\label{sec:nonpiercing}

\np*
\hide{
\begin{theorem}
\label{thm:mainthm}
Let $\Gamma$ be a finite non-piercing arrangement of  curves. Given any $\gamma \in \Gamma$ and a point $P \in \tilde{\gamma}$, we can sweep 
$\Gamma$ with $\gamma$ so that at any point of time during the sweep, the curves remain non-piercing and $P$ remains in $\tilde{\gamma}$. 
\end{theorem}
}
\begin{proof}
Let us call a sweeping operation valid if it does not move $P$ out of $\tilde\gamma$. 
Note that the operations of taking a new loop or vertex bypassing are always valid since we can implement them in such a way that $P$ is not moved out of $\tilde\gamma$. The only possible sweeping operations that may move $P$ out of $\tilde{\gamma}$ are therefore the operations of bypassing a digon cell or a triangle cell.

If we can always find a valid sweeping operation, 
we can continue applying them, eventually ending up in a situation where no pair of curves intersects in the interior of $\tilde{\gamma}$, while maintaining the non-piercing property (by Lemma~\ref{lem:ops}).
Thus, for contradiction, assume that there exists a set of curves with no valid sweeping operations.
Among such sets, let $\Gamma$ be a {\em non-sweepable set} (i.e., no valid sweeping operations apply) which is {\em simplest} in the following sense:  $\AG$ lexicographically minimizes $(m, n)$ where $m$ is the number of curves in $\Gamma$ that lie in the interior of $\tilde{\gamma}$ and $n$ is the number of vertices in $\Gamma$ that lie in $\tilde{\gamma}$ (including those that lie on $\gamma$). 
If no pair of curves in $\Gamma$
intersects more than twice in $\tilde{\gamma}$, then by the result of Snoeyink and Hershberger, (Theorem~\ref{thm:snoeyink}), there is at least one valid sweeping operation. 

The result of Snoeyink and Hershberger can also be applied if some curve $\alpha$ intersects $\gamma$ in more than $2$ points, but no pair of curves intersects more than twice in $\tilde{\gamma}$, since we can treat each segment of $\alpha$ in $\tilde{\gamma}$ as a distinct curve and apply their result. We may thus assume that no curve intersects $\gamma$ in more than $2$ points.

Since we assumed that no sweep operations are valid, there must be a pair of curves in $\Gamma\,\setminus\,\{\gamma\}$ that intersects $3$ or more times in $\tilde{\gamma}$, and hence form a negative lens inside $\tilde{\gamma}$. To see this, note that since the curves are non-piercing, by Lemma~\ref{lem:revcyclic}, the intersection points of any pair of curves form  
reverse-cyclic sequences. This means that they intersect alternately in lenses and negative lenses. Thus, any pair of curves intersecting more than twice
in $\tilde{\gamma}$ must form a negative lens in $\tilde{\gamma}$.

Let $L$ be a minimal negative lens in $\tilde{\gamma}$ defined by curves $\alpha$ and $\beta$
in $\Gamma$.
We can assume that $L$ is
visible from $\gamma$ (i.e., some point on the boundary of $L$ is visible from $\gamma$) as otherwise we can untangle it to obtain a simpler non-sweepable set.  We will now show that untangling $L$ yields a simpler non-sweepable set by showing that after the untangling, the modified arrangement does not contain any bypassable vertex, or sweepable digon or triangle cells, contradicting the minimality of $\Gamma$.  

Let $u$ and $v$ be the two intersection points of $\alpha$ and $\beta$ on the boundary of $L$ so that $\alpha$ is oriented from $u$ to $v$ and $\beta$ is oriented from $v$ to $u$. 
Suppose now that we untangle $L$. Note that this keeps the arrangement non-piercing by Claim~\ref{claim:neglensbypass}.
The vertices $u$ and $v$ are {\em lost} as a result of the untangling of $L$. 
We show that no vertex becomes bypassable as a result of this untangling. 
Suppose for contradiction that there was a vertex in the modified arrangement that became bypassable.
Then, its bypassability curve in the modified arrangement must lie in the modified $\tilde{\alpha}$ and
$\tilde{\beta}$, and go through the bounded region defined by $L$ in the original arrangement (before $L$ was bypassed).
This implies that either $u$ or $v$ was visible in the original arrangement which 
in turn contradicts our assumption that $\Gamma$ was non-sweepable. Figures~\ref{fig:BypassingU1} and~\ref{fig:BypassingU2}
shows this for a vertex $x_{\alpha}$ on $\alpha$.
\hide{
The only vertices that can potentially become bypassable are those that lie in $L$. 
Let $x_\alpha$ be a vertex on the boundary of $L$ and that becomes bypassable. 
We assume without loss of generality that it lies on the modified curve $\alpha$. 
Its bypassibility curve then {\em arrives} at $x_\alpha$ from the left of the modified curve $\alpha$ 
(i.e., from the interior of the new $\tilde{\alpha}$) which means that such a curve must pass arbitrarily close to either $u$ or $v$. 
Assume that it passes close to $u$ (see Figure~\ref{fig:BypassingU1}), the other case being analogous.
We can then modify it to terminate at $u$ (as shown in Figure~\ref{fig:BypassingU2}) 
so that it arrives at $u$ from the left of both $\alpha$ and $\beta$ 
(i.e., from the interior of both $\tilde{\alpha}$ and $\tilde{\beta}$) in the original arrangement. 
This shows that $u$ was bypassable before we untangled the lens $L$ - which by assumption was not the case. 
Similarly the vertices on the curve $\beta$ cannot become bypassable as a result of untangling $L$. 
An analogous argument also shows that vertices lying in the interior of $L$ 
(in the original arrangement) cannot become bypassable after $L$ is untangled.
}

\begin{figure}
\begin{subfigure}{0.45\textwidth}
    \begin{center}
        \includegraphics[scale=0.27]{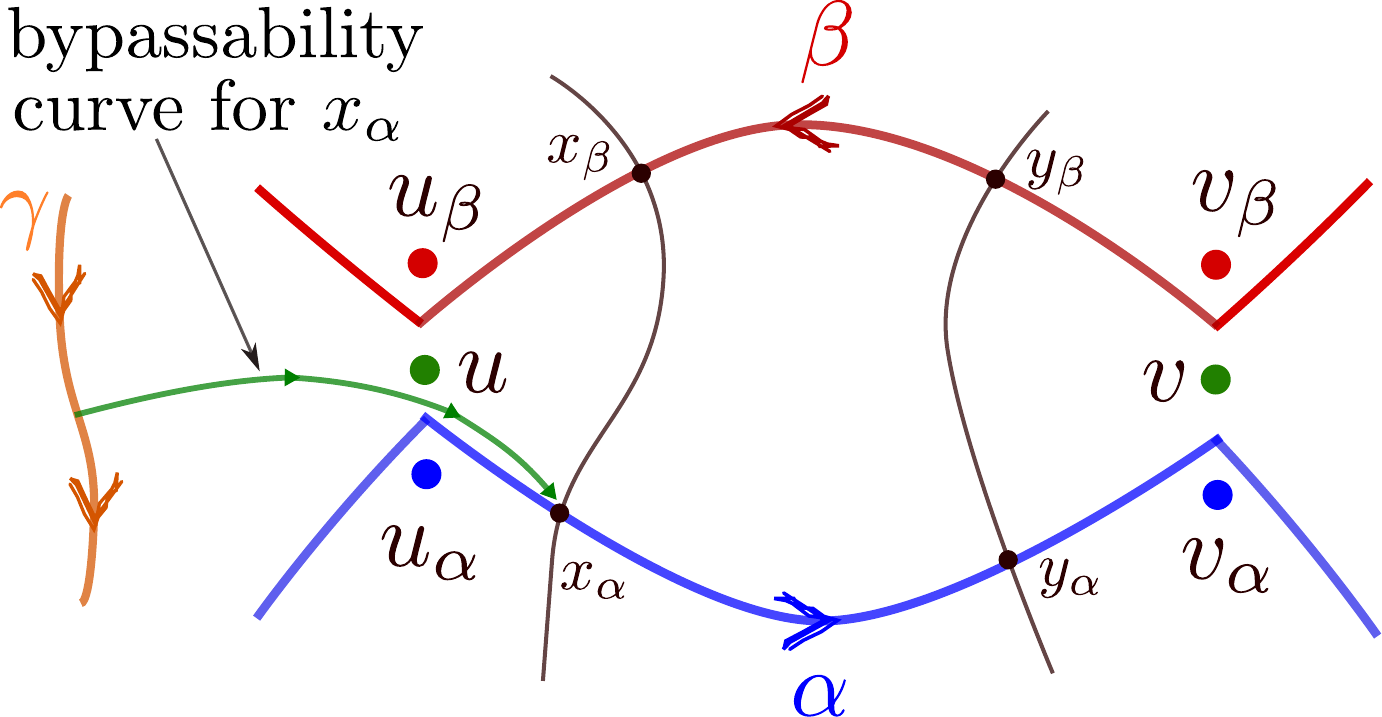}
    \subcaption{\centering Bypassability curve for $x_\alpha$.}
     \label{fig:BypassingU1}
    \end{center}
\end{subfigure}
\hspace{0.4cm}
\begin{subfigure}{0.45\textwidth}
    \begin{center}
        \includegraphics[scale=0.27]{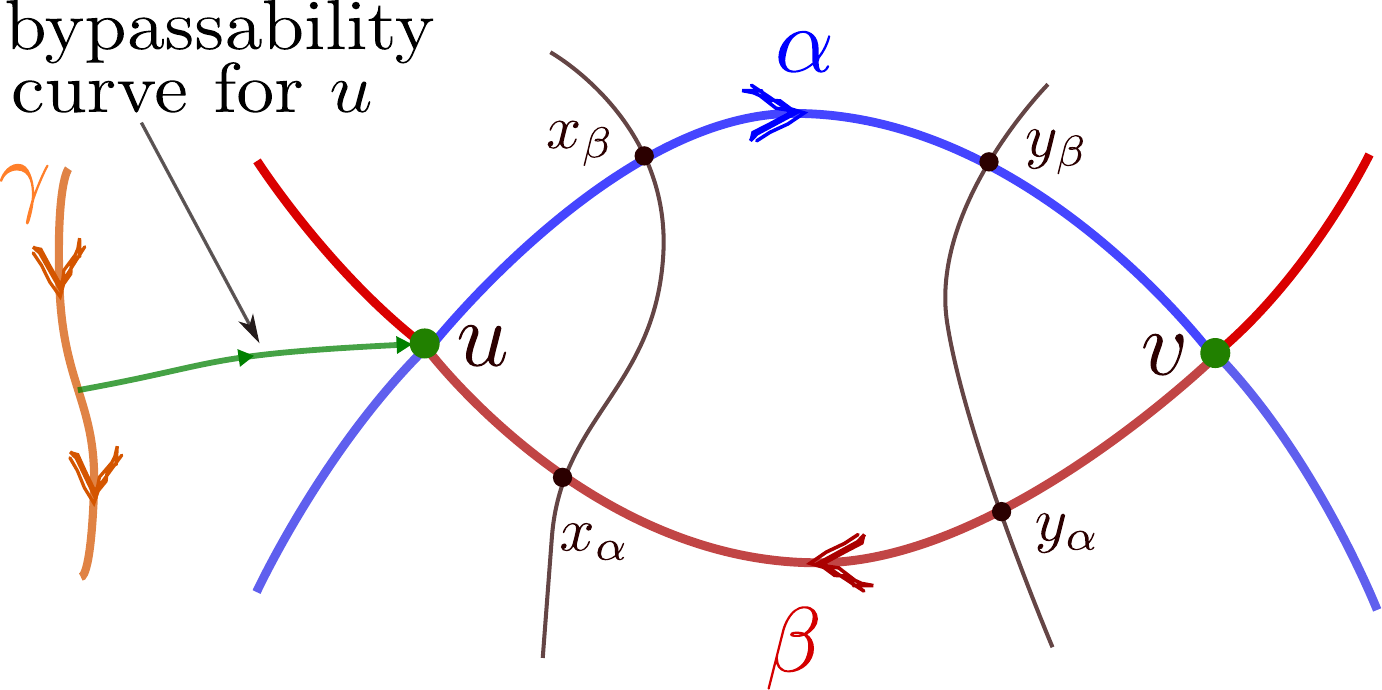}
    \subcaption{\centering Bypassability curve for $u$.}
    \label{fig:BypassingU2}
    \end{center}
\end{subfigure}
 \label{fig:BypassingU12}
 \caption{\centering If a vertex becomes bypassable after we untangle $L$, the corresponding bypassability curve can be modified to a bypassability curve for $u$ in the original arrangement.}
\end{figure}
\begin{figure}
    \centering
    \includegraphics[scale=0.27]{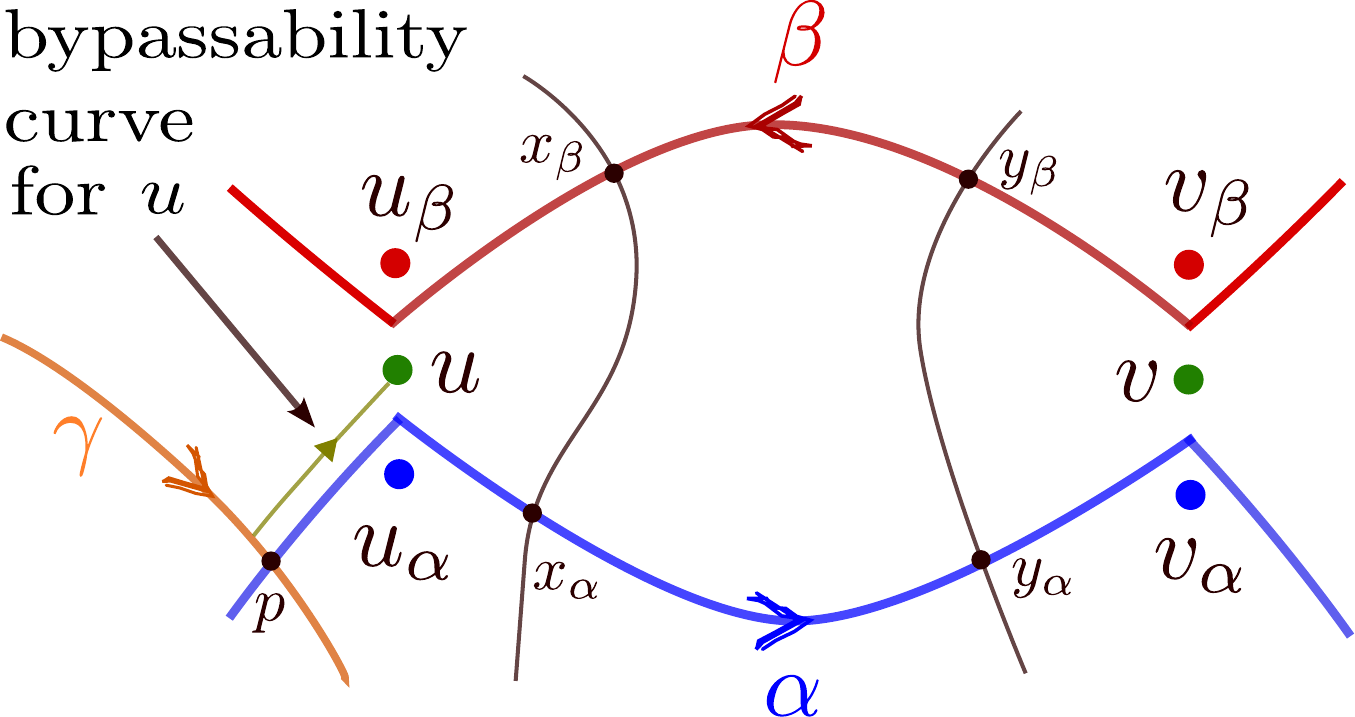}
    \caption{\centering If a cell with $u_\alpha$ is a digon cell or a triangle cell, then it was bypassable in $L$.}
    \label{fig:BypassingU3}
\end{figure}

\hide{Change arrow directions in figure 9a,b}
\hide{Add the case for vertices in $L$ formed by two other curves intersecting in $L$.}

We now show that the new cells created as a result of untangling $L$ 
cannot be digon cells or triangle cells on $\gamma$. 
The only new cells that could potentially be digon cells or triangle cells after untangling $L$ 
are those that contain one of the points $u, v, u_\alpha, u_\beta, v_\alpha$ or $v_\beta$. 
The other new cells may have $\alpha$ replaced by $\beta$ (or vice versa) 
on their boundary as result of untangling $L$ but this does not change their shape or whether they are on $\gamma$. 
Since they were not digon cells or triangle cells on $\gamma$ before, they are not digon cells or triangle cells after untangling $L$. 

The new cells containing $u$ or $u_\alpha$ cannot become a digon cell or a triangle cell on $\gamma$.
To see this consider the point $p$ which is the first intersection point of $\alpha$ and $\gamma$
that we arrive at if we start at $u$ on the original curve $\alpha$ and walk along it 
in the direction opposite to its orientation. If a new cell containing $u$ or $u_\alpha$ is a
digon cell or a triangle cell, then the point $p$ (shown in 
Figure~\ref{fig:BypassingU3}) must be a vertex of the cell and therefore in the original arrangement the arc of $\alpha$ from $p$ to $u$ along its orientation must be an edge in $\AG$. 
This implies that in $\AG$, we could construct a bypassability 
curve for $u$ as follows: start at a point on $\gamma$ arbitrarily close to $p$ that lies in the interior of $\tilde\alpha$ and follow $\alpha$ closely until $u$ and terminate there - 
thus arriving at $u$ from the interior of both $\tilde\alpha$ and $\tilde\beta$ (see Figure~\ref{fig:BypassingU3}). 
This implies that $u$ was bypassable in the original arrangement contradicting our assumptions.
By symmetry, the cells containing any of the other points $u_\beta, v, v_\alpha$ and  $v_\beta$ also cannot be digon cells or triangle cells.
This shows that $u$ must have been bypassable before we untangled $L$, contradicting our assumption that
$\AG$ was a {\em simplest} non-sweepable arrangement. 
\end{proof}

{\bf Remark.} While our result states that we can sweep by continuously {\em shrinking} $\tilde\gamma$ to a specified point $P$ in $\tilde{\gamma}$, the proof can be easily modified to show that we can 
sweep by {\em expanding} $\tilde\gamma$
so that a specified point $Q$ remains outside $\tilde{\gamma}$. If $Q$ is chosen to be a point at infinity, then this shows that we can sweep by expanding $\gamma$ to infinity
retaining the property that the curves are non-piercing.  The only change
required in the proof is that a vertex $q$ defined by curves $\alpha$ and $\beta$ is said to be bypassable if there is a visibility curve $\tau$ from a point on $\gamma$ to $q$ s.t. the interior of $\tau$ lies outside $\tilde{\gamma}, \tilde{\alpha}$, and $\tilde{\beta}$.

\section{Alternate Proof of Snoeyink-Hershberger Result}
\label{sec:snoeyinkthm}
In this section, we give an alternate proof of the result of Snoeyink and Hershberger 
stated in Theorem~\ref{thm:snoeyink}. While our proof
follows their general framework, we partition
into cases differently and the analysis of the cases is relatively simpler. 

Let $\Gamma$ be a set of pseudocircles and let $P$ be a specified point in $\tilde{\gamma}$.
Our goal is to sweep $\AG$ with a sweep curve $\gamma \in \Gamma$ so that the curves remain a set of pseudocircles throughout the process (except a finite set of instants in time) and $P$ remains inside $\tilde\gamma$ 
until we have modified $\gamma$ so that no curve in 
$\Gamma\,\setminus\,\gamma$ intersects $\tilde{\gamma}$. 
We do not use the operation of bypassing a visible vertex, as this violates the
property that any pair of the curves intersect at $0$ or $2$ points. 
We restrict ourselves to the remaining sweeping operations, namely: 
$(i)$ take a new loop,
$(ii)$ bypass a digon cell, and $(iii)$ bypass a triangle cell. 
By Corollary~\ref{cor:two}, the curves remain a set of pseudocircles on applying any one of these operations.

A sweeping operation is said to be valid if it does not move $P$ out of $\tilde\gamma$. 
A cell in the arrangement is said to be \emph{sweepable} if it is a digon cell or a triangle cell on $\gamma$
that does not contain $P$. Otherwise, it is non-sweepable.
If at any point in time, there is a sweepable cell, i.e., there is a valid sweeping operation,
 we say that the arrangement is \emph{sweepable}. Otherwise,
we say that it is non-sweepable.

We will show that any set $\Gamma$ of pseudocircles with a sweep curve $\gamma \in \Gamma$ and
a point $P\in\tilde{\gamma}$ is sweepable. 
For contradiction, assume that this is not true, and among arrangements that are non-sweepable consider the {\em simplest} one in the following sense: it lexicographically minimizes the tuple $(n, \ell)$ where  $n = |\Gamma|$ and $\ell$ is the number of intersection points (among curves in $\Gamma$) lying
in $\tilde{\gamma}$ (including those on $\gamma$). 

\begin{observation}
\label{obs:nonsweepable}
In a simplest non-sweepable arrangement, 
all curves must intersect the sweep curve $\gamma$.
\end{observation}

To see the above, note that if there is a curve $\alpha$ that does not intersect $\gamma$, then it must lie entirely in the interior of $\tilde\gamma$ or the exterior of $\tilde\gamma$. If it lies in the exterior of $\tilde\gamma$, then removing it yields a simpler non-sweepable arrangement. Similarly, if it lies in the interior of $\tilde{\gamma}$, it cannot be visible from $\gamma$ as otherwise the there is a valid sweep operation (namely `taking a new loop') implying that the arrangement is sweepable. On the other hand, if it is not visible from $\gamma$, removing it yields a simpler non-sweepable arrangement. Thus, the existence of any such curve $\alpha$ yields a contradiction to the assumption that we have a simplest non-sweepable arrangement.\\

{\em Broad idea.} In the restricted setting where each pair of curves in $\Gamma\,\setminus\,\{\gamma\}$
intersects at most once inside $\tilde{\gamma}$, we can show that the arrangement is sweepable. Therefore, a simplest
non-sweepable arrangement must have at least one pair of curves that intersect twice (i.e., they form a digon) in $\tilde{\gamma}$.
In this case, we carefully modify the arrangement by untangling a minimal digon or a minimal triangle (defined in Section~\ref{sec:basicops})
to obtain a simpler non-sweepable arrangement, thus arriving at a contradiction.\\

We start with the following definition that is borrowed from the paper of Snoeyink and Hershberger~\cite{SnoeyinkH90}.
Let $T$ be a triangle defined by curves $\alpha$ and $\beta$ on $\gamma$.
We say that $T$ is a \emph{half-triangle} with edge $\alpha$ if the interior of the side of the triangle defined by $\alpha$ is not intersected by any curve in $\Gamma \,\setminus\, \{\alpha\}$. Figure~\ref{fig:halftriangle} shows a half-triangle with edge $\alpha$.

\begin{figure}[ht!]
\begin{center}
\includegraphics[width=2in]{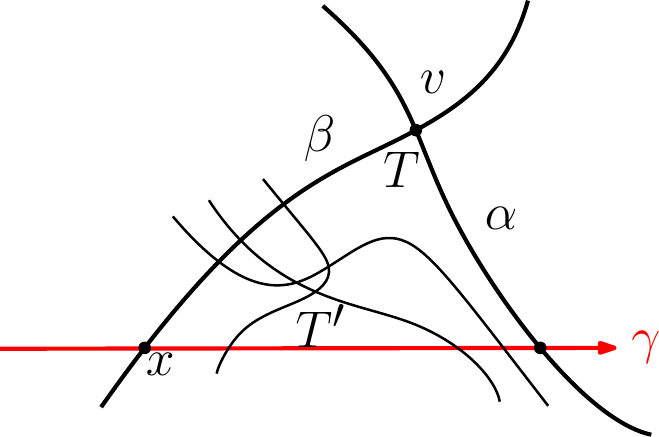}
\caption{\centering $T$ is half-triangle with edge $\alpha$. The triangle cell $T'$ lies in $T$.}
\label{fig:halftriangle}
\end{center}
\end{figure}

\begin{restatable}{lem}{onceintersect}
\label{lem:onceintersect}
Let  $\Gamma$ be a set of pseudocircles with sweep curve $\gamma\in\Gamma$ s.t. all curves in $\Gamma\,\setminus\,\{\gamma\}$ intersect $\gamma$, and
any pair of curves in $\Gamma\,\setminus\,\{\gamma\}$ 
intersect at most once in $\tilde{\gamma}$. Then, there is a valid sweepable cell on $\gamma$, i.e., there
is either a digon cell or a triangle cell on $\gamma$ that does not contain $P$.
\end{restatable}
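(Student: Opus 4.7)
The strategy is to reduce the problem to a chord arrangement inside $\tilde\gamma$ and then show that such an arrangement always contains a sweepable cell on $\gamma$ avoiding the prescribed point $P$. Since each $\alpha \in \Gamma\setminus\{\gamma\}$ is 2-intersecting with $\gamma$, meets $\gamma$ by hypothesis, and does so transversally, a parity argument via the Jordan curve theorem forces $\alpha$ and $\gamma$ to intersect in exactly two points. Hence $\alpha \cap \tilde\gamma$ is a single simple arc---a chord---with both endpoints on $\gamma$, and the $n := |\Gamma|-1$ chords pairwise meet at most once inside $\tilde\gamma$. The problem thus becomes: any simple chord arrangement with $n \geq 1$ chords in the disk $\tilde\gamma$ contains a digon or triangle cell on $\gamma$ avoiding $P$.

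I would prove the stronger statement that any such arrangement has at least two sweepable cells on $\gamma$ (so that at least one avoids $P$) by strong induction on $n$. The base case $n=1$ is immediate, since a single chord creates two digon cells on $\gamma$. For the inductive step, I split into cases on the cyclic sequence of the $2n$ endpoints along $\gamma$. In Case~1, two cyclically consecutive endpoints $p_i, p_{i+1}$ belong to the same chord $\alpha$. Then the short $\gamma$-arc between $p_i$ and $p_{i+1}$ together with $\alpha$ bounds a digon cell, since any other chord entering this region would either need an endpoint on the empty short arc or would have to cross $\alpha$ twice, violating the single-crossing hypothesis. In particular $\alpha$ has no crossing with any other chord at all, so deleting it yields a chord arrangement of $n-1$ chords in which all cells other than those incident to $\alpha$ are unchanged; the inductive hypothesis supplies another sweepable cell in the unaffected portion of $\tilde\gamma$, giving at least two in total.

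In Case~2, no two consecutive endpoints share a chord; a short argument shows every chord must then have at least one crossing inside $\tilde\gamma$, since otherwise peeling off nested chords would eventually produce two consecutive endpoints on the same chord. For each $\gamma$-arc $(p_i, p_{i+1})$ with $p_i \in \alpha$ and $p_{i+1} \in \beta$, the adjacent cell is a triangle on $\gamma$ provided $\alpha$ and $\beta$ cross inside $\tilde\gamma$ and no third chord intrudes into the wedge bounded by the $\gamma$-arc and the arcs of $\alpha, \beta$ up to their crossing. I handle the remaining sub-cases by an innermost-wedge descent: any intruding chord $\delta$ produces a strictly smaller candidate wedge still adjacent to $\gamma$, and since each descent step strictly reduces the number of chords crossing the wedge, the descent terminates at an intrusion-free triangle cell on $\gamma$. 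Running this descent from two different $\gamma$-arcs yields two distinct sweepable cells. The main obstacle I expect is formalizing the descent in Case~2 and ensuring the two triangle cells thus obtained are truly distinct; the cleanest way is to initiate the descents from $\gamma$-arcs adjacent to opposite endpoints of a common chord, which by the Case~2 assumption must exist.
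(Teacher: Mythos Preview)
Your reduction to a chord arrangement is correct, and Case~1 goes through. The gap is in the Case~2 descent. First, the chords adjacent to a given $\gamma$-arc need not cross even in Case~2: on a $10$-point circle take chords $\{0,5\},\{1,4\},\{2,7\},\{3,8\},\{6,9\}$; no two consecutive endpoints share a chord and every chord has a crossing, yet the chords at $0$ and $1$ do not cross, so there is no initial wedge at the arc $(0,1)$. Second, even when $\alpha,\beta$ do cross, an intruding chord $\delta$ must enter the wedge through one of $\alpha,\beta$ and exit through the other (it cannot meet the $\gamma$-arc, which carries no further endpoints, and it cannot re-cross the side it entered). The region cut off toward $\gamma$ is then a quadrilateral bounded by $\gamma,\alpha,\delta,\beta$, not a smaller triangle adjacent to $\gamma$; there is no evident way to continue the descent on the same $\gamma$-arc. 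Starting the two descents from arcs at opposite endpoints of a common chord does not repair either problem.

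The paper avoids both issues by bringing $P$ into the argument rather than proving a two-cell lower bound. For each chord $\alpha$ it takes $I_\alpha$ to be the $\gamma$-arc whose digon with $\alpha$ misses $P$, and picks $\alpha$ minimal under arc containment. Minimality forces every other chord meeting this digon $D$ to cross both $\alpha$ and $I_\alpha$ exactly once. Taking $\beta$ to be the first such chord along $\alpha$ from one endpoint yields a \emph{half-triangle} $T\subset D$ whose $\alpha$-side is uncrossed. Now the descent is well-posed: any chord entering $T$ does so through the $\beta$-side and must exit through the $\gamma$-side (the $\alpha$-side is clean and $\beta$ can be crossed only once), producing a strictly smaller half-triangle with clean $\beta$-side. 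Induction on the number of chords meeting the $\gamma$-side terminates at a triangle cell inside $D$, which automatically avoids $P$. The ingredient your sketch is missing is exactly this half-triangle invariant (one guaranteed uncrossed side), which is what channels each intruding chord out through $\gamma$ and keeps the descent triangular.
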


In order to prove the above lemma, we need the following claim which though not explicitly stated is also proved as part of Lemma 3.1 of~\cite{SnoeyinkH90}.
\begin{claim}
\label{claim:halftriangle}
If curves $\alpha$ and $\beta$ form a half-triangle $T$ on $\gamma$ with edge $\alpha$, so that each curve in $\Gamma \,\setminus\, \{\gamma\}$ has at most one intersection with the interior of the side of $T$ on $\gamma$, then 
there is a triangle cell $T' \subseteq T$ on $\gamma$. 
\end{claim}

\begin{proof}
We prove by induction on the number of curves intersecting the side of $T$ on $\gamma$. If no curve intersects the interior of this side,
since the curves pairwise intersect in at most one point in $\tilde{\gamma}$, it follows that $T$ is a triangle-cell.
Assume that the claim holds for any half-triangle with less than $k$ curves (where $k \geq 1$) intersecting the interior of the side of $T$ on $\gamma$.
Suppose now that there are $k$ curves intersecting the interior of the side of $T$ on $\gamma$. 

Let $x$ denote the intersection of $\beta$ and $\gamma$ and let  
$v$ denote the intersection of $\alpha$ and $\beta$ on the boundary of $T$. 
For every
curve $\delta'$ intersecting the interior of the side $T$ on $\gamma$, $\delta'$ intersects the side at one point,
but $\delta'$ and $\gamma$ intersect at two points. Therefore, since $T$ is a half-triangle with edge $\alpha$,
$\delta'$ intersects the side of $T$ on $\beta$. Walking from $x$ to $v$ along $\beta$, 
let $\delta$ be the first curve
intersecting $\beta$. The curve $\delta$ intersects the interior of the side of $T$ on $\gamma$. 
\hide{
Since every curve has two intersection points on $\gamma$ and at most one intersection point on the side of $T$ on $\gamma$, and $k\ge 1$, there must be such a curve $\delta$.
Since the curves are $1$-intersecting in $\tilde{\gamma}$, and $\alpha$ is an edge of the half-triangle $T$, 
this implies that $\delta$ intersects the side of $T$ on $\gamma$.}
Now, $\beta$ and $\delta$ form a half-triangle $T''$ on $\gamma$ with edge $\beta$ and s.t. 
less than $k$ curves intersect the interior of the side of $T''$ on $\gamma$. Since every curve in $\Gamma$ has at most one intersection point on the side of $T$ on $\gamma$, the same holds for $T''$. Hence,
by the inductive hypothesis, there is a triangle cell $T'$ on $\gamma$ that lies in $T''$. Since $T''$ lies in $T$, $T'$
is the claimed triangle cell in $T$. 
\end{proof}

\begin{proof}[Proof of Lemma~\ref{lem:onceintersect}]
Each curve $\alpha\in\Gamma$ has two intersections with $\gamma$ which split $\gamma$ into two arcs. One of these arcs along 
with $\alpha$ bounds a portion of $\tilde\gamma$ that does not contain $P$. We denote this arc by $I_{\alpha}$.
The containment order on the arcs induces a partial order $\prec$ on the curves in 
$\Gamma\,\setminus\,\{\gamma\}$ ($\alpha\prec\beta\Leftrightarrow I_{\alpha}\subseteq I_{\beta}$).
Let $\alpha$ be a minimal curve with respect to $\prec$.
If the digon $D$ defined by $\alpha$ and $\gamma$ is a digon cell, then we are done. 
Otherwise, let $i$ and $j$ be the vertices of $D$ so that $I_\alpha$ is the arc from $i$ to $j$ along $\gamma$ in the direction of $\gamma$'s orientation.
Let $\beta$ be the
first curve intersecting $\alpha$ when following the arc of $\alpha$ on the boundary of $D$ from $i$ to $j$.
Since each curve in $\Gamma\,\setminus\,\{\gamma\}$ intersects $\gamma$ twice, and pairwise intersect at most once in $\tilde{\gamma}$, $\beta$ intersects both $\alpha$ and $\gamma$ exactly once on the boundary of $D$.
Thus, $\alpha,\beta$ and $\gamma$ form a half-triangle $T$ with edge $\alpha$. Note that $T$ lies within $D$, and therefore does not contain $P$. By Claim~\ref{claim:halftriangle}, $T$ contains a triangle cell, and this completes the proof.
\end{proof}


The proof of the main result in this section (Theorem~\ref{thm:snoeyink}) follows the structure of Theorem~\ref{thm:mainthm}, 
i.e., the proof for non-piercing curves.
Unlike in the non-piercing case however, since the curves pairwise intersect in $0$ or $2$ points,
no pair of curves form a negative lens in $\tilde\gamma$. 

Let $\Gamma$ be a set of pseudocircles so that $\AG$ is the simplest non-sweepable arrangement and let $L$ be a minimal digon (lens or lune) in $\tilde\gamma$
defined by curves $\alpha,\beta\in\Gamma$. Since $\AG$ is assumed to be non-sweepable, $\gamma \notin \{\alpha, \beta\}$.
We split the proof into two cases: In the first case, there is a curve $\delta\in\Gamma\,\setminus\,\{\alpha,\beta\}$ intersecting $L$.
In this case, we show that we can modify the arrangement to obtain a simpler non-sweepable arrangement, thus arriving
at a contradiction. If no curve in $\Gamma\,\setminus\,\{\alpha,\beta\}$ intersects $L$, then we split the proof
further into sub-cases: since $\AG$ was assumed to be a simplest non-sweepable arrangement, untangling $L$ results in
a sweepable arrangement. Thus, there is either a sweepable digon cell or a triangle cell on $\gamma$ (i.e.,not containing $P$). If 
a newly created cell is a sweepable digon cell on $\gamma$,
we show that removing the other curve forming the digon cell
results in a simpler non-sweepable arrangement. 
If untangling $L$ creates a sweepable triangle cell, 
we show that a more elaborate modification yields a simpler non-sweepable arrangement.
Thus, in all cases, we obtain a contradiction, 
and hence there is always either a sweepable digon cell or a triangle cell on $\gamma$.

\begin{theorem}[\cite{SnoeyinkH90}]
\label{thm:snoeyink}
Given any set of pseudocircles $\Gamma$, a curve $\gamma\in\Gamma$ 
and a point $P\in\tilde{\gamma}$, $\Gamma$ can be swept by $\gamma$
using the following operations:  i) bypassing a digon cell, ii) bypassing a triangle cell and iii) taking a new loop, so that at any point in time during the sweep, the curves remain pseudocircles (except for a finite set of points in time), and $P$ lies in the interior of $\tilde\gamma$.
\end{theorem}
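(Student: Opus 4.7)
The plan is to argue by contradiction in the same style as the proof of Theorem~\ref{thm:mainthm}: assume there exists an arrangement $(\Gamma,\gamma,P)$ that is non-sweepable using only the three allowed operations, and among all such counterexamples pick one that is simplest in the lexicographic order on the tuple $(c,n,\ell)$ already defined. A first reduction is to deal with curves that do not meet $\gamma$: by taking a loop on any such curve that is visible from $\gamma$ (Corollary~\ref{cor:two} shows the 2-intersection property is preserved), we may assume that every $\alpha\in\Gamma\setminus\{\gamma\}$ intersects $\gamma$. Lemma~\ref{lem:onceintersect} then forces the existence of a pair $\alpha,\beta$ that intersect twice inside $\tilde\gamma$, hence a lens $L$ in $\tilde\gamma$; I pick $L$ to be a minimal lens inside $\tilde\gamma$, i.e.\ one whose interior contains no other lens.

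Split into two cases based on whether a third curve intersects $L$. If some $\delta\in\Gamma\setminus\{\alpha,\beta,\gamma\}$ meets $L$, bypass the minimal lens $L$ using the operation of Section~\ref{sec:basicops}: by Lemma~\ref{lem:digonbypass} and Corollary~\ref{cor:two} the resulting family $\Gamma'$ is still pairwise 2-intersecting, $|\alpha\cap\beta|$ drops by $2$, and no other pair is affected, so $\Gamma'$ is strictly simpler than $\Gamma$ in the measure $(c,n,\ell)$. I claim $\Gamma'$ remains non-sweepable. Following the argument used in the proof of Theorem~\ref{thm:mainthm}, I would show that (a) no cell untouched by the reference points $u_\alpha,u_\beta,v_\alpha,v_\beta$ is newly sweepable, by Lemma~\ref{lem:lenstriangle}, and (b) any new digon or triangle cell on $\gamma$ incident to a reference point can be pulled back by a local retraction to a digon or triangle cell of the original arrangement $\Gamma$ that does not contain $P$, contradicting the non-sweepability of $\Gamma$.

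If no curve crosses $L$, then $L$ itself is a cell of $\mathcal{A}(\Gamma)$ bounded by $\alpha$ and $\beta$. Bypassing $L$ yields a 2-intersecting $\Gamma'$ which is strictly simpler, and by minimality $\Gamma'$ is sweepable. Thus some new sweepable cell on $\gamma$ is created; by Lemma~\ref{lem:lenstriangle} the only candidates are the cells containing the reference points $u_\alpha,u_\beta,v_\alpha,v_\beta$. In the digon sub-case the new digon cell is bounded by $\gamma$ together with one of $\alpha,\beta$, say $\alpha$; I plan to delete $\alpha$ from $\Gamma$ to obtain a still strictly simpler non-sweepable arrangement, arguing that any sweepable cell in $\Gamma\setminus\{\alpha\}$ would, after reinserting $\alpha$, correspond to a sweepable cell in $\Gamma$ (the removed arcs of $\alpha$ merge cells only in ways that cannot destroy existing triangle/digon cells on $\gamma$). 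In the triangle sub-case the new triangle cell on $\gamma$ has sides contributed by $\alpha$ (or $\beta$) and by some third curve $\delta$; here I plan to bypass the corresponding minimal triangle in $\Gamma$ using Lemma~\ref{lem:trianglebypass} (which preserves 2-intersection by Corollary~\ref{cor:minT2int}) and then apply a sequence of digon and triangle cell bypasses, using Lemma~\ref{lem:triangleaftertriangle} to control where new sweepable cells can appear, so as to reach a strictly simpler non-sweepable arrangement.

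The main obstacle throughout is the bypassability-transfer argument: after each structural modification one must verify both that no stray sweepable digon or triangle cell on $\gamma$ is created, and symmetrically that any sweepable cell in the modified arrangement descends to a sweepable cell of the original, so that non-sweepability propagates backward. The triangle sub-case of the no-crossing case is the hardest, because it requires two successive bypassing operations and one must track how the reference points introduced by the minimal lens bypass interact with the minimal triangle bypass that follows; this is where Lemma~\ref{lem:triangleaftertriangle} does most of the work by restricting the locations at which the retraction argument has to be carried out.
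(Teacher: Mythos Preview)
Your Case~1 (some third curve $\delta$ meets the minimal lens $L$) has a genuine gap. You propose to bypass $L$ and then argue, by analogy with the proof of Theorem~\ref{thm:mainthm}, that the resulting arrangement $\Gamma'$ is still non-sweepable because any new digon or triangle cell on $\gamma$ near a reference point can be ``pulled back'' to a sweepable cell of the original $\Gamma$. But that retraction argument in Theorem~\ref{thm:mainthm} does \emph{not} produce a digon or triangle cell in the original arrangement; it produces a \emph{bypassability curve} for the vertex $u$, i.e.\ it shows $u$ was a bypassable visible vertex. That is a valid sweeping operation in the non-piercing setting but is \emph{not} among the three operations allowed here, precisely because bypassing a visible vertex destroys the 2-intersection property. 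Concretely, after you bypass a positive lens $L$ (which is what you have in the 2-intersecting case; there are no negative lenses), the cell containing the reference point $u_\alpha$ can genuinely be a new triangle cell on $\gamma$ with vertices $a_1$, $y$, $d_2$ (in the paper's notation), and there is no corresponding digon/triangle cell in $\Gamma$ to retract to: in $\Gamma$ the vertex $u$ sits between $a_1$ and $y$ and breaks that cell up.

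The paper's route in Case~1 is therefore quite different from what you propose. It \emph{accepts} that bypassing $L$ makes $\Gamma'$ sweepable, uses the location of the new triangle cell to identify a minimal triangle on $\gamma$ in the original arrangement, bypasses that triangle, and from the structure of the resulting (again sweepable) arrangement extracts a curve $\delta$ all of whose intersections inside $\tilde\gamma$ lie in $L$ (Claim~\ref{claim:deltagood}). It is this $\delta$, not the lens, that is removed from the \emph{original} $\Gamma$ to obtain a strictly simpler non-sweepable arrangement. Your Case~2 outline is closer to the paper, but the triangle sub-case is also underspecified: a single minimal-triangle bypass can cascade into further sweepable cells, and the paper handles this with an auxiliary ``parallel curves'' lemma (Lemma~\ref{lem:parallel}) that iteratively cleans up digon/triangle cells along $\alpha$ (or $\beta$) before the final minimal-triangle bypass; Lemma~\ref{lem:triangleaftertriangle} alone does not suffice to control this cascade.
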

\begin{proof}
Suppose the statement is not true. 
Consider a simplest non-sweepable set of pseudocircles $\Gamma$ with sweep curve $\gamma\in\Gamma$ i.e., the simplest arrangement for which the statement of the theorem does not hold.
By Observation~\ref{obs:nonsweepable}, every curve intersects $\gamma$.
If the curves in $\Gamma\,\setminus\,\{\gamma\}$ pairwise intersect at most once in $\tilde{\gamma}$, then by 
Lemma~\ref{lem:onceintersect}, the arrangement is sweepable.  
The assumption that we have the simplest non-sweepable arrangement thus implies the following:
 $(i)$ there is a pair of curves in $\Gamma\,\setminus\,\{\gamma\}$ intersecting twice in $\tilde{\gamma}$, 
i.e., they form a digon in $\tilde\gamma$, and $(ii)$ there is either 
no digon cell or triangle cell on $\gamma$, or there is exactly one
digon cell or triangle cell on $\gamma$ and this cell contains $P$.

Let $L$ be a minimal digon contained in $\tilde\gamma$ and suppose that $L$ 
is defined by the curves $\alpha, \beta\in\Gamma\,\setminus\,\{\gamma\}$ and has vertices $u$ and $v$. 
$L$ must be visible from $\gamma$, as otherwise we can untangle it 
(by Lemma~\ref{lem:digonbypass}, the arrangement remains non-piercing) and obtain  a simpler non-sweepable arrangement. 
Let $a_1$ and $a_2$ be the intersection points of $\alpha$ with $\gamma$ s.t. the points $a_2, v, u, a_1$ lie
in cyclic order along $\alpha$ (not necessary along $\alpha$'s orientation). 
Similarly, let $b_1$ and $b_2$ be the intersection points of $\beta$ and $\gamma$ 
so that $b_1, u, v, b_2$ appear in cyclic order along $\beta$ (not necessary along $\beta$'s orientation).  See Figure~\ref{fig:sn1}. 
Let $u_\alpha, u_\beta, v_\alpha$ and $v_\beta$ be the reference points for untangling the digon $L$. 

\begin{figure}[ht!]
\centering
 \begin{subfigure}{0.5\textwidth}
\begin{center}
 \includegraphics[scale=.6]{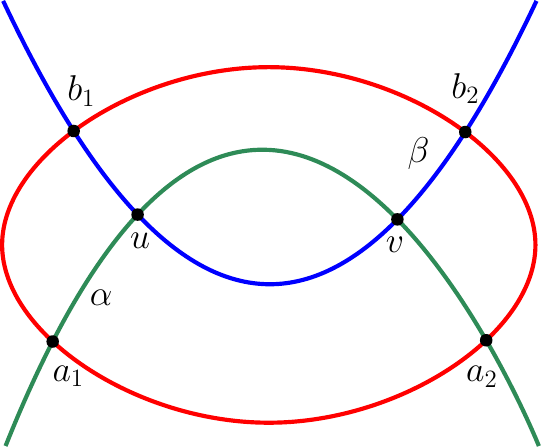}
\subcaption{\centering The curves $\alpha$ and $\beta$ form the minimal digon $L$.}
\label{fig:sn1}
\end{center}
\end{subfigure}
\begin{subfigure}{0.5\textwidth}
\begin{center}
 \includegraphics[scale=0.6]{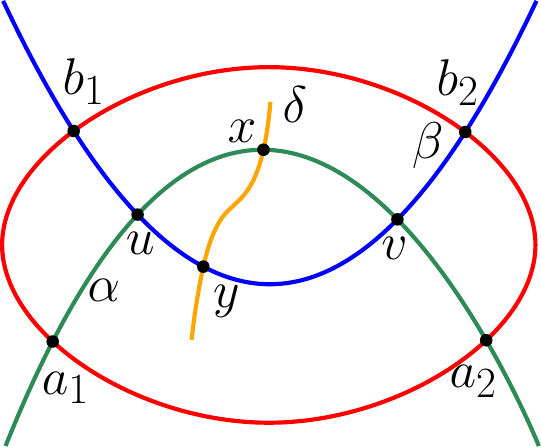}
    \subcaption{\centering $\delta$ is the first curve intersecting $L$ when walking on $\beta$ from $u$ to $v$.}
       \label{fig:12a}
    \end{center}
 \end{subfigure}
\caption{\centering  $L$ and $\delta$}
 \label{fig:sn2}
 \end{figure}
 
Untangling $L$ should create a sweepable cell 
as otherwise the simpler arrangement obtained on untangling $L$ remains non-sweepable contradicting 
the minimality of $\Gamma$. By Lemma~\ref{lem:lenstriangle}, 
the sweepable cell created must contain one of the reference points $u_\alpha, u_\beta, v_\alpha$ or $v_\beta$. 
By symmetry, we can assume without loss of generality that the cell containing $u_\alpha$ is  sweepable. 

We now split the proof into two cases depending on whether $L$ is \emph{non-empty}, i.e., there is a curve $\delta\in\Gamma\,\setminus\,\{\alpha,\beta,\gamma\}$ that intersects $L$, 
or is \emph{empty}, i.e., there is no such curve. 

\smallskip\noindent
{\bf Case 1: $L$ is non-empty.}

Since there is at least one curve intersecting $L$, let $\delta$ be the curve that has 
the first intersection $y$ with $\beta$  
as we walk on $\beta$ from $u$ to $v$. 
The intersection of $\delta$ with $L$ consists of disjoint segments with end-points on the boundary of $L$. 
Since $L$ is a minimal digon, each such segment has one end-point on $\alpha$ and one on $\beta$.
Let $x$ be the other end-point of the segment whose one end-point is $y$. Note that $x$ lies on $\alpha$. 
See Figure~\ref{fig:12a}. Let $d_1, d_2$ be the intersections of $\delta$ and $\gamma$ 
so that $d_1, x, y, d_2$ appear in cyclic order 
along $\delta$. We do not show $d_1$ and $d_2$ in Figure~\ref{fig:12a} since at this point it is not clear where they lie relative to the other points. 

\begin{figure}[!ht]
\begin{subfigure}{0.3\textwidth}
\begin{center}
 \includegraphics[scale=.38]{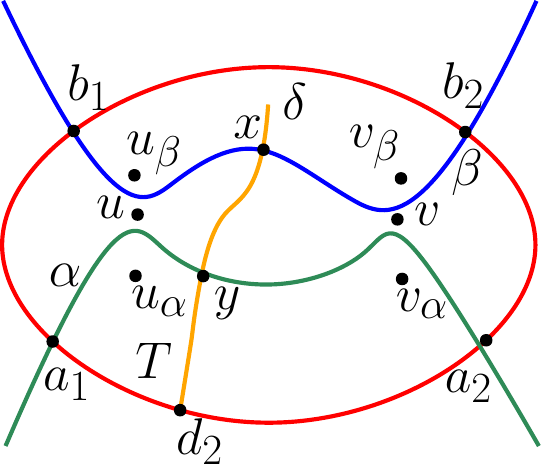}
\subcaption{\centering A triangle $T$ formed on bypassing $L$.}
\label{fig:bpd1d2a}
\end{center}
\end{subfigure}\hspace{0.3cm}
\begin{subfigure}{0.3\textwidth}
\begin{center}
 \includegraphics[scale=0.38]{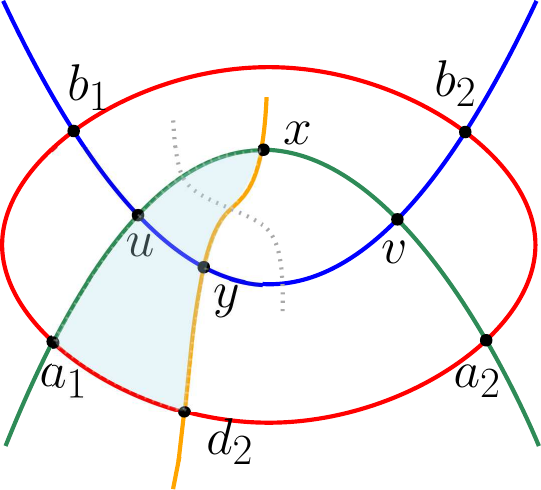}
    \subcaption{\centering The minimal triangle shaded.}
       \label{fig:bpd1d2b}
    \end{center}
 \end{subfigure}\hspace{0.3cm}
 \begin{subfigure}{0.3\textwidth}
\begin{center}
 \includegraphics[scale=0.38]{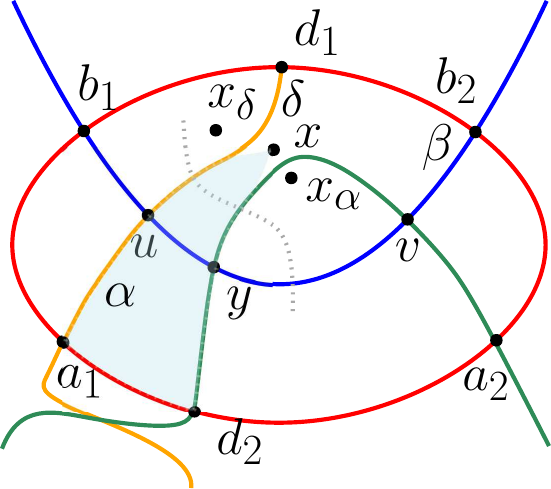}
    \subcaption{\centering Bypassing the minimal triangle.}
   \label{fig:bpd1d2c}
       \end{center}
 \end{subfigure}\hspace{0.3cm}
 \caption{\centering The operations when $L$ is non-empty.}
 \label{fig:bpd1d2}
 \end{figure}
 
\begin{claim}
\label{claim:deltagood}
The intersections of $\delta$ with any of the other curves in $\Gamma\,\setminus\,\gamma$
within $\tilde{\gamma}$ lie in $L$. In other words, there are no intersection points on 
$\delta$ that lie in the interior of $\tilde{\gamma}$ but outside $L$. 
\end{claim}

\begin{proof}
\hide{
Consider the initial arrangement where $L$ has not yet been bypassed as shown in Figure~\ref{fig:12a}. 
}
The cell $C$ containing $u_\alpha$ after untangling $L$ cannot be a digon cell on $\gamma$ 
since such a digon cell must be the digon defined by 
$\alpha$ and $\gamma$ and containing $u_\alpha$. However that digon contains the point 
$y$ and cannot be a cell in the arrangement. 
$C$ can however be a triangle cell on $\gamma$ and in this case it is defined by the curves $\alpha, \gamma$ and $\delta$ 
and has vertices $a_1, y$ and $d_2$ (Figure~\ref{fig:bpd1d2a}). 
Note that this requires that $d_2$ lies on the arc of $\gamma$ between $a_1$ and $a_2$ along $\gamma$'s orientation.
Also note that if this is the case, then in the original arrangement (before untangling $L$ - see Figure~\ref{fig:12a}), 
there are no intersection points between $a_1$ and $u$ on $\alpha$, and between $u$ and $y$ on $\beta$. 
Since $L$ is a minimal digon, any curve intersecting $L$ is composed of disjoint segments, where each segment
has a vertex on the segment of $\alpha$ between $u$ and $v$, and a vertex on the segment of $\beta$ between $u$ and $v$.
The minimality of $L$ also implies that any two curves intersecting $L$ intersect at most once inside $L$. Now, 
by the choice of $\delta$, any segment of a curve intersecting $L$ between $u$ and $x$ has its other end-point on $\beta$ between $y$ and $v$ (as shown in Figure~\ref{fig:bpd1d2b}).
This implies that triangle $T$ 
with vertices $x, a_1$ and $d_2$ is a minimal triangle on $\gamma$ in the original arrangement (Figure~\ref{fig:bpd1d2b}).
Therefore, by Lemma~\ref{lem:trianglebypass} we can untangle $T$, 
so that the resulting set of curves (Figure~\ref{fig:bpd1d2c}) are still pseudocircles and their arrangement  
has one less vertex (since we lose the intersection $x$) inside $\tilde{\gamma}$. Since the new arrangement is simpler, 
it must be sweepable which means there must be a digon cell or a triangle cell on $\gamma$ 
not containing $P$ created as a result of untangling $T$. By Lemma~\ref{lem:taftert},
only the cells containing the reference points $x_\alpha$ or $x_\delta$ can possibly be a digon cell or a triangle cell on $\gamma$. However, the
cell containing $x_\alpha$ cannot be on $\gamma$ as it is contained in a digon formed by the modified $\alpha$, and $\beta$ that lies in the interior of $\title{\gamma}$. Therefore, only the cell containing
$x_\delta$ could be a digon cell or a triangle cell on $\gamma$. However, this cell can lie on $\gamma$  only if $d_1$ lies on the arc of $\gamma$ between $b_2$ and $b_1$ (see Figure~\ref{fig:bpd1d2c}).
In particular, this implies that the interior of the arc of $\delta$ between $x$ and $d_1$ 
is not intersected by any curve in the original arrangement $\Gamma$. Since we assumed that the cell containing $u_\alpha$ is a triangle, the portion of $\delta$ between $y$ and $d_2$ is also not intersected by any other curve. 
This implies that the intersection of $\delta$ with any other curve in the original arrangement lies within the digon $L$, 
thus completing the proof. 
\end{proof}

Since the original arrangement of $\Gamma$ (Figure~\ref{fig:12a}) was not sweepable, all cells on $\gamma$ apart from the cell containing $P$ must have at least 4 vertices. Now, consider the arrangement of $\Gamma'=\Gamma\,\setminus\,\{\delta\}$ obtained by removing the curve $\delta$.
We claim that $\Gamma'$ remains non-sweepable. By Claim~\ref{claim:deltagood} the arc of $\delta$ between $d_2$ and $y$ separates two cells $C_1$ and $C_2$. Either one of them contains the point $P$ (which makes it non-sweepable), or both cells have at least four vertices.
Thus, the new cell created by merging these two cells upon removal of $\delta$ is non-sweepable.
An analogous statement holds for the arc of $\delta$ between $d_1$ and $x$.
Finally note that the remaining new cells created as a result of removing the portion of $\delta$ between $x$ and $y$ are not cells on $\gamma$ since they lie within the digon $L$. 
Thus, $\Gamma'$ is a simpler non-sweepable arrangement contradicting the minimality of $\Gamma$.


\hide{
\begin{figure}[ht!]
\begin{subfigure}{0.4\textwidth}
\begin{center}
 \includegraphics[scale=.4]{emptyLensCase1.pdf}
\subcaption{A minimal $L$ that is empty.}
\end{center}
\end{subfigure}\hspace{1.2cm}
\begin{subfigure}{0.4\textwidth}
\begin{center}
 \includegraphics[scale=0.4]{emptyLensCase12.pdf}
    \subcaption{A minimal lens that is non-empty.}
    \end{center}
 \end{subfigure}
 \caption{The two cases in the proof. Either $L$ is empty or non-empty.}
 \label{fig:emptyLensCase1}
 \end{figure}
}
\hide{
 \begin{figure}[ht!]
\begin{subfigure}{0.4\textwidth}
\begin{center}
 \includegraphics[scale=.4]{emptyLensCase2.pdf}
\subcaption{A minimal $L$ that is empty.}
\end{center}
\end{subfigure}\hspace{1.2cm}
\begin{subfigure}{0.4\textwidth}
\begin{center}
 \includegraphics[scale=0.4]{emptyLensCase22.pdf}
    \subcaption{A minimal lens that is non-empty.}
    \end{center}
 \end{subfigure}
 \caption{The two cases in the proof. Either $L$ is empty or non-empty.}
 \label{fig:emptyLensCase2}
 \end{figure}

\begin{figure}[ht!]
\centering
\begin{subfigure}{0.4\textwidth}
\begin{center}
 \includegraphics[scale=.4]{emptyLensCase3.pdf}
\subcaption{A minimal $L$ that is empty.}
\end{center}
\end{subfigure}\hspace{0.3cm}
\begin{subfigure}{0.4\textwidth}
\begin{center}
 \includegraphics[scale=0.4]{emptyLensCase32.pdf}
    \subcaption{A minimal lens that is non-empty.}
    \end{center}
 \end{subfigure}\hspace{0.3cm}
 \begin{subfigure}{0.4\textwidth}
\begin{center}
 \includegraphics[scale=0.4]{emptyLensCase34.pdf}
    \subcaption{A minimal lens that is non-empty.}
    \end{center}
 \end{subfigure}\hspace{0.3cm}
  \begin{subfigure}{0.4\textwidth}
\begin{center}
 \includegraphics[scale=0.4]{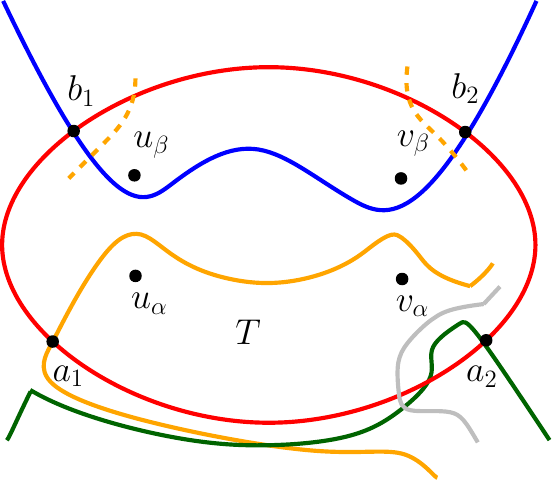}
    \subcaption{A minimal lens that is non-empty.}
    \end{center}
 \end{subfigure}\hspace{0.3cm}
 \caption{The two cases in the proof. Either $L$ is empty or non-empty.}
 \label{fig:emptyLensCase3}
 \end{figure}
}
\hide{

\begin{figure}[!ht]
\begin{center}
    \includegraphics[width=2in]{triangleBeta.pdf}
    \caption{The separating chord $\tau$ and witness curves $\eta$ and $\zeta$.}
    \label{fig:chord}
\end{center}
\end{figure}
}

\hide{
\begin{figure}[ht!]
\begin{center}
    \includegraphics[width=2in]{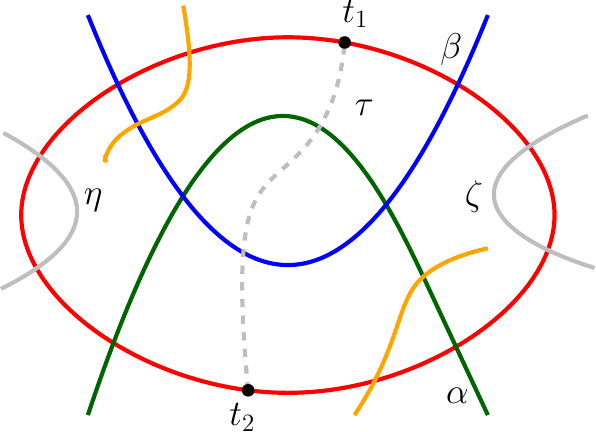}
    \caption{The separating chord $\tau$ and witness curves $\eta$ and $\zeta$.}
    \label{fig:chord}
\end{center}
\end{figure}
\todo{add figures of $C_{\alpha}, C_{\beta}$ after bypassing.}
We start with the following two observations. First,
suppose that on bypassing $L$, both $\alpha$ and $\beta$ define triangle/digon cells on $\gamma$. 
Let $C_\alpha$ and $C_\beta$ be the triangle/digon cells involving $\alpha$ and $\beta$ respectively. 
We pick a point $t_1$ in the interior of the side of $C_{\alpha}$ on $\gamma$ and similarly a point $t_2$ in the interior of the side of $C_{\beta}$ on $\gamma$. 
We then construct a curve $\tau$ that lies in $C_\alpha \cup L \cup C_\beta$ and joins $t_1$ and $t_2$ so that its only intersection with curves in $\Gamma\,\setminus\,{\gamma}$ are with $\alpha$ and $\beta$ on the boundary of $L$. We call such a curve a 
 \emph{separating chord}. See Figure~\ref{fig:chord}.
\todo{add $a_1, a_2, b_1, b_2$ in Figure~\ref{fig:chord}, and reverse labels of $t_1$ and $t_2$. also
add labels $a_1, a_2, b_1, b_2, u, v$ in the figure.}

Second, since $\Gamma$ is a non-sweepable arrangement, the triangle with vertices $a_1, u, b_1$ cannot be a triangle cell on $\Gamma$.
Similarly, the triangle with vertices $b_2, v, a_2$ cannot be a triangle cell in $\Gamma$. Therefore, there is a curve 
$\zeta\in\Gamma\,\setminus\,\{\alpha,\beta\}$ that intersects $\gamma$
on the arc $a_1$ to $b_1$. Similarly, there is a curve 
$\eta\in\Gamma\,\setminus\,\{\alpha,\beta\}$ that intersects $\gamma$
on the arc $b_2$ to $a_1$ on $\gamma$. 

Since the separating chord $\tau$ \emph{separates} $\zeta$ and $\eta$, and therefore $\zeta\neq\eta$. We call $\zeta$ and $\eta$ the \emph{witness curves}.
    
We are now ready to deal with the cases where one of $\alpha$ or $\beta$ defines a digon cell or a triangle cell after bypassing $L$.

\begin{figure}[ht!]
\begin{subfigure}{0.4\textwidth}
\begin{center}
 \includegraphics[scale=.4]{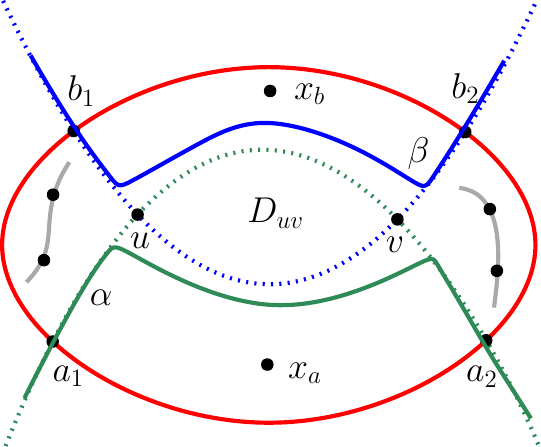}
\subcaption{Both $\alpha$ and $\beta$ are digons on bypassing $L$. The existence of a separating chord $\tau$ and the non-sweepability of $\Gamma$ implies
there are two curves $\eta$ and $\zeta$ on either arc of $\gamma$ defined by $\tau$.}
\end{center}
\end{subfigure}\hspace{1.2cm}
\begin{subfigure}{0.4\textwidth}
\begin{center}
 \includegraphics[scale=0.4]{digonAlphaBeta.pdf}
    \subcaption{On removing both $\alpha$ and $\beta$, the points $t_1$ and $t_2$ are on disjoint arcs of $\gamma$ on $C$. Hence, $C$ has at least 4 sides.}
    \end{center}
 \end{subfigure}
 \caption{If both $\alpha$ and $\beta$ define digon cells on $\gamma$ on bypassing $L$, we remove both $\alpha$ and $\beta$ to obtain a smaller non-sweepable arrangement.}
 \label{fig:digonAlphaBeta}
 \end{figure}

If $\alpha$ defines a digon cell $D_1$, and $\beta$ defines a digon cell $D_2$ on bypassing $L$. Then, the modified
$\alpha$ separates $D_1$ from a cell $C_1$, and the modified $\beta$ separates $D_2$ from a cell $C_2$. Observe that $C_1=C_2$,
and on removing $\alpha$ and $\beta$ from the arrangement, there is one new cell that is obtained by merging $D_1, D_2$ and $C_1$.
Note that the separating chord $\tau$ lies entirely in this new cell thus formed. Since there are witness chords $\zeta$ and $\eta$, 
this implies that this new cell has $\gamma$ contributing at least 2 arcs. Hence, it cannot be a triangle or digon cell. As a consequence,
we have obtained a simpler non-sweepable arrangement, a contradiction. See Figure~\ref{fig:digonAlphaBeta}.
So, suppose that $\alpha$ defines a digon cell $D$ and $\beta$ defines a triangle cell $T$. Again, 
because there is a separating chord $\tau$, and
witness curves $\eta$ and $\zeta$, any cell outside the lens formed by $\beta$ and $\gamma$ has at least
4 sides. Therefore, the only digon cells or triangle cells are in the lens defined by $\beta$ and $\gamma$, i.e., $T$.
Let $b_1$ be a vertex of $T$, and $\delta'$ be the third side of $T$, and let $w'$ be the intersection point of $\delta'$ and
$\gamma$ on $T$. Let $z'$ be the vertex of $T$ defined by $\beta$ and $\delta'$.
Now, following the arc of $\beta$ from $z$ to $b_2$, let $\delta$ be the last curve s.t. $\delta$ is \emph{parallel} to $\delta'$.
The curves are parallel if the sequence of intersection points on $\beta$ from $z$ to $b_2$ and on $\gamma$ from $b_1$ to $b_2$
are the same, and the segments $z'$ to $w'$ do not intersect. See Figure~\ref{fig:triangleBeta0}.
\begin{figure}[ht!]
\begin{subfigure}{0.4\textwidth}
\begin{center}
 \includegraphics[scale=.4]{triangleBeta0.pdf}
\subcaption{$\beta$ is a side of a triangle on removing $\alpha$. The orange arcs
    are the parallel arcs that are arcs in the minimal triangle formed by $\beta, \gamma$ and $\delta$.}
\label{fig:triangleBeta0}
\end{center}
\end{subfigure}\hspace{1.2cm}
\begin{subfigure}{0.4\textwidth}
\begin{center}
 \includegraphics[scale=0.4]{triangleBeta.pdf}
    \subcaption{Bypass the minimal triangle with base $\delta$. The curves $\gamma$ and $\delta$ intersect beyond $\delta$. }
    \end{center}
   \label{fig:triangleBeta2}
 \end{subfigure}
 \caption{If $\beta$ defines a triangle on removing $\alpha$, then we consider the minimal triangle defined by $\gamma,\beta$ and $\delta$. Bypassing
 this minimal triangle yields a smaller non-sweepable arrangement.}
 \label{fig:triangleBeta}
 \end{figure}
 
Let $z$ be the intersection point of $\beta$ and $\delta$. Observe that the triangle $T'$ defined by the vertices $b_1, z$ and $w$ is a
minimal triangle with base $\delta'$. We can therefore bypass $T'$ and we lose the 
vertex $b_1$ after bypassing. Let $b_1^{\beta}$ and $b_1^{\gamma}$ be the two reference points for bypassing $T'$
By Lemma~\ref{lem:triangleaftertriangle}, the only cells that can become triangles after bypassing are the ones containing $b_1^{\beta}$ or $b_1^{\gamma}$.
The cell containing $b_1^{\beta}$ lies outside $\tilde{\gamma}$, and therefore, we only consider the cell containing the reference point
$b_1^{\gamma}$. Observe that since there is a reference curve $\tau$, and witness curves $\zeta$ and $\eta$, the cell containing $b_1^{\gamma}$ has at least
4 sides and is therefore neither a digon cell nor a triangle cell. See Figure~\ref{fig:triangleBeta}.

\begin{figure}[ht!]
\centering
\begin{subfigure}{0.35\textwidth}
\begin{center}
 \includegraphics[scale=.4]{trianglealphabeta_before.pdf}
\subcaption{Initial arrangement}
\end{center}
\end{subfigure}
\begin{subfigure}{0.35\textwidth}
\begin{center}
 \includegraphics[scale=.4]{trianglealphabeta_digonbypass.pdf}
\subcaption{Bypassing the digon}
\end{center}
\end{subfigure}
\begin{subfigure}{0.35\textwidth}
\begin{center}
 \includegraphics[scale=0.4]{triangleAlphaBeta.pdf}
    \subcaption{Bypassing the formed triangles}
    \end{center}
 \end{subfigure}
 \caption{The figure shows the effect of bypassing the minimal triangle formed on $\alpha$ and the minimal triangle formed on $\beta$. The resulting arrangement remains non-sweepable.}
 \label{fig:triangleAlphaBeta}
 \end{figure}

We are finally left with the case when there is a triangle cell on $\alpha$ and on $\beta$ after bypassing $L$. In this case, we find two minimal triangles
$T_1$ and $T_2$ and apply triangle bypassing on both of them simultaneously. Figure~\ref{fig:triangleAlphaBeta} shows this operation. By Lemma~\ref{lem:triangleaftertriangle},
there are only two possible cells that can be triangle cells after bypassing the minimal triangles $T_1$ and $T_2$. 
Again, due to the existence of the separating chord and the witness curves $\eta$ and $\zeta$, the cells that could potentially have become triangles
have at least 4 sides. 
}

\hide{
Since $T$ is a triangle cell, it is trivially a minimal triangle and
we can therefore bypass $T$. 

Let $z$ denote the vertex of $T$ defined by $\beta$ and $\delta$. By Lemma~\ref{lem:}
Let $C_1 = T, C_2, C_3$ and $C_4$ be the cell adjacent to $z$ in counter-clockwise order around $z$ in the arrangement before bypassing $T$. 
We now show that the only cell that can become a triangle on bypassing $T$ is the cell obtained by merging $C_1$ and $C_4$. 
Consider the cell 
The only new cell that can be a triangle is the one formed by merging $C_1$ and $C_4$. Let $\delta_1$ be the third side of this triangle
apart from $\beta$ and $\gamma$. We repeat the process with $\delta$ replaced by $\delta'$ until we obtain an arrangement with no new triangle cells. 
This must happen as we assumed that $\Gamma$ was a non-sweepable arrangement. See Figure~\ref{fig:emptyLensCase3}.


Suppose there is digon cell on $\alpha$ after bypassing $L$.
Then, the arc of $\alpha$ between $a_1$ and $u$ is not intersected by any curve in $\Gamma$. 
Hence, the arc $a_1$ to $u$ separates a cell $C_1$ with at least 4 sides from 
the cell $C$ bounded by the two arcs of $\alpha$ and the arc of $\beta$ on $L$. 
Similarly, the arc
of $\alpha$ between $a_2$ and $v$ separates $C$ from a cell $C_2$ with at least 4 sides. 
Since $\alpha$ and $\beta$ from a lens, $C_1$ and $C_2$ do not have a common side.
Thus, if we remove $\alpha$ 
then in the resulting arrangement $\Gamma'=\Gamma\,\setminus\,\{\alpha\}$, the new cell formed
as a result of merging $C, C_1$ and $C_2$ has at least 4 sides. Since no other cell that
lies outside the lens defined by $\beta$ and $\gamma$ is modified, the digon cells or the triangle cells on $\gamma$
in the arrangement $\Gamma'$ must lie in the lens defined by $\beta$ and $\gamma$. If there are no such cells in $\Gamma'$,
we have obtained a simpler non-sweepable arrangement contradicting the minimality of $\Gamma$. 
Therefore, we can assume that $\Gamma'$ is sweepable and there is a triangle or digon cell in the lens defined by $\beta$ and $\gamma$.

Suppose $\beta$ defines a

Now, suppose that in $\Gamma'$, there is a digon cell $D$ on $\beta$. First, note that
since $\Gamma$ is a non-sweepable arrangement, the triangle with vertices $a_1, u, b_1$ cannot be a triangle cell in $\Gamma$.
Similarly, the triangle with vertices $b_2, v, a_2$ cannot be a triangle cell in $\Gamma$. Therefore, there is a curve 
$\zeta\in\Gamma\,\setminus\,\{\alpha,\beta\}$ that intersects $\gamma$
on the arc $a_1$ to $b_1$. Similarly, there is a curve 
$\eta\in\Gamma\,\setminus\,\{\alpha,\beta\}$ that intersects $\gamma$
on the arc $b_2$ to $a_1$ on $\gamma$. 
Since, both $\alpha$ and $\beta$ became digon cells on bypassing $L$, consider any chord $\tau$ of $\gamma$ 
that lies in $\tilde{\gamma}\cap(\tilde{\alpha}\cup\tilde{\beta})$ with end-points $t_1$ in the interior of the arc $a_2$ to $a_1$ on $\gamma$, 
and $t_2$ in the interior of the arc $b_1$ to $b_2$. It follows that $\tau$
is not intersected by any curve in $\Gamma\,\setminus\,\{\alpha,\beta\}$. 
See Figure~\ref{}.
 Since there is a digon cell on $\beta$, it follows that $\beta$ contributes exactly one arc to a cell $C$ that lies outside this digon cell.
Since no curve crosses $\tau$, the point $t_2$ is on the arc of $\gamma$ on $C$. Consider the arrangement $\Gamma''$ obtained by
removing $\beta$ from $\Gamma'$. In $\Gamma''$, it follows that $t_1$ and $t_2$ lie in the same cell $D\cup C$ obtained on removing $\beta$. 
Hence, $\gamma$ contributes two arcs to this cell. Since this is the only new cell formed in the arrangement, it follows that there
are no triangle/digon cells in $\Gamma''$. We have thus obtained a simpler non-bypassable arrangement, a contradiction.

This implies that on bypassing $L$, both $\alpha$ and $\beta$ cannot defined digon cells. 
We now deal with the case when $\beta$ defines a triangle cell $T$ on $\gamma$ in $\Gamma'$. 
Let $\delta$ denote the curve that defines the third side of $T$.

So, suppose there is a triangle cell $T$ in $\Gamma'$. 
Then, $\beta$ contributes an arc to $T$, and it follows that $T$ lies in the lens formed by $\beta$ and $\gamma$.
Let $\delta$ be the curve contributing the third arc to $T$. To obtain a smaller non-sweepable arrangement in this case,
we apply the following \emph{repeated triangle bypassing process}: 
Let $z$ be the vertex of $T$ defined by $\beta$ and $\delta$. Since $T$ is a triangle cell, it is trivially a minimal
triangle. We can therefore, bypass $T$, and by Lemma~\ref{lem:trianglebypass} the arrangement remains two-intersecting with one fewer vertex. 

Let $C_1 = T, C_2, C_3$ and $C_4$ be the cell adjacent to $z$ in counter-clockwise order around $z$ in the arrangement before bypassing $T$. 
The only new cell that can be a triangle is the one formed by merging $C_1$ and $C_4$. Let $\delta_1$ be the third side of this triangle
apart from $\beta$ and $\gamma$. We repeat the process with $\delta$ replaced by $\delta$ until we obtain an arrangement with no new triangle cells. 
This must happen as we assumed that $\Gamma$ was a non-sweepable arrangement. See Figure~\ref{fig:emptyLensCase3}.

The only case left is when $\alpha$ became a triangle with curve $\delta$ forming the third side. Then, we apply the triangle bypassing process to $\alpha$ and $\delta$.
It follows that at the end of this process, no triangle cell or digon cell is formed and we obtain a simpler non-sweepable arrangement, a contradiction.
}


Before we proceed with the rest of the proof, we need one more technical tool that we describe next.

\smallskip\noindent{\bf Parallel chords in a Digon:}
Let $R$ be a digon bounded by the sweep curve $\gamma$ and another curve $\mu$ with vertices $b_1$ and $b_2$. 
A {\em proper chord} $\lambda$ of $R$ is an arc of another curve $\delta$ which intersects the boundary of $R$ 
exactly once on $\mu$ and once on $\gamma$. 
We say that a sequence of proper chords $\lambda_1, \cdots, \lambda_k$ are parallel 
if for $i=1,\ldots, k-1$, 
the region of $R$ between consecutive chords $\lambda_i$ and $\lambda_{i+1}$ is a four-sided cell in the arrangement that does not contain $P$.
For each $i$, let $\delta_i$ be the curve containing $\lambda_i$ and let $\Delta_\mu$ denote the sequence $\Delta_{\mu}=(\delta_1,\ldots, \delta_k)$. See Figure~\ref{fig:parallelTriangles}.

\begin{lemma}
\label{lem:parallel}
Let $\Delta_\mu = (\delta_1, \cdots, \delta_k)$ be as defined above. 
It is possible to modify the curves in $\Delta_\mu$, possibly discarding some of them, so that i) the modified arrangement is as simple, or simpler than $\Gamma$, ii) In the modified arrangement,
the only new sweepable digon cells or triangle cells created in $\tilde\gamma$ have either $b_1$ or $b_2$ as a vertex. 
\end{lemma}
\begin{proof}
If there are no digon cells or triangle cells on $\mu$, then there is nothing to do. Suppose there is a digon cell on 
$\mu$ formed by curve $\delta_i$. Then, $\Gamma\,\setminus\,\{\delta_i\}$ is a simpler arrangement with fewer
digon cells or triangle cells on $\mu$. Since $\delta_i$ forms a digon cell on $\mu$, it forms a side of a cell
$C$ in $\tilde\gamma\,\setminus\, R$. 
Removing $\delta_i$ therefore decreases the number of sides in
$C$ by 1, and does not affect any other cell in $\tilde\gamma\,\setminus\, R$. Hence, only $C$ can become a triangle.

If $\delta_i$ and $\delta_{i+1}$ form a triangle cell on $\mu$, let $p$ be the vertex defined by
$\delta_i$ and $\delta_{i+1}$ that forms a vertex of this triangle cell. 
Then, the triangle $T$ formed by $\delta_i$
and $\delta_{i+1}$ on $\gamma$ is a minimal triangle with base $\gamma$. 
We apply a minimal triangle untangling on $T$ at most twice, and by Corollary~\ref{cor:minT2int}, 
we obtain a simpler two-intersecting arrangement. 
By Lemma~\ref{lem:taftert}, the only cells
with reference points $u_{\delta_i}$ or $u_{\delta_{i+1}}$ 
can become triangle cells. 
Therefore, the
number of triangle cells on $\mu$ increases by 2, and no new triangle cells are created on $\gamma$. 
If $\delta_i$
is not the first or last curve in the sequence $\Delta$, then consider the two cells with reference points $u_{\delta_i}$
or $u_{\delta_{i+1}}$. If either of the cells is a cell on $\gamma$, then that cell has at least 4 sides: one side on 
$\delta_i$ or $\delta_{i+1}$, one side on $\mu$, one side on $\delta_{i'}$ (where
$i'<i$ or $i'>i+1$) and one side on $\gamma$.

Every such operation yields a simpler arrangement where either the number of curves in $\Delta$
decreases, or the number of intersections between the curves $\Delta$ decreases. Therefore, after a finite
number of iterations, there are no triangle cells or digon cells on $\mu$ except possibly the ones with
$b_1$ or $b_2$ as a vertex. 
\end{proof}

Figures~\ref{fig:parallelTriangles} show the modification
of the curves in $\Delta_\mu$.

\begin{figure}[ht!]
\centering
\begin{subfigure}[!htbp]{0.5\textwidth}
\begin{center}
 \includegraphics[scale=.55]{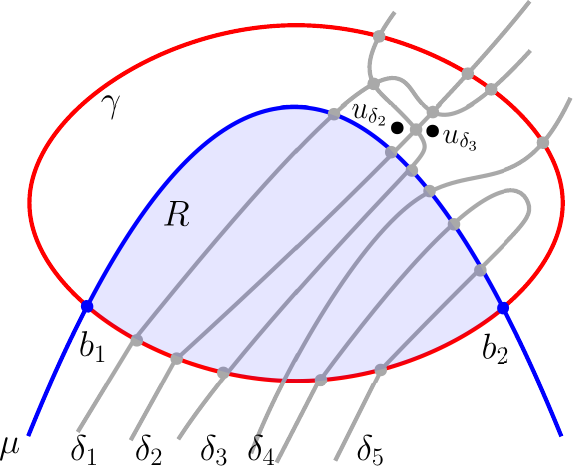}
\subcaption{\centering Parallel curves on $\mu$.}
\end{center}
\end{subfigure}
\begin{subfigure}[!htbp]{0.5\textwidth}
\begin{center}
 \includegraphics[scale=.55]{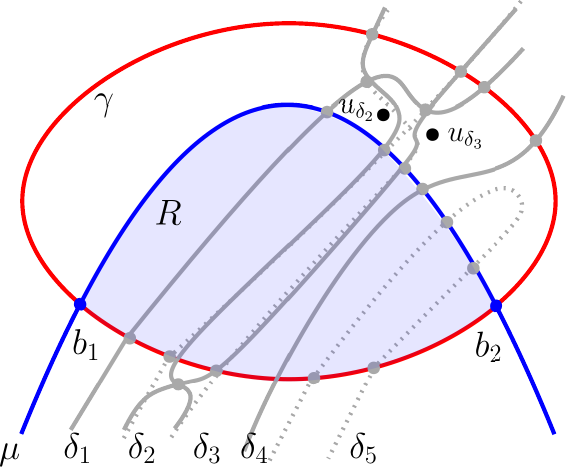}
\subcaption{\centering Triangle bypassing on $\delta_2, \delta_3$, and removing $\delta_5$.}
\end{center}
\end{subfigure}
\caption{\centering Parallel curves on $\mu$, and operations to remove triangles and digons on $\mu$.}
 \label{fig:parallelTriangles}
 \end{figure}

\smallskip\noindent
{\bf Case 2. $L$ is empty.}

This is the more complicated case.
Our plan is to again obtain a contradiction by finding a simpler non-sweepable arrangement.
Recall that our assumption that $\Gamma$ is a simplest non-sweepable arrangement implies that untangling $L$ creates a sweepable 
cell on $\gamma$. 
Let $D_u$ be the cell in the original arrangement $\Gamma$ contained in $\tilde\gamma$ with $u$ as a vertex,
and having portions of the arcs $[a_1, u]$ and
$[b_1, u]$ as two sides. Similarly, let $D_v$ be the cell containing vertex $v$, and having portions of the arcs $[a_2, v]$
and $[b_2,v]$ as sides.

\begin{figure}[ht!]
\centering
\begin{subfigure}{0.4\textwidth}
\begin{center}
 \includegraphics[scale=.6]{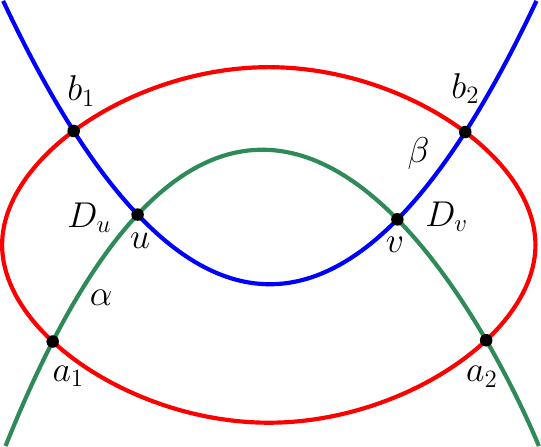}
\subcaption{\centering The cells $D_u$ and $D_v$ adjacent to the vertices $u$ and $v$.}
\label{fig:dudv}
\end{center}
\end{subfigure}\hspace{1.2cm}
\begin{subfigure}{0.4\textwidth}
\begin{center}
 \includegraphics[scale=0.6]{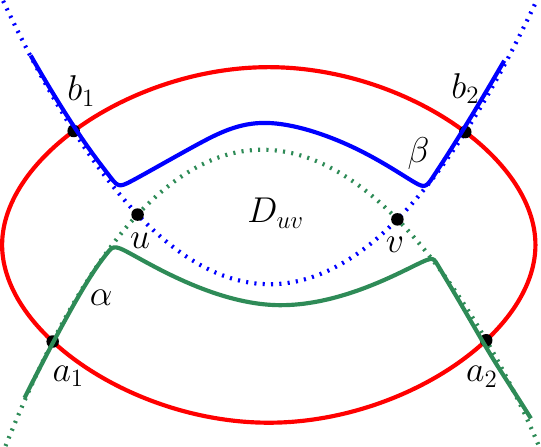}
    \subcaption{\centering The cell $D_{uv}$ obtained after untangling the empty digon $L$. }
       \label{fig:duv}
    \end{center}
 \end{subfigure}
 \caption{\centering  The cell $D_{uv}$
 is not a bypassable cell.}
 \label{fig:dudvduv}
 \end{figure}

Now suppose that we untangle $L$ to obtain a new arrangement. 
In this arrangement, consider the cell $D_{uv}$ that contains the points $u$ and $v$ in its interior.
See Figures~\ref{fig:dudvduv}.

By Lemma~\ref{lem:lenstriangle}, the only digon cells or triangle cells on $\gamma$ can be the cells that contain one of the
reference points $u_{\alpha}, u_\beta, v_\alpha$, or $v_\beta$. This in particular implies that $D_{uv}$ is not a digon cell or a triangle cell on $\gamma$.
Since $L$ was empty, there is a single cell containing $u_\alpha$ and $v_\alpha$. Let us call this cell $C_\alpha$.
Similarly, let $C_\beta$ denote the cell containing the reference points $u_\beta$ and $v_\beta$. 
There are three possibilities for $C_\alpha$ - it can be i) non-sweepable, ii) sweepable digon cell or iii) sweepable triangle cell. 
The same three possibilities exist for $C_\beta$. We now consider all the joint possibilities for $C_\alpha$ and $C_\beta$. 
\begin{itemize}
\item {\em Both $C_\alpha$ and $C_\beta$ are non-sweepable.} In this case the new cells created after untangling $L$ are 
$C_\alpha, C_\beta$ and $D_{uv}$. By assumption, the first two are non-sweepable and by the arguments before $D_{uv}$ 
is also non-sweepable. Thus, the arrangement obtained after untangling $L$ is a simpler non-sweepable arrangement than $\AG$. 

    \item {\em One of $C_\alpha$ or $C_\beta$ is a digon cell, and the other is a non-sweepable cell.}
    Assume without loss of generality that $C_\alpha$ is a digon cell. In this case, we remove the curve $\alpha$ 
    from the original arrangement and claim
    that the resulting simpler arrangement $\mathcal{A}(\Gamma')$ is non-sweepable. The assumption that $C_\alpha$ is a digon cell implies that $D_u$ and $D_v$ are cells on $\gamma$ in the original arrangement $\Gamma$. Let $x_a$ be a reference point in $\tilde{\gamma}$ between $a_1$ and $a_2$. Let
    $C$ be the cell containing the reference point $x_\alpha$.
    We first claim that the cell $C$ is not sweepable. Note that $C$ contains the cells $D_u$ and $D_v$ from the original arrangement. If one of those cells contained $P$, then clearly $C$ is non-sweepable. Therefore, let us assume that neither $D_u$ nor $D_v$ contained $P$.
    This along with the fact that $D_u$ and $D_v$ were non-sweepable in $\Gamma$ implies that both of those cells must have at least four vertices in $\Gamma$. The cell $C$ obtained after removing $\alpha$ therefore has at least $4 + 4 - 4 = 4$ vertices. The ``$-4$" is due to the fact that we lose the four vertices $a_1, u, v$ and $a_2$ on $\alpha$ when we remove it. 
     Thus $\mathcal{A}(\Gamma')$ is a simpler non-sweepable arrangement than $\AG$ as the cell containing $u_\beta$ and $v_\beta$ remains non-sweepable. 

\begin{figure}[h!]
\begin{center}
\includegraphics[width=2in]{bothDigon.pdf}
\caption{The cells $D_u$ and $D_v$ in $\AG$ have $\ge 2$ vertices not lying on $\alpha$ or $\beta$.}
\label{fig:bothDigon}
\end{center}
\end{figure}

\item {\em Both $C_\alpha$ and $C_\beta$ are digon cells.}
In this case, we remove both $\alpha$ and $\beta$ from $\Gamma$ to obtain a new arrangement $\mathcal{A}(\Gamma')$ simpler than $\AG$. 
As in the previous case, let $x_a$ and $x_b$ be reference points in $\tilde{\gamma}$ respectively, between $a_1$ and $a_2$, and
$b_2$ and $b_1$. On removing $\alpha$ and $\beta$, there is a single cell that contains the reference points $x_a$ and $x_b$.
Let $C$ denote this cell.
We claim that $C$ is non-sweepable. This is obviously the case if $C$ contains $P$. We therefore assume that $C$ does not contain $P$. This means that in the original arrangement $\AG$, $D_u$ and $D_v$ did not contain $P$ either. Since both of these were non-sweepable cells in $\AG$, they must have at least four vertices. This along with the fact that $C_\alpha$ and $C_\beta$ are digons implies that both $D_u$ and $D_v$ had at least two vertices not lying on $\alpha$ or $\beta$. Thus, the cell $C$ created upon removing $\alpha$ and $\beta$ has at least $2+2 = 4$ vertices and is therefore non-sweepable. See Figure
~\ref{fig:bothDigon}.

\begin{figure}[ht!]
\centering
\begin{subfigure}{0.4\textwidth}
\begin{center}
 \includegraphics[scale=.6]{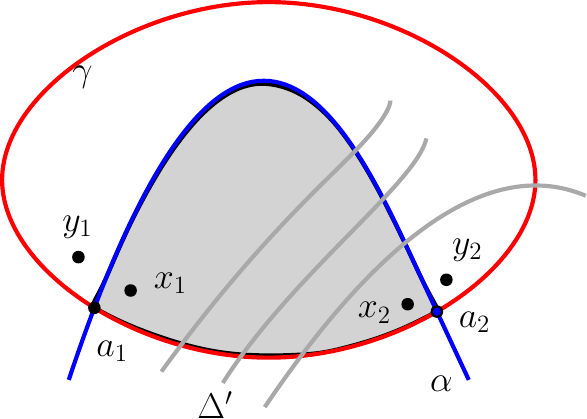}
\subcaption{\centering $\alpha$ forms a side of a triangle cell upon bypassing $L$. The grey arcs
    are the parallel arcs.}
\label{fig:triangleBeta0}
\end{center}
\end{subfigure}\hspace{1.2cm}
\begin{subfigure}{0.4\textwidth}
\begin{center}
 \includegraphics[scale=0.6]{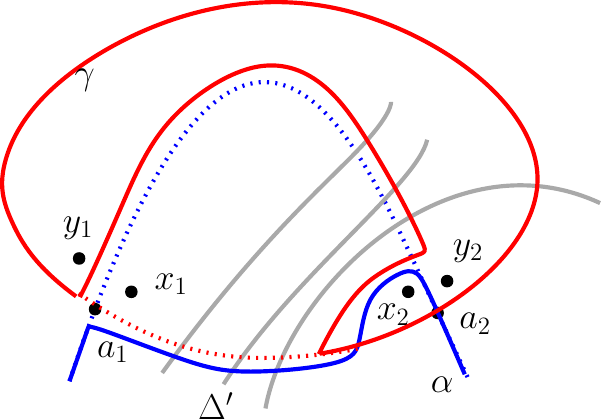}
    \subcaption{\centering Bypass the minimal triangle with base $\delta'_{\ell}$. The curves $\gamma, \alpha$ intersect outside $\delta'_{\ell}$. }
       \label{fig:triangleBeta2}
    \end{center}
 \end{subfigure}
 \caption{\centering Modifications when $C_\alpha$ is a triangle cell and $C_\beta$ is non-sweepable.}
\label{fig:minTbypassLast}
\end{figure}

    \item {\em One of $C_\alpha$ or $C_\beta$ is a triangle cell and the other is non-sweepable.} Without loss of generality, assume that $C_\alpha$ is a triangle cell and $C_\beta$ is non-sweepable. We first apply the following {\em triangle simplification} operation to $C_\alpha$. This is done as follows. Note that one of $a_1$ or $a_2$ must be a vertex of $C_\alpha$ since it is a triangle cell. Without loss of generality, we assume that it is $a_1$. Let $\delta_1$ be the curve defining the third side of $C_\alpha$ apart from $\alpha$ and $\gamma$.  We first apply Lemma~\ref{lem:parallel} with $\mu = \alpha$ and $R$ being the digon defined by $\alpha$ and $\gamma$. Let 
$\Delta_\alpha = (\delta_1, \cdots, \delta_k)$ be a maximal sequence of parallel curves in $R$ and let $\Delta'_\alpha=(\delta'_1,\ldots, \delta'_{\ell})$ denote the sequence of parallel curves obtained after the application of Lemma~\ref{lem:parallel}. By Lemma~\ref{lem:parallel}, any newly created sweepable triangle cell in the new arrangement must 
have either $a_1$ or $a_2$ as a vertex, i.e., it must be a cell containing one of the reference points $x_1, y_1, x_2$ or $y_2$ (See Figure~\ref{fig:triangleBeta0}). Note that cell containing $y_1$ is $D_{uv}$, which as argued before is not sweepable. The cell containing $x_2$ cannot be a sweepable cell since if it were, it was already a sweepable cell in $\AG$. 
The cell containing $x_1$ is a triangle cell by assumption, and it may remain a triangle cell, or become an unsweepable cell after
applying Lemma~\ref{lem:parallel}, and the cell containing
$y_2$ may have become a new triangle cell as a consequence of the modification of the parallel curves $\Delta_\alpha$
to $\Delta'_\alpha$ by Lemma~\ref{lem:parallel}.
Observe that the triangle $T$ in $R$ formed by $\alpha,\gamma$ and $\delta'_\ell$ containing $x_1$ forms a 
minimal triangle with base $\delta'_{\ell}$. 
At this point, we untangle $T$. 
By Corollary~\ref{cor:minT2int},
this yields a simpler arrangement. The cell containing $x_1$ in this new arrangement  lies outside $\tilde{\gamma}$. 
The new cell containing $y_1$ remains non-sweepable. To see this, first note that since no curve can intersect 
the segment of $\alpha$ between $u$ and $a_1$. 
Since $D_u$ is a non-sweepable cell on $\gamma$ in $\AG$, it either contains $P$ or has at least $4$ vertices. The cell $D_v$ has at least $3$ vertices since one the vertices in $v$ formed by $\alpha$ and $\beta$ and it lies outside the digon formed by $\alpha$ and $\beta$. 
Thus, the new cell $D_{uv}$ obtained upon bypassing the empty lens $L$ in $\AG$ therefore either contains $P$ or has at least $4 + 3 - 2 = 5$ vertices. The ``$-2$" is because in the process of bypassing $L$ we lose two vertices $u$ and $v$. 
Now, the cell containing $y_1$ after untangling the triangle $T$ contains the former cell $D_{uv}$ and one less side than $D_{uv}$. This means that it either contains $P$ or has at least 4 sides and is therefore non-sweepable. 
Thus the arrangement obtained after bypassing $T$ is a simpler non-sweepable arrangement than $\AG$. See Figure~\ref{fig:minTbypassLast}.

\begin{figure}[ht!]
\begin{center}
    \includegraphics[width=3in]{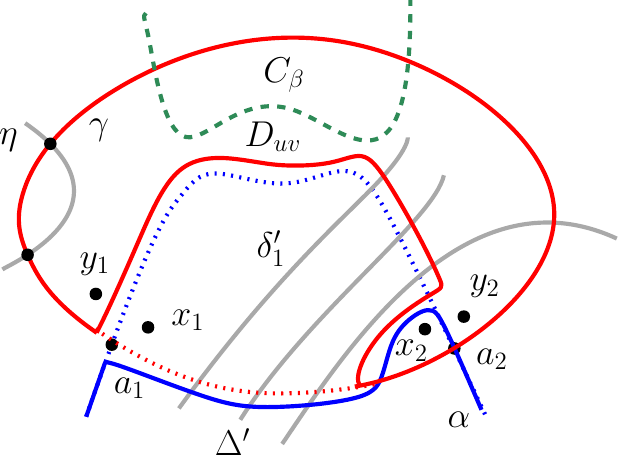}
\caption{If $C_\beta$ is a digon cell and $C_\alpha$ is a triangle cell, then on removing $C_\beta$, and applying triangle
simplification followed by triangle bypassing on $C_\alpha$ we obtain a simpler non-sweepable arrangement. In particular, the cell $D_{uv}$ has at least 4 sides
as $\gamma$ contributes at least 2 disjoint sides to $D_{uv}$.}
\label{fig:oneDigonOneTriangle}
\end{center}
\end{figure}
\item {\em One of $C_\alpha$ and $C_\beta$ is a digon cell and the other is a triangle cell.}
Without loss of generality, we assume that $C_\alpha$ is a triangle cell and $C_\beta$ is a digon cell, and that one vertex of the triangle $C_\alpha$ is $a_1$. 
Since $D_u$ is a not a bypassable cell in $\AG$, it either contains $P$ or there is a curve $\eta$ which intersects the portion of $\gamma$ between $b_1$ and $a_1$ twice. See Figure~\ref{fig:oneDigonOneTriangle} for the latter case. We now remove the curve $\beta$ and apply triangle simplification (introduced in the previous case) to $C_\alpha$. 
We 
claim that the new arrangement, which is simpler, remains non-sweepable. 
As before, let $R$ be the digon defined by $\alpha$ and $\gamma$ and let  $\Delta'_\alpha=(\delta_1', \cdots, \delta'_\ell)$ be the sequence of parallel curves obtained in $R$ as a result of the triangle simplification. From the earlier arguments, the only cell which may potentially have become bypassable is the cell containing the reference point $y_1$. 
If the cell $D_u$ in $\Gamma$ contained the point $P$, this cell also contains $P$ and is therefore non-bypassable. Otherwise, there is a curve $\eta$ as claimed before and this implies that $\gamma$ contributes two disjoint sides to the cell containing $y_1$. The cell thus has at least four sides as is therefore not bypassable.

\begin{figure}[ht!]
\centering
\begin{subfigure}{0.4\textwidth}
\begin{center}
 \includegraphics[scale=0.6]{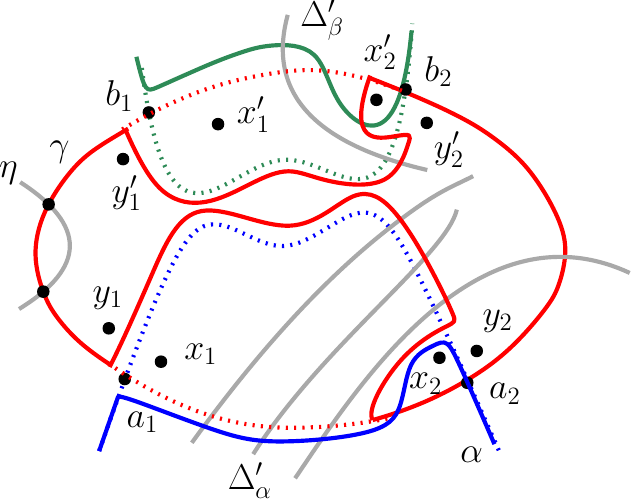}
    \subcaption{\centering Vertices $a_1$ and $b_2$ respectively contributing a vertex of the triangle cells.}
       \label{fig:bothTrianglescase1}
    \end{center}
 \end{subfigure}
 \hspace{1.2cm}
 \begin{subfigure}{0.4\textwidth}
\begin{center}
 \includegraphics[scale=.6]{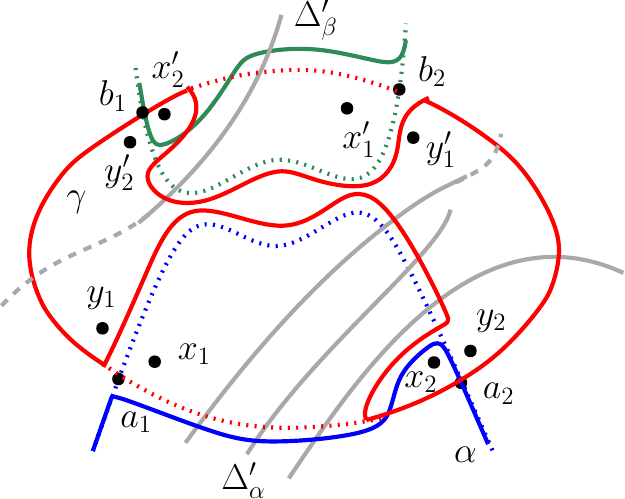}
\subcaption{\centering  Vertices $a_1$ and $b_1$ respectively contributing a vertex of the triangle cells.}
\label{fig:bothTrianglescase2}
\end{center}
\end{subfigure}
 \caption{\centering The cells $C_\alpha$ and $C_\beta$ are both triangle cells.}
 \label{fig:bothTriangles}
 \end{figure}

\item {\em Both $C_\alpha$ and $C_\beta$ are triangle cells.}
In this case, we apply triangle simplification to both triangles. Let $\Delta'_\alpha = (\delta'_1, \cdots, \delta'_\ell)$ be the sequence of parallel curves obtained as a result of applying triangle simplification to $C_\alpha$ and let $x_1, y_1, x_2, y_2$ be the reference points. Similarly let $\Delta'_\beta = (\delta''_1, \cdots, \delta''_k)$ be the sequence of parallel curves obtained as a result of bypassing $C_\beta$ and 
let $x'_1, y'_1, x'_2, y'_2$ be the reference points.
By the earlier arguments, the only cell that can potentially become sweepable are those that contain $y_1$ or $y'_1$. We claim that neither of these are sweepable.
Without loss of generality, assume that $a_1$ is vertex of $C_\alpha$. Now, either $b_1$ or $b_2$ is a vertex of $C_\beta$. We treat these two cases separately. First consider the case in which $b_1$ is a vertex of $C_\beta$. Since $D_u$ was not bypassable in $\AG$, it either contains the point $P$ or there is a curve $\eta$ that intersects the portion of $\gamma$ between $b_1$ and $a_1$ twice. The situation in $\AG$ as well as the situation after triangle simplification is applied to $C_\alpha$ and $C_\beta$ are shown in  Figure~\ref{fig:bothTrianglescase1}.
As the Figure shows, $y_1$ and $y'_1$ are contained in the same cell.
Furthermore, the cell either contains $P$ or the curve $\eta$ as mentioned above exists, in which case $\gamma$ contributes two disjoint sides to it and the cell has at least four sides. In either case, the cell is non-sweepable. 
Now consider the other case where $b_2$ is a vertex of $C_\beta$.  The situation for this case in $\AG$ as well as the situation after triangle simplification is applied to $C_\alpha$ and $C_\beta$ are shown in  Figure~\ref{fig:bothTrianglescase2}. As the figure shows, the cells containing $y_1$ and $y'_1$ are again identical. Furthermore, the cell either contains $P$ or $\gamma$ contributes two disjoint sides to it. In either case, the cell is non-sweepable.
This completes the proof. 
\end{itemize}

\hide{
We process $C_\alpha$ and $C_\beta$ independently. Let $C$ denote one of these regions. If $C$ is non-sweepable, we don't do anything.
If $C$ is a digon cell not containing $P$, we throw away the corresponding curve ($\alpha$ or $\beta$). 
If $C$ is a triangle cell not containing $P$, we do a more elaborate modification of the arrangement described below. After $C_\alpha$ and $C_\beta$ are processed, we show that the resulting arrangement is a simpler non-sweepable arrangement which yields the desired contradiction. 
{\bf Case 2a.} {\em One of the cells $C_\alpha$ or $C_\beta$ is a digon cell, 
and the other is either a digon cell or a non-sweepable cell.}
We assume without loss of generality that the cell $C_\alpha$ forms a digon cell in the modified arrangement. 
Then, the cell $D_u$ (in the original arrangement $\Gamma$) has the arc $[a_1, u]$ as one of its sides, and also has one side on $\gamma$.
Since $\Gamma$ is non-sweepable, $D_u$ either contains $P$ or has at least 4 vertices, of which at least two are not on $\alpha$. 
Similarly $D_v$ either contains $P$ or has at least 4 vertices, of which at least two are not on $\alpha$. 
Now, consider the arrangement $\Gamma'=\Gamma\,\setminus\,\{\alpha\}$. 
Then, the cell in $\Gamma'$
that contains the region corresponding to $C_{\alpha}$ either contains $P$ or 
has at least 4 vertices namely the vertices of $D_u$ and $D_v$ that are not on $\alpha$. 
Since $\Gamma'$ is simpler than $\Gamma$, it must be sweepable. 
By the argument above, it follows that $C_\beta$ is the only sweepable cell in $\Gamma'$. 
By assumption, it must be a digon cell. 
Let $\Gamma''$ denote the arrangement obtained by removing $\beta$ from $\Gamma'$. Then, the cell $C''$ in $\Gamma''$ that contains the region corresponding to $C_\beta$ either
has $\ge 4$ sides, or contains $P$. 
To see this note that if either $D_u$ or $D_v$ contain $P$ then $C''$ also contains $P$. 
If neither $D_u$ nor $D_v$ contain $P$, then both of them must have at least two vertices each that are not on either $\alpha$ or $\beta$ - these are also vertices of $C''$ implying that $C''$ has at least $4$ sides. Thus, $\Gamma''$ is a simpler non-sweepable arrangement than $\Gamma$, contradicting our assumptions.
See Figure~\ref{fig:}.
}
\hide{
Then, the arc $[b_1, u]$ is a side of the cell $D_u$ in $\Gamma$. Similarly, the arc
$[b_2,v]$ is a side of the cell $D_v$ in $\Gamma$. Since $\Gamma$ was a non-sweepable arrangement, then $D_u$ and
$D_v$ each have at least two vertices not defined by $\alpha$ or $\beta$. Then, the arrangement $\Gamma''=\Gamma'\,\setminus\,\{\beta\}$
must be non-sweepable, as the cell in $\Gamma''$ containing the region defined by $C_\beta$ has at least 4 sides. No other cells
in the arrangement have been modified. Therefore, $\Gamma''$ is a simpler non-sweepable arrangement, a contradiction.
Therefore, $C_\beta$ must be a triangle cell in $\Gamma'$ contradicting our assumption. 
}

\hide{
Then, the cell $D_u$ has the arc $[u,a_1]$ as a side, and similarly, $D_v$ has the arc $[v,a_2]$ as a side,
and both $D_u$ and $D_v$ have an arc of $\gamma$ as a side. Since $\Gamma$ is non-sweepable, both
$D_u$ has at least two vertices not shared with $D_v$, and similarly, so does $D_v$. Hence, in the
arrangement $\Gamma'=\Gamma\,\setminus\,\{\alpha\}$, the cell containing $x_\alpha$ has at least 4 sides.
Therefore, the only sweepable cell obtained as a result of bypassing $L$ is the cell $C_\beta$ defined
by $\tilde{\beta}$ and $\gamma$. If $C_\beta$ is a digon cell, then consider the arrangement $\Gamma''=\Gamma'\,\setminus\,\{\beta\}$.
By an analogous argument as before, the cells adjacent to the arcs $[b_1, u]$ and $[b_2, v]$ in $\Gamma'$ merge to form
a new cell with at least 4 sides. Hence, $\Gamma''$ is non-sweepable. Since $\Gamma''$ is simpler than $\Gamma$, this
yields a contradiction. See Figure~\ref{fig:}.

{\em Handling triangle cells $C_\alpha$ and/or $C_\beta$:} Suppose that $C_\alpha$ is a triangle cell $T$ that does not contain $P$, defined by the curves $\alpha, \gamma$ and some curve $\delta_1$. 
Let $R$ be the digon defined by $\alpha$ and $\gamma$
that contains $T$. Without loss of generality let $a_1, c_1, d_1$ denote the three vertices of $T$, where
$c_1$ and $d_1$ are the intersection points of $\delta_1$ with $\gamma$ and $\alpha$, respectively.
We first apply Lemma~\ref{lem:parallel} with $\mu = \alpha$ and $R$ being the digon. Let 
$\Delta_\alpha = (\delta_1, \cdots, \delta_k)$ be a maximal sequence of parallel curves in $R$ and let $\Delta'_\alpha=(\delta'_1,\ldots, \delta'_{\ell})$ denote the sequence of parallel curves obtained after the application of Lemma~\ref{lem:parallel}.
By Lemma~\ref{lem:parallel}, any newly created sweepable triangle cell in the new arrangement must 
have either $a_1$ or $a_2$ as a vertex, i.e., it must be a cell containing one of the reference points $x_1, y_1, x_2$ or $y_2$ (See Figure~\ref{fig:triangleBeta0}).
We claim that the cell containing the reference point $y_1$ cannot be a sweepable cell. 
To see this, first observe that the cell containing $y_1$ also contains $u$.
If $C_\beta$ is not a digon cell, then the curve $\beta$ is not removed, and therefore, the cell containing $y_1$ is
$D_u$ which, as argued before, is not a sweepable cell. If on the other hand, $C_\beta$ was a digon cell (and we throw away $\beta$), then note that  
the curve $\delta'_1$ cannot intersect $\beta$, and hence there is a curve $\eta \notin \{\alpha, \beta, \delta'_1\}$
that contributes a side to $D_u$. Furthermore, since the segment of $\beta$ between $b_1$ and $u$ and the segment
of $\alpha$ between $u$ and $a_1$ are not intersected by any other curve, it follows that any such curve $\eta$ intersects $\gamma$ twice between $b_1$ and $a_1$. 
This in turn implies that $\gamma$ contributes two sides to $D_u$
and after $\beta$ is discarded, the cell containing $y_1$ has at least $5$ sides with at least one side contributed by each of the curves in $\{\alpha, \eta, \delta_1\}$ and two sides contributed by $\gamma$. The cell is therefore not sweepable. 

The cell containing
the reference point $x_2$ cannot be a sweepable cell, as this would imply that the cell in $\Gamma$ containing
$x_2$ was already sweepable. Therefore, only the cells containing reference points $x_1$ or $y_2$ may be 
sweepable triangle cells. The cell containing $x_1$ is a triangle cell by assumption, and the cell containing
$y_2$ may have become a new triangle cell as a consequence of the modification of the parallel curves $\Delta_\alpha$
to $\Delta'_\alpha$ by Lemma~\ref{lem:parallel}.

Observe that $\alpha,
\gamma$ and $\delta'_{\ell}$ form a minimal triangle with base $\delta'_{\ell}$. We bypass this triangle. 
By Corollary~\ref{cor:minT2int},
this yields a simpler arrangement. The cell containing $x_1$
now lies outside $\tilde{\gamma}$. The new cell containing $y_1$ remains non-sweepable since it becomes a cell with at least four sides. See Figure~\ref{fig:triangleBeta2}. 

The cell containing $x_2$ before the triangle bypass was not sweepable cell 

Thus, we obtain a simpler non-sweepable arrangement. 
If $C_\beta$ is a triangle cell, we treat it similarly.  
Thus, in all cases, we obtain a simpler non-sweepable arrangement, contradicting the assumption that $\Gamma$ was
a simplest non-sweepable arrangement.
Hence, we can apply one of the sweep operations, and by 
Corollary~\ref{cor:two}, each sweep operation preserves the two-intersecting
property of the curves.
}

\end{proof}

\section{Applications}
\label{sec:applications}

In this section, we describe several applications of Theorem~\ref{thm:mainthm}. We omit the proofs of some of the theorems in this section since they only require replacing the result of Snoeyink and Hershberger (Theorem~\ref{thm:snoeyink}) with our result (Theorem~\ref{thm:mainthm}). 
\hide{
\smallskip
\noindent
{\bf Levi extension theorem:}
Levi extension theorem~\cite{} is a basic result in Discrete geometry that says that given a finite arrangement $\mathcal{L}$ of pseudolines and
two points $p,q$ not on the same pseudoline, there is a bi-infinite curve $\ell_{pq}$ that goes throughout the points $p$ and $q$, and such
that $\mathcal{L}\cup\ell_{pq}$ is an arrangement of pseudolines. Snoeyink and Hershberger~\cite{SnoeyinkH90} extended
this result to show that if we are given an arrangement of two-intersecting curves $\Gamma$ and three points $p,q$ and $r$ not lying on the
same curve, we can add a closed Jordan curve $\gamma_{pqr}$ such that $\Gamma\cup\gamma_{pqr}$ is an arrangement of two-intersecting curves.
The authors also showed that there exist examples of two-intersecting curves and four points so that no curve can be drawn that goes through
these four points and such that the resulting arrangement remains two-intersecting. 
Using Theorem~\ref{thm:mainthm}, we can prove a Levi extension theorem for non-piercing regions. 

\begin{theorem}
\label{thm:levi}
Given a set $\Gamma$ of non-piercing curves in the plane, and 3 points $p,q$ and $r$ not lying on the same curve, there is a closed Jordan
curve $\gamma_{pqr}$ that lies on the points $p,q$ and $r$, and such that $\Gamma\cup\gamma_{pqr}$ is an arrangement of non-piercing curves.
Further, there are examples of non-piercing curves and 4 points so that no curve can be drawn through the 4 points so that the arrangement remains
non-piercing.
\end{theorem}
\begin{proof}[Sketch]
The proof follows that arguments in~\cite{SnoeyinkH90}. The proof of Snoeyink and Hershberger relies on first sweeping so that
the sweep curve contains two of the three points. Then, they dulipcate the sweep curve and perturb it slightly so that the third point is in a
wedge defined by the sweep curve and its duplicate. Then, they claim that the sweep can always make progress on either wedge. This part of the
argument is quite involved in~\cite{SnoeyinkH90} as they have to maintain the two-intersecting property. However, since we only
want to maintain non-piercing, the arguments are simpler.
\end{proof}
}

\smallskip\noindent
{\bf Every arrangement of non-piercing regions contains a ``small'' non-piercing region:}
Pinchasi~\cite{DBLP:journals/siamdm/Pinchasi14} proved that an arrangement of pseudodisks $\Gamma$ contains a pseudodisk $\gamma$
s.t. $\gamma$ is intersected by at most 156 disjoint pseudodisks in $\Gamma\,\setminus\,\{\gamma\}$. 
Using Theorem~\ref{thm:mainthm},
Pinchasi's result extends immediately to non-piercing regions.

\begin{theorem}
\label{thm:small}
Let $\AG$ be an arrangement of non-piercing curves $\Gamma$ in the plane. Then, there is a curve $\gamma\in\Gamma$ s.t.
the number of curves in $\Gamma$ whose regions are disjoint and intersect $\tilde{\gamma}$ is at most 156.
\end{theorem}

\hide{
A classic algorithm to compute an independent set in an arrangement $\mathcal{D}$ of disks in the plane is to greedily select a smallest disk and remove the disks intersecting it.
For any arrangement of disks, it is easy to show that there is a disk s.t. there are at most 5 disjoint disks intersecting it. In particular, the disk with smallest
radius in $\mathcal{D}$ has this property. This implies that the greedy algorithm described yields a $5$-approximation for the Maximum Independent Set problem 
in the intersection graph of disks in the plane. From Theorem~\ref{thm:small}, it follows by the same argument that a greedy algorithm yields a 156-approximation for 
the Maximum Independent Set problem in the intersection graph of non-piercing regions in the plane.}

\smallskip
\noindent
{\bf Multi-Hitting set with non-piercing regions:}
Raman and Ray~\cite{DBLP:journals/dcg/RamanR22} 
studied the \emph{Multi-Hitting Set} and \emph{Multi-Set Cover} problems for set systems defined by a set of points and an arrangement of non-piercing regions in the plane. In the Multi-Hitting Set problem, the input is a set $P$ of points and a set $\mathcal{D}$ of non-piercing regions, and a demand
function $d:\mathcal{D}\to\mathbb{N}$. A feasible solution is a set $P'\subseteq P$ such that $|D\cap P'|\ge d(D)$ for each $D\in\mathcal{D}$.
In the Multi-Set Cover problem, the input is again a set $P$ of points and set $\mathcal{D}$ of non-piercing regions. The demand function 
in this setting is on the points. Thus $d:P\to\mathbb{N}$ is the demand function. A feasible solution is a set $\mathcal{D}'\subseteq\mathcal{D}$
such that $|\{D\in\mathcal{D}': p\in D\}|\ge d(p)$  for each $p\in P$. The optimization problem in both cases is to find a feasible solution
of smallest cardinality. 

Raman and Ray~\cite{DBLP:journals/dcg/RamanR22} showed that there is a PTAS for the Multi-Set Cover problem if the maximum demand on the points is bounded by a
constant. On the other hand, if the demands are arbitrary, they showed a $(2+\epsilon)$-
approximation by a combination of linear programming rounding and local search. In particular, they showed that the \emph{quasi-uniform}
sampling technique~\cite{DBLP:conf/soda/ChanGKS12,DBLP:conf/stoc/Varadarajan10} can be used to obtain a fractional solution with bounded depth without increasing the LP value, and then 
we can switch to a local search algorithm. 
The analysis of the local search algorithm requires the construction of a suitable graph. For the Multi-Hitting Set version of the
problems, the authors used the sweeping theorem of Snoeyink and Hershberger~\cite{SnoeyinkH90}, and hence their results only applied to
the Multi-Hitting Set problem with points and pseudodisks. However, with Theorem~\ref{thm:mainthm} at hand, we can extend their result
to obtain matching results for the Multi-Hitting Set problem of points and non-piercing regions, as with the Multi-Set Cover problem
with points and non-piercing regions.

\begin{theorem}
\label{thm:multihitting}
Given an instance $(P,\mathcal{S})$ where $P$ is a set of points in the plane and $\mathcal{S}$ is an arrangement of non-piercing regions
in the plane with demands $d:\mathcal{S}\to\mathbb{N}$, there is a PTAS for the Multi-Hitting set problem, i.e., selecting the smallest
size subset $Q\subseteq P$ s.t. $\forall S\in\mathcal{S}$, $Q\cap S\ge d(S)$ when the demands are 
bounded above by a constant, and otherwise there is a $(2+\epsilon)$-approximation algorithm.
\end{theorem}

\smallskip\noindent 
{\bf Number of hyperedges defined by lines and non-piercing regions:}
Keller et al.,~\cite{DBLP:journals/combinatorics/KellerKP22} considered the hypergraph $(\mathcal{L},\mathcal{D})$, where $\mathcal{L}$ is a set of lines and $\mathcal{D}$ is 
set of pseudodisks in the plane. Each $D \in \mathcal{D}$
defines a hyperedge consisting of the lines in $\mathcal{L}$ that intersect $D$. The authors prove that
the number of hyperedges in such a hypergraph is $O_t(|\mathcal{L}|^3)$. 
By using Theorem~\ref{thm:mainthm} in place of the result of Snoeyink and Hershberger (Theorem~\ref{thm:snoeyink}), and following the proof in~\cite{DBLP:journals/combinatorics/KellerKP22}, we obtain the same bound for the hypergraph defined by lines and non-piercing regions.

\begin{theorem}
\label{thm:keller}
Let $(\mathcal{L},\mathcal{D})$ be a hypergraph defined by a set $\mathcal{L}$ of lines and a set $\mathcal{D}$ of non-piercing regions in the plane, where each $D \in \mathcal{D}$ is defines the hyperedge consisting of the set of lines  in $\mathcal{L}$ that intersect $D$. Then, the number of hyperedges of size $t$ in this hypergraph is $O_t(|\mathcal{L}|^2)$,
and the total number of hyperedges is $O(|\mathcal{L}|^3)$. Both bounds are tight already when $\mathcal{D}$ is a set of pseudodisks.
\end{theorem}

\smallskip
\noindent
{\bf Construction of Supports:}
Raman and Ray~\cite{DBLP:journals/dcg/RamanR20} proved the following result. Using Theorem~\ref{thm:mainthm} and Lemma~\ref{lem:digonbypass}, we obtain a significantly simpler proof. 
\begin{restatable}[Support for non-piercing regions]{theorem}{support}
\label{thm:support}
Let $\mathcal{H}$ be an arrangement of non-piercing regions and let $P$
be any set of points in the plane. 
Then, there exists a planar support for $\mathcal{H}$ i.e., a planar graph $G=(P,E)$ s.t. for any $H \in \mathcal{H}$ the subgraph of $G$ induced by $H \cap P$ is connected.
\end{restatable}

\begin{proof}
We first consider the special case where each region contains exactly two points. 
In this case, it suffices to set $E = \{(p, q) : \exists \text{ a region } H \in \mathcal{H} \text{ containing } p \text{ and } q\}$.  To show that this graph is planar, we first obtain a drawing of the graph as follows. Each vertex $p \in P$ is represented by the point $p$ itself and for each edge $(p, q)$ where $p$ and $q$ belong to a region $R$, we construct a curve joining $p$ and $q$ whose interior lies in the interior of $R$. 
By Lemma 1 of~\cite{buzaglo2013topological}, the curves corresponding to every pair of non-adjacent edges (i.e., edges not sharing a vertex) intersect an even number of times. Thus, by the strong Hanani-Tutte theorem~\cite{schaefer2013hanani}, the graph is planar. 

Next, we reduce the general case to the special case above.
We apply sweeping to each region $H\in\mathcal{H}$ and add an additional set of regions $\mathcal{K}$, each
of which contains at most two points of $P$, and such that $\mathcal{H}\cup\mathcal{K}$ is non-piercing.
Initially, $\mathcal{K}$ consists of a set of disjoint regions, each of which is a small region containing
a single point of $P$ so that $\mathcal{H}\cup\mathcal{K}$ is an arrangement of non-piercing regions.
For a region $H\in\mathcal{H}$, let $\mathcal{K}_H$ be the set of regions of $\mathcal{K}$ contained in $H$.

Choose an $H\in\mathcal{H}$ such that the union of regions $\mathcal{K}_H$, i.e., regions of $\mathcal{K}$ contained
in $H$ form more than one path connected component. 
Make a copy $H'$ of $H$ and sweep $H'$ in the interior of $H$ until there
is a component of $\mathcal{K}_H$ that shares exactly one point, say $p$ with $H'$. At this point, observe that $H'$
still intersects more than one connected component of $\mathcal{K}_H$. Now, we sweep $H'$ to $p$, and stop
when there are at most two points in $H'$: $p$ and a point $q$ that lies in a different component of $\mathcal{H}_K$,
and applying any other sweep operation will make $q$ lie outside $H'$. Let $K=H'$ be the new region added to 
$\mathcal{K}$, and remove from $\mathcal{K}$, 
any region of $\mathcal{K}$ containing only one point that lies in $K$. By Theorem~\ref{thm:mainthm}, the
regions $\mathcal{H}\cup\mathcal{K}$ are non-piercing.
We repeat this process until the union of regions in $\mathcal{K}_H$
form a path connected region for every $H \in \mathcal{H}$.
At the end of this process, we have a set of regions $\mathcal{K}$,
each of which contains exactly two points, and a support for $\mathcal{K}$ is a support for $\mathcal{H}$.  This is the special case we considered earlier.  Figure~\ref{fig:sweepSupport} shows an example.
\hide{Next, we simplify the arrangement of $\mathcal{K}$ as follows: We say that a digon in the arrangement
of $\mathcal{K}$ is empty if it does not contain a point of $P$. As long as there is a minimal empty
digon in the arrangement of $\mathcal{K}$ we can simplify the arrangement by applying digon untangling. By Lemma~\ref{lem:digonbypass},
the arrangement of the regions in $\mathcal{K}$ remain non-piercing.
At the end of this operation, the only regions of $\mathcal{K}$ that intersect are those that share a point of $P$.
For each point $p\in P$, and $K\in\mathcal{K}$ such that $p\in K$, we connect $p$ to the other point
of $P$ in $K$ via a continuous curve lying in $K$. We can do this in such a way that the curves from
$p$ are internally disjoint. This yields the desired planar support. Figure~\ref{fig:sweepSupport} shows the 
construction of a support.}
\end{proof}
\begin{figure}[!h]
\begin{center}
\includegraphics[width=3in]{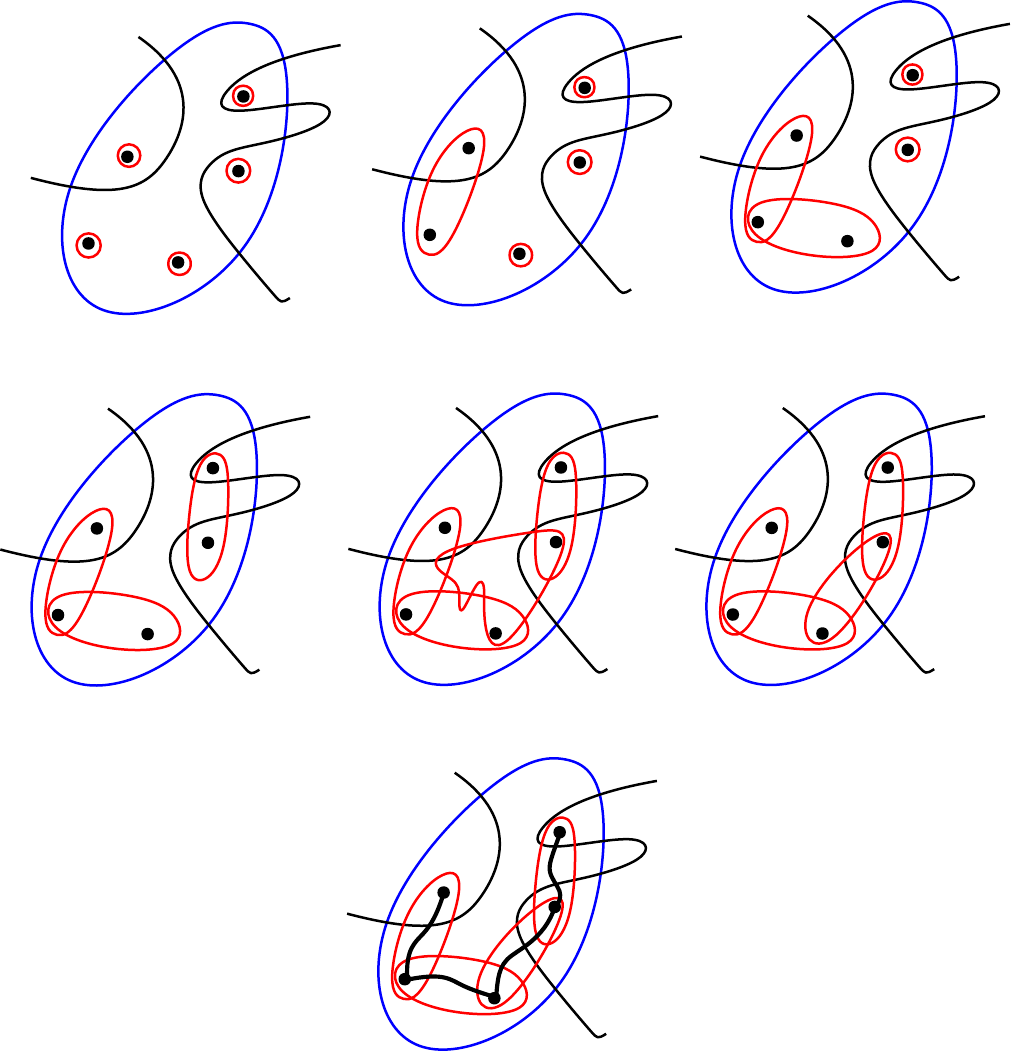}
\caption{The figure above shows the region $H$ in blue, and the regions $\mathcal{K}$ in red obtained by sweeping a copy of $H$
until it contains exactly two points from distinct connected components of $\mathcal{K}$. 
The last figure shows the curves between the points in each region in $K$ so that the resulting graph
is a plane graph.}
\label{fig:sweepSupport}
\end{center}
\end{figure}
\hide{
\begin{proof}[Sketch]
\rajiv{This proof needs a fix. I'll do it soon.}
We start with an empty graph on $\mathcal{K}$. Choose an $H\in\mathcal{H}$ s.t. the induced graph on the elements in $\mathcal{K}_H$ is 
not connected. Choose an arbitrary point $p\in H$ and contract $H$ to $p$, stopping just before the time when $H$ intersects regions
of $\mathcal{K}_H$ belonging to a single connected component. Thus, when we stop contracting $H$, the boundary of $H$ intersects two
regions $K$ and $K'$ in $\mathcal{K}_H$ belonging to exactly two connected components. 
Let $C_1$ and $C_2$ be the two components. Thus $H$ intersects exactly one region in one of $C_1$ and $C_2$ and at least one region
in the other. Without loss of generality, assume that $H$ intersects exactly one region in $C_1$ and more than one region in $C_2$.
Then, pick a point $q$ in a region of $\mathcal{K}$ in $C_1$, and continuously shrink $H$ to $q$ until it intersects exactly one
region in $C_2$. At this point, the shrunk region of $H$ intersects exactly two regions of $\mathcal{K}_H$ - each in a separate 
connected component. We add an edge between the two regions intersected.
We stop at this point and add the contracted
region of $H$ to $\mathcal{H}$. Note that the number of regions in $\mathcal{H}$ increases by one, and the shrunk region of $H$ that
we added to $\mathcal{H}$ intersects exactly two regions of $\mathcal{K}$. We can continue this process until all regions are connected.
At this point, we have a set of regions, each of which intersect exactly two regions of $\mathcal{K}$. These constitute the edges
of the graph. It is easy to check that the resulting graph is planar.
\end{proof}
}\section{Conclusion}
\label{sec:conclusion}
We gave an alternate proof of the result of Snoeyink and Hershberger~\cite{SnoeyinkH90}, in the process developing conceptual tools for modifying arrangements of regions. We also extended their result to non-piercing regions. For both pseudocircles and for non-piercing regions, implementing the sweep in an algorithmically efficient manner remains an open problem.

Snoeyink and Hershberger showed that their result does not extend to $k$-intersecting arrangements of (open or closed) curves for $k \geq 3$. A natural question is whether it can be extended for $k$-intersecting non-piercing arrangements of curves (also called $k$-admissible in the literature~\cite{MR10}).  Unfortunately, this is not possible either -  Figure~\ref{fig:k_admissible} shows a counter-example with $5$ curves which are $4$-admissible but any process of shrinking the sweep curve $\gamma$ results in $\gamma$ intersecting one of the curves in $6$ points in between. We believe however, that if we are given a set of $k$ intersecting non-piercing curves, it is possible to sweep with one of the curves so that the arrangement
remains at most $2k$ intersecting. We leave this as an open question.

Another basic result in the paper of Snoeyink and Hershberger is an extension of the Levi extension theorem. In particular, the authors show that for any set of two-intersecting curves
and 2 points not on the same curve, we can draw another curve through the two points so that the arrangement remains two-intersecting.
This theorem also extends to non-piercing regions. In particular, given a set of non-piercing curves and 3 points, 
we can add a new curve that passes through the 3 points and the arrangement remains non-piercing. 
This result will appear in a companion paper.
\begin{figure}[ht!]
\begin{center}
  \includegraphics[scale=.4]{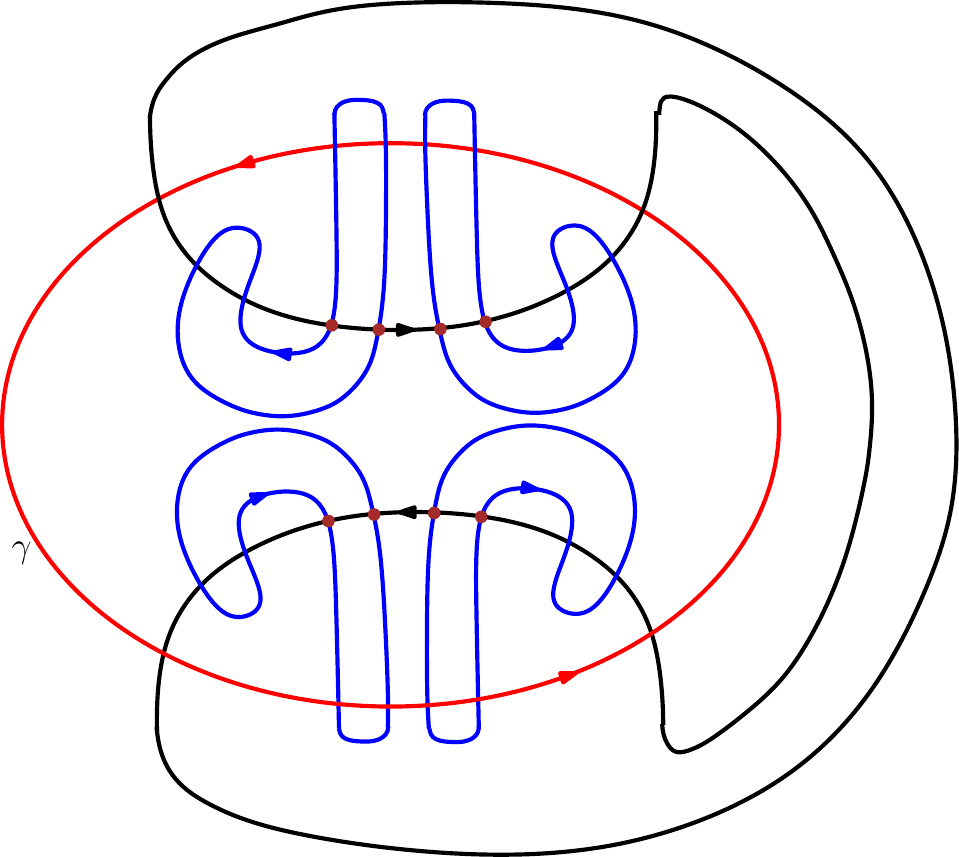}
\end{center}
 \caption{Arrangement of $4$-intersecting non-piercing curves ($4$-admissible curves) that can not be swept with the curve $\gamma$ maintaining $4$-admissibility.}
 \label{fig:k_admissible}
 \end{figure}

\hide{
\section{Lower Bounds}

\label{sec:lowerbounds}
In this section, we show lower bounds on the sweeping theorem. Snoeyink and Hershberger~\cite{SnoeyinkH90}
showed an example of 3-intersecting curves where the sweeping curve has to intersect one of the regions at least 4 times. 
The authors asked if given a set of $k$-intersecting curves, whether there is a function $f(k)$ such that we can ensure
that the sweeping curve does not intersect any other curve more than $f(k)$ times. While we don't answer this question,
we show that even for \emph{$k$-admissible regions}, we cannot maintain $k$-admissibility.

Let $\Gamma$ be a set of non-piercing curves such that any pair of curves intersect at most $k$ times, for an even number $k$.
We call such curves $k$-admissible curves. Regions defined by $k$-admissible curves are called $k$-admissible regions, and
have been studied in the literature~\cite{}.

We give an example 
of $4$-admissible curves s.t. a sweep must intersect some curve at least $k+4$ times while sweeping. 

\begin{theorem}
\label{thm:kadmissible}
There is an arrangement $\Gamma$ of $4$-admissible curves s.t. a sweep curve $\gamma\in\Gamma$ must intersect some curve in $\Gamma\,\setminus\,\{\gamma\}$
at least 6 times while sweeping.
\end{theorem}
\begin{proof}

\end{proof}

While sweeping using the operations defined in Section~\ref{sec:sweeping}, only the operation of \emph{Bypassing a visible vertex}
increases the number of intersections. Therefore, we might hope that the algorithm that applies this operation only when the
other operations are not available will control the number of pairwise intersections if we start with a $k$-admissible arrangement
of curves for $k\ge 4$. However, the following example shows that even in this case, if we are not careful, the sweep can intersect
one of the curves arbitrarily many times.

{\bf $k$-intersecting arrangement.} An arrangement of curves $\Gamma$ is non-piercing if for any pair $\alpha,\beta\in\Gamma$, $\tilde{\alpha}\,\setminus\,\tilde{\beta}$
leaves only one connected component. A way to generalize this result is to consider $k$-intersecting arrangements. An arrangement of curves $\Gamma$ is
$k$-intersecting if for any pair of curves $\alpha,\beta\in\Gamma$, $\tilde{\alpha}\,\setminus\,\tilde{\beta}$ leaves at most $k+1$ connected components.
We show that if we are given a $k$-intersecting arrangement of curves, there exist examples where the sweep must intersect one of the other curves
at least $k+2$ times. 
\begin{theorem}
\label{thm:kpiercing}
There exist arrangements of $4$-piercing regions s.t. a sweep must pierce some curve at least $6$ times.
\end{theorem}
}
\bibliography{ref}

\hide{
\appendix
\section{Application}

\support*
\begin{proof}
We start with an empty graph $G$ on $P$.
For any region $H \in \mathcal{H}$ that contains at least two points from $P$, we add one or more region to the arrangement $\mathcal{H}$ while keeping the arrangement non-piercing as  follows. At any point in time let $G_H$ denote the graph induced by $G$ on the points in $H$.
We repeatedly find a shrunk copy $H'$ of $H$ which contains exactly two points of $p$ belonging to different components of $G_H$. We add $H'$ to $\mathcal{H}$ and 
add an edge between the two points it contains. We do this until $G_H$ is connected. Our next claim shows that such a shrunk copy $H'$ exists so that the $\mathcal{H} \cup \{H'\}$ remains non-piercing.

\begin{claim}
We can find a region $H'$ such that 
i) $\mathcal{H} \cup \{H'\}$ is a non-piercing arrangement
and ii) $H$ intersects exactly two regions in $\mathcal{K}$ that belong to distinct components of $G_H$.
\end{claim}
\begin{proof}
We create a copy of $H$ and call it $H'$. We will shrink $H'$ to satisfy the constrains in the claim. 
We first shrink $H'$ continuously (using Theorem~\ref{thm:mainthm}) while it still contains points from at least two of the components of $G_H$ stopping right before any further shrinking would cause $H'$ to contain points from only one of the components. At this point $H'$
contains only one point $p$ from one of the components. It also contains one or more points from another component. We now fix the point $p$ and continue shrinking $H'$ using Theorem~\ref{thm:mainthm} so that $H'$ continues to contain $p$ and we stop when $H'$ contains only one point from the other component. At this point, we have the desired $H'$.  
\end{proof}
This finishes our construction of the graph $G$ and by construction, it is a support. We now show that it is planar.  
Each edge $e$ added to the graph corresponds to a region $H_e$ that contains the two end points of $e$. We can now draw the edge $e$ in the plane using any curve joining its two end-points using a curve that lies in $H_e$. Now, consider two edges $e$ and $e'$. Since their corresponding regions $H_e$ and $H_{e'}$ are non-piercing, it follows that the drawings of $e$ and $e'$ intersect an even number of times. This shows (by the Hanani-Tutte theorem~\cite{fulek2016unified}) that the graph $G$ is planar.
\end{proof}

\section{Sweeping}

\ops*

\two*
\begin{proof}
The only operation that increases the number of intersections is the operation of taking a new loop. Since we apply the operation of taking a new loop
only on a curve that does not intersect $\gamma$, it follows that the arrangement remains two-intersecting.
\end{proof}

\section{Basic Operations}
\minlenses*

\begin{proof}
Since $L$ is a minimal digon, the intersection of any other curve $\delta \in \Gamma \,\setminus\, \{\alpha, \beta\}$ is a collection of disjoint segments with one vertex on $\alpha$ and the other on $\beta$.
Let $s$ be one of these segments with end point $x$ on $\alpha$ and $x'$ on $\beta$. Then, as a result of bypassing, the earlier intersection of $\alpha$ and $\delta$ at $x$ is replaced by the intersection of the modified $\alpha$ and $\delta$ at $x'$. However, since $\alpha$ and $\delta$ do not have any intersections in the interior of the segment $s$, the order of the intersections of $\alpha$ and $\delta$ along either curve remains unchanged i.e., they remain reverse-cyclic. Hence, by Lemma~\ref{lem:revcyclic}, the modified $\alpha$ and $\delta$ are non-piercing. Analogously the modified $\beta$ and $\delta$ are non-piercing. This also shows that the number of intersection between $\alpha$ and $\delta$ does not change. Similarly the number of intersections between $\beta$ and $\delta$ does not change. 
The curves $\alpha$ and $\beta$ lose two consecutive points of intersections (namely the vertices of $L$) and therefore the order of the remaining intersections along them remain reverse-cyclic. Thus, the modified $\alpha$ and the modified $\beta$ are also non-piercing.
\end{proof}

 \lenstriangle*
\begin{proof}
The number of sides in the cells that are identical to an old cell does not change. One of their sides defined by $\alpha$ may have been replaced by $\beta$ or vice versa but this does not change whether the cell lies on $\gamma$. Thus such cells cannot be digon cells or triangle cells on $\gamma$ unless they were already so in the original arrangement (before bypassing $L$). 
\end{proof}

\trianglebypass*
\begin{proof}
Let $u$ be the intersection of $\alpha$ and $\beta$ on the boundary of $T$. Similarly, let $w$ be the intersection of $\alpha$ and  $\gamma$ on the boundary of $T$. 

By the definition of the operation of bypassing $T$ (shown in Figure~\ref{fig:mintrianglebypass}), the intersection $u$ between $\alpha$ and $\beta$ that lies in $\tilde{\gamma}$ is replaced by a new intersection $u'$ that lies outside $\tilde{\gamma}$. The following arguments also show that the number of intersections between any other pair of curves within $\tilde{\gamma}$ does not change. Thus, the total number of intersections within $\tilde{\gamma}$ decreases by $1$ when $T$ is bypassed.

We now argue that all pairs of curves remain non-piercing and the number of pairwise intersections remains unchanged except for the pair $\alpha, \beta$.
Let $\delta$ be any curve in $\Gamma\,\setminus\,\{\alpha\}$. 
We show that the modified $\alpha$ and $\delta$ are non-piercing
and when $\delta \neq \beta$ the number of intersection $\alpha$ and $\delta$ within $\gamma$ remains the same. An analogous statement holds for the modified $\beta$ and $\delta$ for any $\delta \in \Gamma \,\setminus\, \{\beta\}$.

\smallskip\noindent
{\em Case 1 :} $\delta=\gamma$. The modified $\alpha$ and $\gamma$ intersect at a new point $v$ instead of at $w$ (see Figure~\ref{fig:mintrianglebypass}). Since $\alpha$ and $\gamma$ don't intersect at any point between $v$ and $w$ on the boundary of $T$, their reverse-cyclic sequences remain the same and therefore they remain non-piercing.

\smallskip\noindent
{\em Case 2:} $\delta = \beta$. The modified $\alpha$ and modified $\beta$ intersect at $u'$ instead of $u$. Since $\alpha$ and $\beta$ did not have any intersections other than $u$ on the boundary of $T$, this does not change their reverse-cyclic sequences. Thus, they remain non-piercing. 

\smallskip\noindent
{\em Case 3:} $\delta \in \Gamma \,\setminus\, \{\alpha, \beta, \gamma\}$. If $\delta$ does not intersect $T$, the intersection points between $\alpha$ and $\delta$ does not change and therefore they remain non-piercing. Let us suppose therefore that $\delta$ intersects $T$. Its intersection with $T$ then consists of a disjoint non-intersecting set of arcs with one end-point each on the two sides of $T$ defined by $\alpha$ and $\beta$. Bypassing $T$ moves the intersection between $\delta$ and $\alpha$ from one end-point to the other on each of the arcs. However since the arcs are non-intersecting, this does not affect their reverse-cyclic sequences. They thus remain non-piercing.
\end{proof}

\triangleaftertriangle*
\begin{proof}
    The only cells in the modified arrangement that were not already present in the original arrangement are the ones that contain the points $v, v_\alpha$ and $v_\beta$. The cell containing $v$ cannot be a digon cell or a triangle cell on $\gamma$ since it has at least four sides namely $\alpha, \beta, \gamma$ and an additional curve that crosses the triangle $T$. 
\end{proof}

\section{two-intersecting curves}

\onceintersect*
\begin{proof}
Each curve $\alpha\in\Gamma$ induces an interval
$I_{\alpha} = [i,j]$ on $\gamma$, where $i,j$ are the two intersection points of $\alpha$ and $\gamma$. 
The containment order on the intervals induces a partial order $\prec$ on the curves in $\Gamma\,\setminus\,\{\gamma\}$ ($\alpha\prec\beta\Leftrightarrow I_{\alpha}\subseteq I_{\beta}$).
Let $\alpha$ be a minimal curve with respect to $\prec$.
If the digon $D$ defined by $\alpha$ and $\gamma$ is a digon cell, then we are done. 
Otherwise, let $i$ and $j$ be the vertices of $D$. Let $\beta$ be the
first curve intersecting $\alpha$ when following the arc of $\alpha$ on the boundary of $D$ from $i$ to $j$.
Since each curve in $\Gamma\,\setminus\,\{\gamma\}$ intersects $\gamma$ twice, and pairwise intersect at most once in $\tilde{\gamma}$, $\beta$ intersects the arc of $\gamma$ on $D$.
Thus, $\alpha,\beta$ and $\gamma$ form a half-triangle $T$ with edge $\alpha$. We now claim that $T$ contains a triangle cell, and this completes the proof (this claim, though not explicitly stated is also proved as part of Lemma 3.1 of~\cite{SnoeyinkH90}). Note that by minimality of $\alpha$ w.r.t. $\prec$, every curve in $\Gamma \,\setminus\, \{\gamma\}$ intersects the interior of the side of $T$ on $\gamma$ in at most one point.

\begin{claim}
\label{lem:halftriangle}
If curves $\alpha$ and $\beta$ form a half-triangle $T$ on $\gamma$ with edge $\alpha$, so that each curve in $\Gamma$ has at most one intersection with the side of $T$ on $\gamma$,
then there is a triangle cell $T'$ on $\gamma$ in $T$.
\end{claim}
\begin{proof}
We prove by induction on the number of curves intersecting the side of $T$ on $\gamma$. If no curve intersects this side,
since the curves are 1-intersecting in $\tilde{\gamma}$, it follows that $T$ is a triangle-cell.

Assume that the statement holds for any half-triangle with less than $k$ curves (where $k \geq 1$) intersecting the side of $T$ on $\gamma$.
Suppose now that there are $k$ curves intersecting the side of $T$ on $\gamma$.
Let $x$ denote the intersection point of $\beta$ and $\gamma$ on the boundary of $T$.
Let $v$ denote the intersection point of $\alpha$ and $\beta$ on the boundary of $T$. 
Walking from $x$ to $v$ along $\beta$, 
let $\delta$ be the first curve
intersecting $\beta$. 
Since every curve has two intersection points on $\gamma$ and at most one intersection point on the side of $T$ on $\gamma$, and $k\ge 1$, there must be such a curve $\delta$.
Since the curves are $1$-intersecting in $\tilde{\gamma}$, and $\alpha$ is an edge of the half-triangle $T$, 
this implies that $\delta$ intersects the side of $T$ on $\gamma$.
Now, $\alpha$ and $\delta$ form a half-triangle $T''$ with edge $\beta$ on $\gamma$ and s.t. 
less than $k$ curves intersect the side of $T''$ on $\gamma$. Since every curve in $\Gamma$ has at most one intersection point on the side of $T$ on $\gamma$, the same holds for $T''$. Hence,
by the inductive hypothesis, there is a triangle cell $T'$ on $\gamma$ that lies in $T''$. Since $T''$ lies in $T$, $T'$
is the claimed triangle cell in $T$. 
\end{proof}
\end{proof}

\section{Sweeping non-piercing regions}
\sweeptop*
\begin{proof}
We shrink $\tilde{\gamma}$ using Theorem~\ref{thm:mainthm} until the point $p$ lies on the boundary of $\gamma$. At this point, we use a stereographic to map the plane to the surface of a sphere. We then rotate the sphere so that the point $p$ is at the north pole. We project back to the plane. This sends the point $p$ to infinity. $\tilde{\gamma}$ is now unbounded but still simply connected. 
We continue shrinking $\tilde{\gamma}$ using Theorem~\ref{thm:mainthm}. Since $p$ 
is now a point at infinity, it stays on $\gamma$. Once $\tilde{\gamma}$ free of all intersections among the curves in $\Gamma$, we can apply an inversion of the plane so that $\gamma$ is again a closed curve with $p$ is on its boundary. After this we can continously shrink $\tilde{\gamma}$ to $p$. Note that here we are using Theorem~\ref{thm:mainthm} in a setting where $\gamma$ is a bi-infinite curve. Our statement of the Theorem is only for closed curves but essentially the same proof works when $\gamma$ is a bi-infinite curve.
\end{proof}
}
\end{document}